\begin{document}

\newcommand{\red}{\color{red}}
\newcommand{\black}{\color{black}}
\newcommand{\blue}{\color{blue}}
\newcommand{\green}{\color{green}}

\newtheorem{theorem}{Theorem}
\newtheorem{definition}{Definition}
\newtheorem{corollary}[theorem]{Corollary}
\newtheorem{lemma}[theorem]{Lemma}
\newtheorem{proposition}[theorem]{Proposition}
\newtheorem{conjecture}{Conjecture}

\newcommand{\ShannonSpace}[1]{\Gamma_{#1}}
\newcommand{\EntropySpace}[1]{\Gamma^*_{#1}}
\newcommand{\EntropyCone} [1]{\bar{\Gamma}^*_{#1}}
\newcommand{\NormalizedCone} [1]{\hat{\Gamma}^*_{#1}}
\newcommand{\NormalizedReducedCone} [1]{\hat{\mathcal{C}_{#1}}}
\newcommand{\ZhangYeungOuterBound}{\Gamma_4^{\mathrm{(1)}}}
\newcommand{\DoFrZeOuterBound}{\Gamma_4^{\mathrm{(2)}}}

\newcommand{\given}{$|$}
\newcommand{\semi}{,}
\newcommand{\Z}{\mathbf{Z}}
\newcommand{\HH}{h}
\newcommand{\Comment}[1]{& [\mbox{from  #1}]}
\newcommand{\R}{\mathbf{R}}
\newcommand{\Vamos}{V\'{a}mos}
\newcommand{\alphabet}{\mathcal{A}}

\title{Non-Shannon Information Inequalities in Four Random Variables}

\author{\IEEEauthorblockN{Randall Dougherty}
\IEEEauthorblockA{
Center for Communications Research\\
4320 Westerra Court\\
San Diego, CA 92121-1969\\
Email: rdough@ccrwest.org}\\
\IEEEauthorblockN{Chris Freiling}
\IEEEauthorblockA{
Department of Mathematics\\
California State University\\
5500 University Parkway\\
San Bernardino, CA 92407-2397\\
Email: cfreilin@csusb.edu}\\
\IEEEauthorblockN{Kenneth Zeger}
\IEEEauthorblockA{
Dept. of Electrical and Computer Eng.\\
University of California, San Diego\\
La Jolla, CA 92093-0407\\
Email: zeger@ucsd.edu}
 }

\maketitle

\begin{abstract}
Any unconstrained information inequality in three or fewer random variables can be written
as a linear combination of instances of Shannon's inequality $I(A;B|C) \ge 0$.
Such inequalities are sometimes
referred to as ``Shannon" inequalities.
In 1998, Zhang and Yeung gave the first example of
a ``non-Shannon" information inequality in four variables.  Their technique was to add
two auxiliary variables with special properties and then apply Shannon inequalities to
the enlarged list.  Here we will show that the Zhang-Yeung inequality can actually be derived from
just one auxiliary variable.  Then we use their same basic technique of adding auxiliary variables
to give many other non-Shannon inequalities in four variables.  Our list includes the inequalities found
by Xu, Wang, and Sun, but it is by no means exhaustive.
Furthermore, some of the inequalities obtained may be superseded by stronger inequalities that have
yet to be found.  Indeed, we show that the Zhang-Yeung inequality is one of those that is superseded.
We also present several infinite
families of inequalities.  This list includes some, but not all of the infinite families found by Matus.
Then we will give a description of what additional information these inequalities tell us about entropy space.
This will include a conjecture on the maximum possible failure of Ingleton's inequality.
Finally, we will present an application of non-Shannon inequalities to network coding.  We will demonstrate
how these inequalities are useful in finding bounds on the information that can flow through a particular network
called the Vamos network.
\end{abstract}

\footnotetext[1]{Key words:  Entropy, Information Inequalities.}
\footnotetext[2]{Mathematical Reviews: 26A12.}

\date{}
\maketitle

\section{Introduction}
\label{sec:information-inequalities}

For collections $A$, $B$, and $C$ of jointly related discrete random variables,
denote the \textit{entropy} of $A$ by
$$H(A):=\sum_{a\in A}-p(a)\log_2(p(a)), $$
where $0\log_2(0):=0$.
The \textit{conditional entropy} of $A$ given $B$ is defined by
\begin{eqnarray}
H(A|B)&:=&H(AB)-H(A) \label{condEntropy}\\
&=&\sum_{(a,b)\in AB}-p((a,b))\log_2(p((a,b))) \nonumber\\
&-& \sum_{a\in A}-p(a)\log_2(p(a)),\nonumber
\end{eqnarray}
the \textit{mutual information} between random variables $A$ and $B$ by
$$I(A;B):= H(A)+H(B)-H(AB),$$
and the \textit{conditional mutual information} between random variables
$A$ and $B$ given $C$ by
$$I(A;B|C):=H(AC)+H(BC)-H(C)-H(ABC).$$

The basic inequalities
$H(A)\ge 0$, $H(A|B)\ge 0$, and $I(A;B) \ge 0$
were originally proved in 1948 by Shannon \cite{Shannon-1948} 
and can all be obtained as special cases
(e.g. see \cite{Yeung-book}) 
of the inequality
\begin{align}
I(A;B|C) &\ge 0.                \label{eq:55}
\end{align}
For example, letting $C$ be a trivial random variable with just one element, we obtain
\begin{eqnarray*}
I(A;B|C) &=& H(AC)+H(BC)-H(C)-H(ABC)\\
&=&H(A)+H(B)-0-H(AB)\\
&=&I(A;B)
\end{eqnarray*}
yielding that $I(A;B)\geq 0$. Similarly, $H(A)=I(A;A)\geq 0$ and $H(A|B)=I(A;A|B)\geq 0$.

By combining together instances of these basic inequalities, other valid inequalities can be formed.
For example,
\begin{equation}\label{in:1}
I(A;B|C)+2I(A;C)\geq 0
\end{equation}
holds for any random variables $A$, $B$, $C$, since each of the terms separately is at least zero.
Applying the definitions,
this can be rewritten as
\begin{eqnarray}\label{in:2}
0 &\leq& H(AC)+H(BC)-H(ABC)-H(C)\\
\nonumber &&+2H(A)+2H(C)-2H(AC).
\end{eqnarray}
Canceling and rearranging the terms gives us
\begin{eqnarray}\label{in:3}
0 &\leq&H(BC)-H(ABC)+ H(C)\\
\nonumber &&+2H(A)-H(AC).
\end{eqnarray}
If we now permute the variables $A\to B$, $B\to C$, $C\to A$ then this becomes
\begin{eqnarray}\label{in:4}
0 &\leq&H(AC)-H(ABC)+ H(A)\\
\nonumber &&+2H(B)-H(AB).
\end{eqnarray}
Throughout our discussion, we will identify inequalities that can be derived from
each other using definitions, basic algebraic manipulation, and rearrangement of the variables.
Thus, the inequalities (\ref{in:1}), (\ref{in:2}), (\ref{in:3}), (\ref{in:4}) are really
just different forms of the same inequality.

Since all conditional entropies and all conditional mutual informations
can be written as linear combinations of joint entropies, we give the following
definition.
\begin{definition}
Let $n$ be a positive integer,
and let $S_1, \ldots, S_k$ be subsets of $\{1, \ldots, n\}$.
Let $\alpha_i\in\R$ for $1\le i \le k$.
An inequality of the form
$$
\alpha_1 H( \{A_i: i\in S_1\}) + \cdots + \alpha_k H( \{A_i: i\in S_k\}) \ge 0
$$
is called an \textit{information inequality}
if it holds for all jointly distributed random variables $A_1, \ldots  A_n$.
\label{def:1}
\end{definition}
The textbook \cite{Yeung-book} 
refers to information inequalities as
``the laws of information theory''.
As an example,
taking
$p=2$,
$S_1 = \{1\}$,
$S_2 = \{2\}$,
$S_3 = \emptyset$,
$S_4 = \{1,2\}$,
$\alpha_1 = \alpha_2 = 1$,
and $\alpha_4 = -1$,
one obtains
$H(A_1) + H(A_2) - H(A_1,A_2) \ge 0$,
which is an information inequality since it is always true
(this can be more succinctly expressed as $I(A_1;A_2) \ge 0$).

Information inequalities that can be derived
by adding special cases of Shannon's original
inequality will be given a special designation bearing his name.

\begin{definition}
A \textit{Shannon information inequality}
is any information inequality of the form
\begin{align}
\sum_i \alpha_i I(A_i;B_i|C_i) \ge 0
\label{eq:13.5}
\end{align}
where each $\alpha_i\geq 0$.
\end{definition}

Any information inequality that cannot be
expressed in the form \eqref{eq:13.5}
will be called a \textit{non-Shannon information inequality}.
It is known \cite[p.~308]{Yeung-book} 
that all information inequalities containing three or fewer random variables are Shannon inequalities.

These were the only known types of information inequalities until
Zhang and Yeung in 1998 published a ``non-Shannon" information inequality
\cite{Zhang-Yeung-July98} \cite[Theorem 14.7 on p.310]{Yeung-book}.
\begin{theorem}[Zhang-Yeung Theorem]
The following is a $4$-variable non-Shannon information inequality:
\begin{align*}
2I(C;D) &\le I(A;B){+}I(A;C,D){+}3I(C;D|A){+}I(C;D|B).
\end{align*}
\label{thm:Zhang-Yeung}
\end{theorem}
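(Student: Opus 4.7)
\textit{Proof proposal.} The plan is to follow the auxiliary-variable strategy of Zhang and Yeung, but to use only a single auxiliary variable, in keeping with the strengthening advertised in the abstract. Concretely, given any joint distribution on $(A,B,C,D)$, I would extend the probability space by adjoining a new random variable $A'$ whose conditional distribution given $(A,B,C,D)$ depends only on $(C,D)$ and matches the conditional distribution of $A$ given $(C,D)$. This ``copy'' construction (sometimes called the copy or protection lemma) guarantees two structural properties: first, $(A',C,D)$ has exactly the same joint distribution as $(A,C,D)$, so that every entropy of a subset of $\{A',C,D\}$ equals the corresponding entropy with $A$ in place of $A'$; second, $A'$ is conditionally independent of $(A,B)$ given $(C,D)$, so that $I(A';A,B\mid C,D)=0$.

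From these two properties I would extract the working identities $H(A')=H(A)$, $H(A',C,D)=H(A,C,D)$, $I(C;D\mid A')=I(C;D\mid A)$, and $H(A',A,B,C,D)=H(A,B,C,D)+H(A',C,D)-H(A,C,D) = H(A,B,C,D)$, together with any consequences obtained by marginalization. These relations translate statements about the enlarged system $\{A,A',B,C,D\}$ back into statements about the original four variables.

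The heart of the argument would then be to exhibit a nonnegative linear combination of Shannon inequalities on $\{A,A',B,C,D\}$ whose right-hand side, after the above substitutions, collapses to
\begin{equation*}
I(A;B)+I(A;C,D)+3\,I(C;D\mid A)+I(C;D\mid B)-2\,I(C;D).
\end{equation*}
A natural pool of building blocks to combine is $I(A';B\mid C,D)\ge 0$, $I(C;D\mid A')\ge 0$, $I(A;A'\mid C,D)\ge 0$, $I(A';B)\ge 0$, and $I(A;B\mid A',C,D)\ge 0$, since each of these either carries a $3\,I(C;D\mid A)$-type contribution or links $A'$ to $B$ in a way that exposes an $I(A;B)$ term once $A'$ is identified with $A$ on the marginal $(A',C,D)$.

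The main obstacle will be finding the precise nonnegative coefficients that make all spurious terms cancel after substitution, since there is no canonical recipe and the original two-variable proof does not specialize directly to one variable. I expect this to be tractable by treating the coefficients as unknowns and solving a small linear system obtained by matching, for each of the $2^5$ entropy monomials on the enlarged ground set, the coefficient implied by the target inequality against that produced by the candidate Shannon sum; feasibility of that system with nonnegative coefficients then certifies the derivation.
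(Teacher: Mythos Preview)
Your approach is essentially the same as the paper's: after relabeling variables (the paper swaps $A\leftrightarrow C$ and $B\leftrightarrow D$ to rewrite the inequality), the paper's single auxiliary variable---``$R$ a $D$-copy of $C$ over $(A,B)$''---is exactly your $A'$, a $B$-copy of $A$ over $(C,D)$, and the paper then exhibits explicitly the nonnegative Shannon combination on five variables that you propose to find by solving a linear system.

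Two points to fix. First, one of your ``working identities'' is wrong: conditional independence $I(A';A,B\mid C,D)=0$ gives
\[
H(A',A,B,C,D)=H(A,B,C,D)+H(A',C,D)-H(C,D)=H(A,B,C,D)+H(A,C,D)-H(C,D),
\]
not $H(A,B,C,D)$; your formula subtracted $H(A,C,D)$ where it should have been $H(C,D)$. This does not break the strategy, but it will matter when you set up the linear system. Second, the theorem asserts the inequality is \emph{non-Shannon}, i.e.\ not itself a nonnegative combination of instances of $I(\cdot;\cdot\mid\cdot)\ge 0$ on four variables; your proposal does not address this half at all. The paper handles it by a separate (machine-verified, via ITIP) linear-programming infeasibility check.
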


There are two parts to Theorem~\ref{thm:Zhang-Yeung}.  The first part claims that the inequality is
valid, and the other part claims that it is non-Shannon.
To prove this inequality is valid, Zhang and Yeung added two auxiliary variables
and then applied Shannon inequalities to the enlarged list of variables.  This
technique of adding auxiliary variables will be encapsulated by the
Copy Lemma ~\ref{copyLemma} given in Section~\ref{sec:Copy} below.   We will then give a new proof of the first part of the
Zhang-Yeung theorem in Section~\ref{ZY}. Unlike the original proof,
this proof requires only one auxiliary variable.  The second part of Theorem~\ref{thm:Zhang-Yeung}
can be proved using the Information Theory Inequality Prover (ITIP)(see \cite{ITIP}),
which is a MATLAB \cite{MATLAB} program for verifying testing whether an inequality is a Shannon inequality.
It was written and made freely available by Yeung and Yan.

Since the seminal work \cite{Zhang-Yeung-July98}, many other
non-Shannon iformation inequalities have been found.  See for example,
Ln\v{e}ni\v{c}ka \cite{Lnenicka03}.
Makarychev, Makarychev, Romashchenko, and Vereshchagin \cite{Makarychev-Makarychev-Romashchenko-Vereshchagin02},
Zhang \cite{Zhang03},
Zhang and Yeung \cite{Zhang-Yeung-Nov97},
Dougherty, Freiling, and Zeger \cite{SixNew},
and
Matus \cite{Matus}.

In this paper we will give the results of a systematic search for
additional four-variable non-Shannon inequalities using the same basic technique of Zhang and Yeung, as
given in the Copy Lemma.
In Section~\ref{sec:search} with a will present the general methodology
that was used for this search.  The depth of the search is measured by how many auxiliary variables were used
and how many instances of the Copy Lemma were used.  Ignoring inequalities that can be derived from others
by a permutation of the variables, it turns out that the Zhang-Yeung inequality is the only
one that can be derived from just one auxiliary variable.

In Section~\ref{TwoCopies} we exhaustively search
for all inequalities that can be derived from just two auxiliary variables.
The result of this search (after weeding out redundant inequalities that can be
deduced from the others) is the list of six two-copy-variable inequalities
that appear in ~\cite{SixNew}.

As the list of inequalities grows, it turns out that some of the earlier inequalities are no longer needed.
To see if an inequality is still necessary, we
temporarily remove it from the list and perform a linear program to see if the inequality in question can fail when all of the
others are satisfied.  If not, then it has been ``superseded" by the others and is trimmed from the list permanently.
The Zhang-Yeung inequality is an example of one that has been superseded.
Sharing the fate of the Zhang-Yeung inequality,
each of the six inequalities from Section~\ref{TwoCopies}
has also been superseded.

Next, we exhaustively searched for all
inequalities that can be derived using three auxiliary variables.  These are split into two sections.
In Section~\ref{ThreeCopies} we present those that only require two instances of the Copy Lemma, and in Section~\ref{ThreeCopiesB}
we give those that require three instances of the Copy Lemma.  In both of these sections, we list only the inequalities that have
not yet been superseded.  We also give a computer generated proof of each one.  To save space, the first few proofs
are given in detail while the latter proofs are abbreviated.

At this point, we were reaching the limit of what could be feasibly done with the resources available.  To speed things up,
the use of the ITIP program was eventually replaced with a faster C-program.
We began searching for inequalities using four variables and only three instances of the Copy
Lemma.  This search eventually finished; it took an estimated 50-100 CPU-years to complete.
We were not able to complete the entire exhaust over four auxiliary variables.
We estimate that this search would take about 70 times longer.  Our final list of inequalities
appears in Section~\ref{FourCopies}.  To save space, rather than giving the proof of each one, we provide the sequence of copy steps
used in the proof.

The first two infinite families of non-Shannon inequalities were found by Matus \cite{Matus}.
Each of these families is given in list form, indexed by the positive integers.  The first list was used
by Matus to prove the fundamental fact that no finite collection of linear inequalities will ever be able
to describe the entropy space completely.
In Section \ref{infinite} we also look at entire classes of inequalities.  These will be presented as rules, which
allow us to automatically generate new inequalities from old ones.  These rules can also be iterated leading to
uncountable collections of information inequalities.
These rules can also be used to generate many individual lists of inequalities.  As an example, we will show how to derive
the first list of Matus from these rules.  Matus' second list, however, was not uncovered by this process.

Additional information inequalities and a third infinite list have been discovered by Xu-Wang-Sun~\cite{XuWangSun}.
In Section~\ref{XWS} we will show  how to derive stronger versions of their inequalities
from the list in Section~\ref{ThreeCopiesB} and from the inequality rules of Section~\ref{infinite}.

In Section~\ref{structure} we will summarize what we have learned about the structure of entropy space.
We will give some volume computations showing that, in fact, these inequalities do not
seem to be closing the gap between the space satisfying the Shannon inequalities and the
space of known entropic vectors.  We present a certain probability distribution with just four atoms, that
we believe gives the maximum possible failure of
Ingleton's inequality.  We conclude that some new techniques will likely be necessary
in order to settle this ``four-atom" conjecture.

The Zhang-Yeung inequality has recently been applied to network coding
to demonstrate that Shannon information inequalities are in general insufficient
for computing the coding capacity of a network
(see \cite{Dougherty-Freiling-Zeger06-Matroids} and \cite{ChanGrant}).
In Section~\ref{Vamos} we will apply our list of inequalities to improve the information
theoretic upper bound for the coding capacity of a simple network called the Vamos network.


\section{Copy Lemmas}\label{sec:Copy}

There is only one known basic technique for coming up with new information inequalities.
\begin{enumerate}
\item Start with a set of arbitrary random variables.
\item Add auxiliary random variables
with special properties.
\item Apply known information inequalities to the enlarged set of random variables.
\end{enumerate}
In this section, we will present the methods for obtaining the auxiliary variables.
These are encapsulated in the following lemma, which is essentially due to Zhang and Yeung.


\begin{lemma}[Copy Lemma]\label{copyLemma}
Let $A$, $B$, $C$, $D$ be jointly distributed random variables.  Then there is another random variable
$R$, jointly distributed with $A,B,C,D$  with the following properties.

\begin{description}
\item [C1.] The marginal distributions of $(A,B,C)$ and $(A,B,R)$ are the same with $R$ replacing $C$.
\item [C2.] $I(CD;R|AB)=0$
\end{description}
In this case we say that $R$ is a $D$-copy of $C$ over $(A,B)$.
\end{lemma}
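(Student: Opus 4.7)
The plan is to construct $R$ explicitly by specifying its joint distribution with $(A,B,C,D)$. Intuitively, $R$ should be a ``fresh copy'' of $C$ whose only information about $(C,D)$ passes through $(A,B)$: one samples $R$ from the conditional distribution of $C$ given $(A,B)$, independently of everything else.

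First I would define the joint probability mass function
\[
p_{ABCDR}(a,b,c,d,r) \;:=\; p_{ABCD}(a,b,c,d)\,p_{C\mid AB}(r\mid a,b),
\]
where $p_{C\mid AB}(\cdot\mid a,b)$ is the conditional distribution of $C$ given $(A,B)$ under the original law (defined arbitrarily on the null set $\{(a,b):p_{AB}(a,b)=0\}$). I would then check that this is indeed a valid probability mass function and that summing over $r$ recovers $p_{ABCD}$, so the original joint law of $(A,B,C,D)$ is preserved.

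To verify C1, I would sum out $c$ and $d$ to obtain
\[
p_{ABR}(a,b,r)=\sum_{c,d}p_{ABCD}(a,b,c,d)\,p_{C\mid AB}(r\mid a,b)=p_{AB}(a,b)\,p_{C\mid AB}(r\mid a,b)=p_{ABC}(a,b,r),
\]
so that $(A,B,R)$ has the same distribution as $(A,B,C)$. For C2, I would divide the defining equation by $p_{AB}(a,b)$ wherever the latter is positive to get
\[
p_{CDR\mid AB}(c,d,r\mid a,b)=p_{CD\mid AB}(c,d\mid a,b)\,p_{C\mid AB}(r\mid a,b);
\]
combined with the marginalization from C1, which identifies $p_{R\mid AB}=p_{C\mid AB}$, this factorizes as $p_{CD\mid AB}\,p_{R\mid AB}$, which is exactly the conditional independence $(C,D)\perp R\mid(A,B)$, equivalently $I(CD;R\mid AB)=0$.

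The construction is essentially mechanical, so there is no real obstacle. The only point that requires any care is the handling of the null set $\{p_{AB}(a,b)=0\}$, which is resolved by defining $p_{C\mid AB}(\cdot\mid a,b)$ arbitrarily there; the particular choice is immaterial since such pairs $(a,b)$ carry no probability mass and affect neither conclusion.
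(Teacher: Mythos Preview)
Your proposal is correct and is essentially the same as the paper's proof: both define the joint law by $p_{ABCDR}(a,b,c,d,r)=p_{ABCD}(a,b,c,d)\,p_{C\mid AB}(r\mid a,b)$ and then verify C1 and C2 by marginalization. Your verification of C2 via the factorization $p_{CDR\mid AB}=p_{CD\mid AB}\,p_{R\mid AB}$ is a bit cleaner than the paper's direct entropy computation of $H(ABCDR)$, and you also handle the null set $\{p_{AB}=0\}$ explicitly where the paper's quotient formula leaves this implicit.
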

\begin{proof}
Let $A$, $B$, $C$, $D$, denote the alphabets of the random variables $A$, $B$, $C$, $D$ resp.
Let $a,b,c,d$ denote arbitrary elements of $A,B,C,D$, resp.  with probability $p(a,b,c,d)$.
Let $R$ be a new random variable and let $r$ denote an arbitrary element of its alphabet, which is $C$.
Define the joint probability distribution of $A,B,C,D,R$ by
$$
p'(a,b,c,d,r) = \frac{p(a,b,c,d)\sum_d p(a,b,r,d)}{\sum_{c,d}p(a,b,c,d)}.
$$
It is clear that these are nonnegative.
Summing over $r$ we get
\begin{eqnarray*}
\sum_r  p'(a,b,c,d,r) &=& \frac{p(a,b,c,d)\sum_{r,d} p(a,b,r,d)}{\sum_{c,d}p(a,b,c,d)}\\
&=& p(a,b,c,d)
\end{eqnarray*}
so that $p'$ is an extension of $p$, which also implies that the sum of all of the probabilities $p'$ is 1.
Similarly, the marginal distribution of $(A,B,R)$ is given by
\begin{eqnarray*}
\sum_{c,d}  p'(a,b,c,d,r) &=& \frac{\sum_{c,d} p(a,b,c,d)\sum_{d} p(a,b,r,d)}{\sum_{c,d}p(a,b,c,d)}\\
&=& \sum_{d} p(a,b,r,d)
\end{eqnarray*}
while the marginal distribution of $(A,B,C)$ is given by $\sum_{d} p(a,b,c,d)$, demonstrating (C1).

If we write (C2) in terms of entropies, we get
$H(ABCD)+H(ABR)-H(AB)-H(ABCDR)=0$.  But $H(A,B,R)=H(A,B,C)$ by (C1), so it remains to show that
$H(ABCDR)=H(ABCD)+H(ABC)-H(AB)$.  We compute $H(ABCDR)$ as
\begin{eqnarray*}
&=& \sum_{a,b,c,d,r} -p'(a,b,c,d,r)\log_2(p'(a,b,c,d,r))\\
&=&  \sum_{a,b,c,d,r} -p'(a,b,c,d,r)\log_2(p(a,b,c,d))\\
&+&  \sum_{a,b,c,d,r} -p'(a,b,c,d,r)\log_2(\sum_d p(a,b,r,d))\\
&-&  \sum_{a,b,c,d,r} -p'(a,b,c,d,r)\log_2(\sum_{c,d} p(a,b,c,d)).\\
\end{eqnarray*}
But
\begin{eqnarray*}
&& \sum_{a,b,c,d,r} -p'(a,b,c,d,r)\log_2(p(a,b,c,d))\\
&=&  \sum_{a,b,c,d} -p(a,b,c,d)\log_2(p(a,b,c,d))\\
&=& H(ABCD)
\end{eqnarray*}
\begin{eqnarray*}
&&\sum_{a,b,c,d,r} -p'(a,b,c,d,r)\log_2(\sum_d p(a,b,r,d))\\
&=&  \sum_{a,b,r} \sum_{c,d}-p'(a,b,c,d,r)\log_2(\sum_d p(a,b,r,d))\\
&=&  \sum_{a,b,r} \sum_d -p(a,b,r,d)\log_2(\sum_d p(a,b,r,d))\\
&=& H(ABC)
\end{eqnarray*}
\begin{eqnarray*}
&&  \sum_{a,b,c,d,r}-p'(a,b,c,d,r)\log_2(\sum_{c,d} p(a,b,c,d))\\
&=&  \sum_{a,b,} \sum_{c,d,r}-p'(a,b,c,d,r)\log_2(\sum_{c,d} p(a,b,c,d))\\
&-&  \sum_{a,b,} \sum_{d,r}-p(a,b,r,d)\log_2(\sum_{c,d} p(a,b,c,d))\\
&=& H(AB)
\end{eqnarray*}
Therefore, $H(ABCDR)=H(ABCD)+H(ABC)-H(AB)$ as desired.
\end{proof}


\section{The Zhang-Yeung Inequality}\label{ZY}

Zhang and Yeung were the first to discover non-Shannon inequalities.  Their original version of the inequality
took the following form.
$$
2I(C;D) \leq I(A;B)+I(A;CD)+3I(C;D|A) + I(C;D|B).
$$

Their proof used two copy lemmas.  Here we give
a proof that is not necessarily shorter, but it uses only one copy.
We will show below that the Zhang-Yeung inequality is the only inequality that
results from one instance of the copy lemma.

We will write down the proof twice.  The first proof makes it easier for the reader to
see at a glance, how the inequality is achieved by applying Shannon inequalities after adding an auxiliary variable.
However, it also takes longer to check it for accuracy since it requires tedious expansions of conditional informations
into entropies.  The second proof is easier to check for accuracy, but harder to get the general idea of the proof.
Future proofs will be given only the first way.

We also write the inequality
in a different, but equivalent form.  This is so it will match the pattern of the other inequalities
that we have found.  The equivalence of the two forms can be seen by swapping $C \leftrightarrow A$, $B \leftrightarrow  D$
and rewriting in terms of entropies.

\begin{theorem}[Zhang-Yeung Inequality]\label{th:ZY2}
Let $A$, $B$, $C$, $D$ be random variables.  Then
\begin{eqnarray*}
I(A;B) &\leq& 2I(A;B|C) + I(A;C|B) + I(B;C|A)\\
&+&I(A;B|D)+I(C;D)
\end{eqnarray*}

\end{theorem}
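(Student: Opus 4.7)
The plan is to deploy the Copy Lemma exactly once and then combine instances of Shannon's inequality applied to the enlarged list $(A,B,C,D,E)$. Concretely, I would invoke Lemma~\ref{copyLemma} (with the roles appropriately relabeled) to introduce a single auxiliary variable $E$ that is, for instance, a $D$-copy of $C$ over $(A,B)$. Condition~C1 then furnishes the marginal equalities $H(E)=H(C)$, $H(AE)=H(AC)$, $H(BE)=H(BC)$, and $H(ABE)=H(ABC)$, while condition~C2 supplies the joint-entropy identity
$$H(ABCDE)=H(ABCD)+H(ABC)-H(AB),$$
or equivalently the conditional independences $I(C;E\mid AB)=0$ and $I(D;E\mid ABC)=0$.

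Next I would exhibit an explicit nonnegative linear combination of Shannon inequalities $I(X;Y\mid Z)\ge 0$, with $X,Y,Z\subseteq\{A,B,C,D,E\}$, whose sum---after the identities above are used to eliminate every entropy term involving $E$---collapses to
$$I(A;B)\le 2I(A;B\mid C)+I(A;C\mid B)+I(B;C\mid A)+I(A;B\mid D)+I(C;D).$$
Natural candidates to include in the sum are terms such as $I(A;B\mid DE)$, $I(C;D\mid E)$ (which via C1 relates to $I(C;D)$ and to $H(CD)$), and a mixed coupling term like $I(A;E\mid BD)$ or $I(B;E\mid AD)$ that links the copied variable $E$ to $D$ through the original variables; the precise combination is what must be discovered. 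In the spirit of the authors' two-step exposition, I would first display the proof as a list of Shannon inequalities whose sum, line by line, yields the target after invoking C1 and C2, and then present it a second time fully expanded into joint entropies as a mechanical check.

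The main obstacle is discovering the correct linear combination. Once it is in hand, verification reduces to checking that a finite vector of entropy coefficients---one per nonempty subset of $\{A,B,C,D,E\}$---vanishes on every $E$-containing subset and matches the target coefficients elsewhere, which is routine but tedious. Finding the combination itself will likely proceed either by editing the original Zhang-Yeung two-copy derivation so that one of the two copy steps can be absorbed directly into Shannon inequalities on the five-variable system, or by an ITIP-style search over five-variable Shannon inequalities with C1 and C2 imposed as equality constraints; since we already know from \cite{Zhang-Yeung-July98} that the target inequality is valid on the cone of entropic vectors, such a combination is guaranteed to exist as soon as a one-copy derivation is possible, so the search is finite.
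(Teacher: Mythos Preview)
Your proposal is correct and mirrors the paper's approach almost exactly: the paper's Proof~(A) introduces a single auxiliary variable $R$ as a $D$-copy of $C$ over $(A,B)$, then exhibits an explicit sum of nine Shannon terms (such as $I(C;R|A)$, $I(C;D|R)$, $I(AB;R|CD)$, $I(A;B|RD)$, $I(A;B|CR)$, etc.) that, together with the C1/C2 identities, collapses to the target inequality; Proof~(B) then replays the same derivation as a chain of entropy manipulations. The only thing missing from your plan is the concrete list of Shannon terms, which the paper supplies directly rather than discovering via an ITIP search.
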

\begin{proof}(A)
By expanding mutual informations into entropies and canceling terms,
one can verify the following $6$-variable identity:
\begin{align*}
      & I(A;B)  & \\
       &\ \ + I(C;R|A)   &    \text{(S)} \\
       &\ \ + I(C;D|R)     & \text{(S)}  \\
       &\ \ + I(AB;R|CD)      &   \text{(S)}  \\
       &\ \ + I(D;R|B)       &    \text{(S)}  \\
       &\ \ + I(A;B|RD)     &  \text{(S)}  \\
       &\ \ + I(D;R|A)     &    \text{(S)}  \\
       &\ \ + I(R;C|B)     & \text{(S)}  \\
      &\ \  + I(A;B|CR)     &  \text{(S)}  \\
       &\ \  + I(C;R|ABD)     &  \text{(S)}  \\
   &= 2I(A;B|C)+I(A;C|B)+I(B;C|A)   \\
   &\ \ +I(A;B|D)+I(C;D)\\
       &\ \ + 2I(CD;R|AB)      &    \text{(C2)}  \\
      &\ \ + I(A;B|R)-I(A;B|C)      &  \text{(C1)}  \\
       &\ \ + I(A;R|B)-I(A;C|B)      &   \text{(C1)}  \\
       &\ \ + I(B;R|A)-I(B;C|A)     &  \text{(C1)}
\end{align*}
Each of the terms marked (S)
is a conditional mutual information and is therefore nonnegative.
Thus, if the terms marked (S) are erased and the ``$=$'' is replaced by ``$\le$'',
then we obtain a $5$-variable Shannon-type inequality.
By the Copy Lemma we may choose
$R$ to be a $D$-copy of $C$ over $AB$.
Then, the term marked (C2) is zero by condition (C2),
and each of the terms marked (C1) equals zero by condition (C1).
\end{proof}

In order to read the second proof, it will be convenient to review some basic properties of
entropies.  Each of these can be verified by rewriting in terms of entropies, and applying Shannon's
inequality \eqref{eq:55}, if necessary.
\begin{eqnarray}
I(A;B) &=& H(A)-H(A|B)\label{first}\\
I(A;B|C)&=& H(A|C)-H(A|BC)\label{second}\\
I(A;BC|D) &=& I(A;C|BD)+ I(A;B|D)\label{third}\\
H(A|BC) &\leq& H(A|B) \leq H(AC|B)\label{fourth}\\
I(A;B|CD) &\leq& I(A;B|C)\leq I(AD;B|C)\label{fifth}
\end{eqnarray}
It will also be convenient to note that property (C2) can be rewritten as
\begin{equation}\label{C2A}
H(R|AB)=H(R|ABCD).
\end{equation}
By combining this with \eqref{third} and \eqref{fifth} it follows that
\begin{equation}
I(A;C|B)=I(AR;C|B)\label{C2B}
\end{equation}
and
\begin{equation}\label{C2C}
I(B;R|A)=I(BD;R|A).
\end{equation}

\begin{proof}(\em B)
Let $R$ be a $D$-copy of $C$ over $(A,B)$.  Then:

\begin{tabbing}
\emph{(I)}:\\
\hspace{.2in} \= $2I(A;B|C)$  \hspace{2in}   \=  \\
=           \> $I(A;B|C){+}I(A;B|R)$      \> \ \emph{(C1)} \\
=           \> $H(A|C){-}H(A|BC){+}I(A;B|R)$    \> \ \eqref{second}           \\
=           \> $H(A|C) {-}H(A|B) {+}I(A;C|B) {+}I(A;B|R)$   \> \ \eqref{second}   \\
=           \> $H(A|C) {-} H(A|B) {+}I(AR;C|B) {+}I(A;B|R)$     \> \ \eqref{C2B} \\
$\geq$      \> $H(A|C) {-}H(A|B) {+}I(A;C|BR) {+}I(A;B|R)$ \> \ \eqref{fifth}       \\
=           \> $H(A|C) {-}H(A|B) {+}I(A;BC|R)$    \> \ \eqref{third}            \\
$\geq$      \> $H(A|C) {-}H(A|B) {+}I(A;C|R)$    \> \ \eqref{fifth}             \\
\>\>\\
\emph{(II)}:\\
            \> $I(A;C|B){+}I(B;C|A){+}I(A;B|D)$               \\
=           \> $I(A;R|B){+}I(B;R|A){+}I(A;B|D)$              \> \ \emph{(C1)} \\
=           \> $I(A;R|B){+}I(BD;R|A){+}I(A;B|D)$               \> \ \eqref{C2C}   \\
$\geq$      \> $I(A;R|B){+}I(B;R|AD){+}I(A;B|D)$            \> \ \eqref{fifth}     \\
=           \> $I(A;R|B){+}I(AR;B|D)$            \> \ \eqref{third}       \\
$\geq$      \> $I(A;R|B){+}I(R;B|D)$             \> \ \eqref{fifth}  \\
=           \> ${-}H(R|AB){+}H(R|B){+}I(R;B|D)$     \> \ \eqref{second}          \\
$\geq$      \> ${-}H(R|AB){+}H(R|BD){+}I(R;B|D)$     \> \ \eqref{fourth}             \\
=           \> ${-}H(R|AB){+}H(R|D)$                       \> \ \eqref{second}     \\
=           \> ${-}H(R|ABCD){+}H(R|D)$                         \> \ \eqref{C2A}   \\
$\geq$      \> ${-}H(R|CD) {+}H(R|D)$                  \> \ \eqref{fourth}       \\
=           \> $I(R;C|D)$                     \> \ \eqref{second}     \\
=           \> $H(C|D){-}H(C|DR)$              \> \ \eqref{second}          \\
$\geq$      \> $H(C|D){-}H(C|R)$               \> \ \eqref{fourth}           \\
\\
Finally,\\
            \> $2I(A;B|C){+}I(A;C|B){+}I(B;C|A)$\\
            \> ${+}I(A;B|D){+}I(C;D)$  \\
$\geq$      \> $2I(A;B|C){+}H(C|D){-}H(C|R){+}I(C;D)$  \> \ \emph{(II)}\\
=           \> $2I(A;B|C){+} H(C){-}H(C|R)$ \> \ \eqref{first}   \\
$\geq$      \> $H(A|C) {-}H(A|B) {+} I(A;C|R)$ \\
            \> ${+}H(C) {-}H(C|R)$  \> \ \emph{(I)}\\
=           \> $H(A|C) {-}H(A|B) {+} H(C) {-}H(C|AR)$ \> \ \eqref{second} \\
=           \> ${-}H(A|B) {+}H(AC) {-}H(C|AR)$ \> \ \eqref{condEntropy} \\
$\geq$      \> ${-}H(A|B) {+}H(AC) {-}H(C|A)$   \> \ \eqref{fourth}  \\
=            \> ${-}H(A|B) {+}H(A)$ \> \ \eqref{condEntropy}\\
=           \> $I(A;B)$\> \ \eqref{first}  \\
\end{tabbing}
\end{proof}


\section{Search Methodology}
\label{sec:search}

In this section we will describe the search method used to generate new inequalities.

We assume that we have a present list of known inequalities.  For example, at the start, this list is just the Shannon inequalities.
We consider the set of vectors, listed in the order
\begin{eqnarray}
&&\langle H(A),H(B),H(AB),H(C),H(AC),H(BC),\nonumber\\
&&H(ABC),H(D),H(AD),H(BD),H(ABD),\nonumber\\
&&H(CD),H(ACD),H(BCD),H(ABCD)\label{vertex} \rangle
\end{eqnarray}
that satisfy the current list of inequalities.
Since all of the inequalities are homogeneous, this set of vectors is closed
under multiplication by positive constants.
This set of satisfying vectors forms a region that is a polytopic cone in fifteen dimensions.
An alternate way to view this space is to hold $H(ABCD)$ at a fixed value of 1, and then
consider the satisfying vectors $\langle H(A), \ldots, H(BCD)\rangle$ which now form a polytope in
$\R^{14}$.

Using software, such as Fukuda's Cddlib \cite{cddlib}, we can generate a list
of extreme rays for the current region.  Two rays are called
equivalent if one can be obtained from the other under a permutation of the variables $A,B,C,D$.
For example, $0 0 0 1 1 1 1 1 1 1 1 1 1 1 1$ is equivalent to $0 1 1 0 0 1 1 1 1 1 1 1 1 1 1$
since the second one can be obtained from the first by swapping $B$ and $C$.
At the start of the process, when we just have the Shannon inequalities, there are twelve extreme rays, up to symmetry:
\begin{eqnarray*}
 0 0 0 0 0 0 0 0 0 0 0 0 0 0 0\\
 0 0 0 0 0 0 0 1 1 1 1 1 1 1 1\\
 0 0 0 1 1 1 1 1 1 1 1 1 1 1 1\\
 0 1 1 1 1 1 1 1 1 1 1 1 1 1 1\\
 1 1 1 1 1 1 1 1 1 1 1 1 1 1 1\\
 0 1 1 1 1 2 2 1 1 2 2 2 2 2 2\\
 1 1 2 1 2 2 2 1 2 2 2 1 2 2 2\\
 1 1 2 1 2 2 2 1 2 2 2 2 2 2 2\\
 2 1 2 1 2 2 2 1 2 2 2 2 2 2 2\\
 1 1 2 1 2 2 3 1 2 2 3 2 3 3 3\\
 1 1 2 1 2 2 3 2 3 3 3 3 3 3 3\\
 2 2 4 2 3 3 4 2 3 3 4 3 4 4 4
\end{eqnarray*}

Next, generate a list of possible one-variable copy specifications
that will create a new variable $R$.  Again with just one from each symmetry family these are:
\begin{align*}
      &27   \ \ \ \ \  8  \ \ \ \ \ 7        \\
        &36   \ \ \ \ \ 12   \ \ \ \ 3        \\
        &39   \ \ \ \ \ 14   \ \ \ \ 1        \\
        &63   \ \ \ \ \  4  \ \ \ \ \ 3        \\
        &66   \ \ \ \ \  6  \ \ \ \ \ 1        \\
       &75   \ \ \ \ \  2  \ \ \ \ \ 1
\end{align*}
Each of these codes up an instance of the Copy Lemma of the form ``$R$ is an $X$-copy of $Y$ over $Z$".
The first column is just an index and is irrelevant. The second and third columns
code the $Y$ and $Z$ respectively.
These variables are coded in binary, combining $1=A$, $2=B$, $4=C$, $8=D$.
So the line labeled by 27 gives us that $Y$ is $D$ and $Z$ is $ABC$, while the
line labeled 36 gives us that $Y$ is $CD$ and $Z$ is $AB$.
The variable $X$ can be deduced from the other two.  For each of the variables
$A$, $B$, $C$, $D$, if it does not appear in $Z$ and is not equal to $Y$ then it is included
in $X$.  Then the above reads as:
\begin{align*}
        &27   \ \ \ \ \ \text{$R$ is a copy of $D$ over $ABC$}      \\
        &36   \ \ \ \ \ \text{$R$ is a $CD$-copy of ($CD$) over $AB$ }      \\
        &39   \ \ \ \ \ \text{$R$ is a $BCD$-copy of ($BCD$) over $A$}       \\
        &63   \ \ \ \ \ \text{$R$ is a $D$-copy of $C$ over $AB$}       \\
        &66   \ \ \ \ \ \text{$R$ is a $BCD$-copy of ($BC$) over $A$}       \\
        &75   \ \ \ \ \ \text{$R$ is a $CD$-copy of $B$ over $A$}
\end{align*}
(Some possibilities were elimintated in advance: a last copy over no variables can
never produce any new inequalities, and we may assume that no variable
appears in both $Y$ and $Z$.)

Now go through a full search, testing each of the copy specifications against
all of the extreme rays (including all symmetric forms) to see whether any
contradictions are obtained.  In the current example, the results of the run are:
\begin{align*}
&27   \ \ \ \ \  8  \ \ \ \ \ 7    \ \ \ \ \ \text{--PASSED--}     \\
&36   \ \ \ \ \ 12   \ \ \ \ 3  \ \ \ \ \ \text{--PASSED--}     \\
       &39   \ \ \ \ \ 14   \ \ \ \ 1  \ \ \ \ \ \text{--PASSED--}     \\
 &\text{----} \ \ \ \ \ \text{2 2 3 2 3 3 4 2 4 3 4 3 4 4 4}\\
&\text{Inequality: -1  -2   3  -2   3   3  -4   0  -1   1   0   1   0  -1   0}  \\
        &63   \ \ \ \ \  4  \ \ \ \ \ 3        \\
               &66   \ \ \ \ \  6  \ \ \ \ \ 1  \ \ \ \ \ \text{--PASSED--}        \\
       &75   \ \ \ \ \  2  \ \ \ \ \ 1  \ \ \ \ \ \text{--PASSED--}
\end{align*}

The 'PASSED' entries are those copy specifications which did not contradict
any of the extreme rays.  The remaining copy specification,
"R is a D-copy of C over AB", was incompatible with the extreme
ray 2 2 3 2 3 3 4 2 4 3 4 3 4 4 4 (and possibly with others; the search
stops for a given copy specification when one extreme ray fails).
In other words, the equations forcing an entropy vector to lie on this
extreme ray, namely
\begin{eqnarray}
 \label{eq_A}     H(A) &=& 2H(ABCD)/4\\
 \label{eq_B}       H(B) &=& 2H(ABCD)/4\\
     H(AB) &=& 3H(ABCD)/4\\
      H(C) &=& 2H(ABCD)/4\\
     H(AC) &=& 3H(ABCD)/4\\
     H(BC) &=& 3H(ABCD)/4\\
    H(ABC) &=& 4H(ABCD)/4\\
\label{eq_D}      H(D) &=& 2H(ABCD)/4\\
     H(AD) &=& 4H(ABCD)/4\\
     H(BD) &=& 3H(ABCD)/4\\
\label{eq_ABD}     H(ABD) &=& 4H(ABCD)/4\\
     H(CD) &=& 3H(ABCD)/4\\
\label{eq_ACD}    H(ACD) &=& 4H(ABCD)/4\\
    H(BCD) &=& 4H(ABCD)/4
\end{eqnarray}
together with the copy specification requirements coming from the Copy Lemma, when
$R$ is a $D$-copy of $C$ over $AB$,
\begin{eqnarray}
       H(R) &=& H(C)\\
      H(AR) &=& H(AC)\\
      H(BR) &=& H(BC)\\
     H(ABR) &=& H(ABC)\\
 \label{eq_CDR}      I(CD;R|AB) &=& 0
\end{eqnarray}
and the Shannon inequalities, force all of the entropies to be 0.

>From this one can generate a new inequality.  The software has already gone
through this process to produce the inequality
\begin{eqnarray*}
   -H(A)-2H(B)+3H(AB)-2H(C)&&\\
   +3H(AC)+3H(BC)-4H(ABC)-H(AD)&&\\
   +H(BD)+H(CD)-H(BCD) &>=& 0,
\end{eqnarray*}
To explain this process in more detail, here is how one could
arrive at the same inequality using a program called ITIP \cite{ITIP}.
ITIP is a MATLAB \cite{MATLAB} program for testing Shannon inequalitites.  It was written
in made available by Raymond Yeung and Ying-On Yan.
One feeds ITIP an entropy inequality followed by a list of linear constraints.
ITIP returns ``True" if the inequality follows from the constraints and the
Shannon inequalities. Otherwise it returns ``Not provable by ITIP".

Start with:
\begin{tabbing}
IT\=IP(`$H(ABCD)<=0$',...\\
    \>`$4H(A)=2H(ABCD)$',`$4H(B)=2H(ABCD)$',...\\
    \> `$4H(AB)=3H(ABCD)$',`$4H(C)=2H(ABCD)$',...\\
    \>`$4H(AC)=3H(ABCD)$',`$4H(BC)=3H(ABCD)$',...\\
      \>`$4H(ABC)=4H(ABCD)$',`$4H(D)=2H(ABCD)$',...\\
      \>`$4H(AD)=4H(ABCD)$',`$4H(BD)=3H(ABCD)$',...\\
      \>`$4H(ABD)=4H(ABCD)$',...\\
      \>`$4H(CD)=3H(ABCD)$',...\\
      \>`$4H(ACD)=4H(ABCD)$',...\\
      \>`$4H(BCD)=4H(ABCD)$',`$I(R;CD|AB)=0$',...\\
      \>`$H(R)=H(C)$',`$H(AR)=H(AC)$',...\\
      \>`$H(BR)=H(BC)$',`$H(ABR)=H(ABC)$')\\
   True
\end{tabbing}

Note that each inequality and each constraint is contained in single quotes, and
they are separated by commas.  The ellipses at the end of each line is part of the
MATLAB syntax and merely denotes that the command is continued on the next line.

This first run of ITIP merely justifies what has already been claimed, namely
that ITIP can prove $H(ABCD)<=0$ from equations \eqref{eq_A} to
\eqref{eq_CDR}.

Next, we test (by trial and error) whether some of these constraint equations
can be eliminated.  We find that, in this case, equations \eqref{eq_A},
\eqref{eq_D}, \eqref{eq_ABD},  \eqref{eq_ACD} are not necessary.  The inquiry to ITIP
looks like:

\begin{tabbing}
IT\=IP(`$H(ABCD)<=0$',...\\
    \>`$4H(B)=2H(ABCD)$',`$4H(AB)=3H(ABCD)$',...\\
    \> `$4H(C)=2H(ABCD)$',`$4H(AC)=3H(ABCD)$',...\\
    \>`$4H(BC)=3H(ABCD)$',...\\
      \>`$4H(ABC)=4H(ABCD)$',...\\
      \>`$4H(AD)=4H(ABCD)$',`$4H(BD)=3H(ABCD)$',...\\
      \>`$4H(CD)=3H(ABCD)$',...\\
      \>`$4H(BCD)=4H(ABCD)$',`$I(R;CD|AB)=0$',...\\
      \>`$H(R)=H(C)$',`$H(AR)=H(AC)$',...\\
      \>`$H(BR)=H(BC)$',`$H(ABR)=H(ABC)$')\\
   True
\end{tabbing}

Next, one can eliminate each equation constraint by adding an appropriate multiple
of it (determined perhaps by trial and error; we modified ITIP to output
some additional information that would be helpful for this) to the
right-hand side of the inequality.  For example, for the first remaining constraint
\eqref{eq_B}, we first try using a multiple of zero:

\begin{tabbing}
IT\=IP(`$H(ABCD)<=-0H(B)+0H(ABCD)$',...\\
    \>`$4H(AB)=3H(ABCD)$',...\\
    \> `$4H(C)=2H(ABCD)$',`$4H(AC)=3H(ABCD)$',...\\
    \>`$4H(BC)=3H(ABCD)$',...\\
      \>`$4H(ABC)=4H(ABCD)$',...\\
      \>`$4H(AD)=4H(ABCD)$',`$4H(BD)=3H(ABCD)$',...\\
      \>`$4H(CD)=3H(ABCD)$',...\\
      \>`$4H(BCD)=4H(ABCD)$',`$I(R;CD|AB)=0$',...\\
      \>`$H(R)=H(C)$',`$H(AR)=H(AC)$',...\\
      \>`$H(BR)=H(BC)$',`$H(ABR)=H(ABC)$')\\
   Not provable by ITIP
\end{tabbing}
This of course fails since it was already determined that this equation was necessary.
Then we gradually raise the multiple until we get a success.  Thus,
\begin{tabbing}
IT\=IP(`$H(ABCD)<=-8H(B)+4H(ABCD)$',...\\
    \>`$4H(AB)=3H(ABCD)$',...\\
    \> `$4H(C)=2H(ABCD)$',`$4H(AC)=3H(ABCD)$',...\\
    \>`$4H(BC)=3H(ABCD)$',...\\
      \>`$4H(ABC)=4H(ABCD)$',...\\
      \>`$4H(AD)=4H(ABCD)$',`$4H(BD)=3H(ABCD)$',...\\
      \>`$4H(CD)=3H(ABCD)$',...\\
      \>`$4H(BCD)=4H(ABCD)$',`$I(R;CD|AB)=0$',...\\
      \>`$H(R)=H(C)$',`$H(AR)=H(AC)$',...\\
      \>`$H(BR)=H(BC)$',`$H(ABR)=H(ABC)$')\\
   True
\end{tabbing}

We repeat this for the next equation, finding
\begin{tabbing}
IT\=IP(`$H(ABCD)<=-8H(B)+4H(ABCD)$',...\\
    \>`$+12H(AB)-9H(ABCD)$',...\\
    \> `$4H(C)=2H(ABCD)$',`$4H(AC)=3H(ABCD)$',...\\
    \>`$4H(BC)=3H(ABCD)$',...\\
      \>`$4H(ABC)=4H(ABCD)$',...\\
      \>`$4H(AD)=4H(ABCD)$',`$4H(BD)=3H(ABCD)$',...\\
      \>`$4H(CD)=3H(ABCD)$',...\\
      \>`$4H(BCD)=4H(ABCD)$',`$I(R;CD|AB)=0$',...\\
      \>`$H(R)=H(C)$',`$H(AR)=H(AC)$',...\\
      \>`$H(BR)=H(BC)$',`$H(ABR)=H(ABC)$')\\
   True
\end{tabbing}

After passing through all of the equations not involving $R$,
we get:
\begin{tabbing}
IT\=IP([`$H(ABCD)<=-8H(B)+4H(ABCD)$',...\\
    \>`$+12H(AB)-9H(ABCD)$',...\\
    \> `$-8H(C)+4H(ABCD)+12H(AC)-9H(ABCD)$',...\\
    \>`$+12H(BC)-9H(ABCD)$',...\\
      \>`$-16H(ABC)+16H(ABCD)$',...\\
      \>`$-8H(AD)+8H(ABCD)+8H(BD)-6H(ABCD)$',...\\
      \>`$+8H(CD)-6H(ABCD)$',...\\
      \>`$-8H(BCD)+8H(ABCD)$'],`$I(R;CD|AB)=0$',...\\
      \>`$H(R)=H(C)$',`$H(AR)=H(AC)$',...\\
      \>`$H(BR)=H(BC)$',`$H(ABR)=H(ABC)$')\\
   True
\end{tabbing}

Note that brackets are used to enclose the first inequality when it does
not fit on a single line.
Thus ITIP gives a proof of a new inequality under the copy conditions.
We collect like terms and divide through by the common divisor, 4, to get:
\begin{tabbing}
IT\=IP([`$-2H(B)+3H(AB)-2H(C)+3H(AC)$',...\\
    \>`$+3H(BC)-4H(ABC)-2H(AD)+2H(BD)$',...\\
    \>`$+2H(CD)-2H(BCD)>=0$'],...\\
    \>`$I(R;CD|AB)=0$',`$H(R)=H(C)$',...\\
    \>`$H(AR)=H(AC)$',`$H(BR)=H(BC)$',...\\
    \>`$H(ABR)=H(ABC)$')
\\
   True
\end{tabbing}

For the last step, we see if this inequality can be strengthened by adding
one or more Shannon terms to the right-hand side.
This final inequality simplifies to the form
\begin{eqnarray*}
   -H(A)-2H(B)+3H(AB)-2H(C)&&\\
   +3H(AC)+3H(BC)-4H(ABC)-H(AD)&&\\
   +H(BD)+H(CD)-H(BCD) &>=& 0,
\end{eqnarray*}
mentioned earlier.

One can express this inequality (which is in fact the Zhang-Yeung inequality)
in the form
 \begin{eqnarray*}
   I(B;C) &<=& 2I(B;C|A)+I(B;A|C)+I(C;A|B)\\
   &&+I(B;C|D)+I(A;D).
   \end{eqnarray*}
Also, one can do further ITIP work to extract an intelligible proof of
this inequality from the Copy Lemma.

Once one has this inequality, one can produce a new list of extreme rays up to symmetry:
\begin{eqnarray}\label{newList}
0 0 0 0 0 0 0 0 0 0 0 0 0 0 0\\
\nonumber 0 0 0 0 0 0 0 1 1 1 1 1 1 1 1\\
\nonumber  0 0 0 1 1 1 1 1 1 1 1 1 1 1 1\\
\nonumber  0 1 1 1 1 1 1 1 1 1 1 1 1 1 1\\
\nonumber  1 1 1 1 1 1 1 1 1 1 1 1 1 1 1\\
\nonumber  0 1 1 1 1 2 2 1 1 2 2 2 2 2 2\\
\nonumber  1 1 2 1 2 2 2 1 2 2 2 1 2 2 2\\
\nonumber  1 1 2 1 2 2 2 1 2 2 2 2 2 2 2\\
\nonumber  2 1 2 1 2 2 2 1 2 2 2 2 2 2 2\\
\nonumber  1 1 2 1 2 2 3 1 2 2 3 2 3 3 3\\
\nonumber  1 1 2 1 2 2 3 2 3 3 3 3 3 3 3\\
\nonumber  3 2 4 4 5 6 6 4 5 5 6 5 6 6 6\\
\nonumber  3 3 6 2 4 4 6 4 5 5 6 5 6 6 6\\
\nonumber  3 3 6 3 4 5 6 3 5 4 6 5 6 6 6\\
\nonumber  3 3 6 4 5 5 6 4 5 5 6 5 6 6 6
\end{eqnarray}
A test of this list against the remaining one-variable copy specifications shows that
there are no further contradictions -- the Zhang-Yeung inequality is all
that one can get using one copy variable.

To get more inequalities, we try using two copy variables but just one copy step.
Then $RS$ is a $W$-copy of $(X)(Y)$ over $Z$.
The possible copy specifications are:
\begin{tabbing}
       \qquad \qquad \qquad \qquad \= 18630 \qquad \= 4 \qquad \= 8 \qquad \= 3\\
        \> 18666 \> 6 \> 10 \> 1\\
       \> 18675 \> 6 \> 8 \> 1\\
       \> 18837 \> 2 \> 12 \> 1\\
       \> 19566 \> 2 \> 4 \> 1\\
\end{tabbing}
As before, the first column is an index and is irrelevant.
The variables $X$, $Y$, and $Z$ are coded by the second, third, and fourth columns respectively.
The variable $W$ is deduced from the others by including any of $A$, $B$, $C$, $D$ which do not
appear in $Z$ and are not equal to $X$ or $Y$.  Thus, the above translates as:
\begin{tabbing}
    \qquad \= $RS$ is a copy of $CD$ over $AB$\\
    \>  $RS$ is a $BCD$-copy of ($BC$)($BD$) over $A$\\
     \> $RS$ is a $BC$-copy of ($BC$)$D$ over $A$\\
    \>  $RS$ is a $CD$-copy of $B(CD)$ over $A$      [redundant]\\
    \>  $RS$ is a $D$-copy of $BC$ over $A$\\
\end{tabbing}
It turns out that these also do not give any more than the Zhang-Yeung inequality.

For two copy variables in two steps, there is a much longer list of possibilities
(753 of them).  A first pass against the Shannon extreme rays reduces this list
to 198, and then a run against the Zhang-Yeung extreme rays shows that only
15 of the copy specifications give contradictions (new inequalities):

\begin{tabbing}
 \quad \=---- \=4  4  5  3  5  5  6  2  5  6  6  4  6  6  6 \\
 \> \>Inequality: -7  -2   8  -4   9   5  -9   0   3  -2   0   3  -4   0   0\\
 \> \>4636 \qquad 4 \qquad 3 \qquad 1 \qquad 28 \\
 \>\> $R$ is a $D$-copy of $C$ over $AB$\\
  \>\>$S$ is a $B$-copy of $A$ over $CDR$\\
\end{tabbing}
\begin{tabbing}
 \quad \=---- \=4 2 5 3 5 4 6 4 5 6 6 5 6 6 6 \\
 \> \>Inequality: -7  -1  7  -4   9   5  -9   -1   4  -2   0   3  -4   0   0\\
 \> \>4716 \qquad 4 \qquad 3 \qquad 4 \qquad 25 \\
 \>\> $R$ is a $D$-copy of $C$ over $AB$\\
  \>\>$S$ is a $B$-copy of $C$ over $ADR$\\
\end{tabbing}
\begin{tabbing}
 \quad \=---- \=4 2 5 3 5 4 6 3 5 4 6 6 6 6 6 \\
 \> \>Inequality: -6  -3   7  -2   7   5  -8   -1   4  3  4   -2  0   0   0\\
 \> \>4788 \qquad 4 \qquad 3 \qquad 2 \qquad 25 \\
\end{tabbing}
\begin{tabbing}
 \quad \=---- \=2 4 5 3 4 5 6 3 4 5 6 6 6 6 6 \\
 \> \>Inequality: -3 -6 7 -3 6 8 -9 -1 3 4 -4 -2 0 0 0\\
 \> \>4798 \qquad 4 \qquad 3 \qquad 1 \qquad 24 \\
\end{tabbing}
\begin{tabbing}
 \quad \=---- \=4 4 5 3 5 5 6 2 5 6 6 4 6 6 6  \\
 \> \>Inequality: -17 -2 16 -8 21 7 -17 -2 9 -4 0 9 -12 0 0 \\
 \> \>5364 \qquad 4 \qquad 3 \qquad 8 \qquad 21 \\
\end{tabbing}
\begin{tabbing}
 \quad \=---- \=4 4 5 3 5 5 6 3 5 5 6 6 6 6 6\\
 \> \>Inequality: -7 -7 11 -1 6 6 -8 -1 6 6 -8 -3 0 0 0 \\
 \> \>5508 \qquad 4 \qquad 3 \qquad 8 \qquad 19 \\
\end{tabbing}
\begin{tabbing}
 \quad \=---- \=2 4 5 3 4 5 6 3 4 5 6 6 6 6 6 \\
 \> \>Inequality: -3 -6 7 -3 6 8 -9 -1 3 4 -4 -2 0 0 0 \\
 \> \>5526 \qquad 4 \qquad 3 \qquad 8 \qquad 17 \\
\end{tabbing}
\begin{tabbing}
 \quad \=---- \=3 2 4 4 5 5 6 4 5 6 6 5 6 6 6 \\
 \> \>Inequality: -6 -1 7 -5 9 6 -10 0 2 -2 0 2 -2 0 0 \\
 \> \>6084 \qquad 4 \qquad 3 \qquad 2 \qquad 21 \\
\end{tabbing}
\begin{tabbing}
 \quad \=---- \=3 3 4 5 4 5 5 6 2 4 6 6 5 6 6 6 \\
 \> \>Inequality: -3 -4 7 -6 7 10 -11 0 2 -2 0 2 -2 0 0 \\
 \> \>6094 \qquad 4 \qquad 3 \qquad 1 \qquad 20 \\
\end{tabbing}
\begin{tabbing}
 \quad \=---- \=3 4 5 2 4 5 6 4 5 5 6 6 6 6 6 \\
 \> \>Inequality: -5 -5 8 -1 6 6 -9 0 2 2 -2 -2 0 0 0\\
 \> \>6156 \qquad 4 \qquad 3 \qquad 4 \qquad 19 \\
\end{tabbing}
\begin{tabbing}
 \quad \=---- \= 3 4 5 4 5 5 6 2 4 6 6 5 6 6 6\\
 \> \>Inequality: -3 -4 7 -6 7 10 -11 0 2 -2 0 2 -2 0 0 \\
 \> \>6174 \qquad 4 \qquad 3 \qquad 4 \qquad 17 \\
\end{tabbing}
\begin{tabbing}
 \quad \=---- \= 2 4 5 3 4 5 6 3 4 5 6 6 6 6 6 \\
 \> \>Inequality: -3 -5 6 -2 5 6 -7 -1 3 4 -4 -2 0 0 0 \\
 \> \>11358 \qquad 4 \qquad 3 \qquad 16 \qquad 9 \\
\end{tabbing}
\begin{tabbing}
 \quad \=---- \= 4 4 5 3 5 5 6 3 5 5 6 6 6 6 6 \\
 \> \>Inequality:-8 -9 14 -1 7 8 -11 -1 6 6 -8 -3 0 0 0 \\
 \> \>12069 \qquad 4 \qquad 3 \qquad 24 \qquad 3 \\
\end{tabbing}
\begin{tabbing}
 \quad \=---- \= 3 4 5 4 5 5 6 2 4 6 6 5 6 6 6 \\
 \> \>Inequality: -3 -3 6 -5 6 8 -9 0 2 -2 0 2 -2 0 0 \\
 \> \>12654 \qquad 4 \qquad 3 \qquad 16 \qquad 5 \\
\end{tabbing}
\begin{tabbing}
 \quad \=---- \= 3 4 5 2 4 5 6 4 5 5 6 6 6 6 6  \\
 \> \>Inequality: -5 -5 8 -1 6 6 -9 0 2 2 -2 -2 0 0 0 \\
 \> \>12717 \qquad 4 \qquad 3 \qquad 20 \qquad 3 \\
\end{tabbing}

One can now continue the iteration, adding these new inequalities to the
known list, producing a new set of extreme rays, re-running them against
the 15 remaining copy specifications, and repeating until no contradictions
remain.  The result (after weeding out redundant inequalities that can be
deduced from the others) is the list of six two-copy-variable inequalities
given in our previous paper.


\section{Inequalities Using Two Copies}\label{TwoCopies}

In our conference paper, we gave six additional inequalities that used two copy lemmas to prove.
For this article, we will rewrite these inequalities in a different form that matches our other inequalities.
We summarize as follows:
\begin{eqnarray*}
2I(A;B) &\leq& aI(A;B|C) + bI(A;C|B) + cI(B;C|A) \\
      &+&  dI(A;B|D) + eI(A;D|B)+ fI(B;D|A)\\
      &+& gI(C;D)
\end{eqnarray*}
for each of the follwing values of $(a,b,c,d,e,f,g)$.
\begin{eqnarray}
(a,b,c,d,e,f,g) &=& (5,3,1,2,0,0,2) \label{Matus2}\\
(a,b,c,d,e,f,g) &=& (4,2,1,3,1,0,2) \label{survivor}\\
(a,b,c,d,e,f,g) &=& (4,4,1,2,1,1,2) \\
(a,b,c,d,e,f,g) &=& (3,3,3,2,0,0,2) \\
(a,b,c,d,e,f,g) &=& (3,4,2,3,1,0,2) \\
(a,b,c,d,e,f,g) &=& (3,2,2,2,1,1,2)
\end{eqnarray}
These are the only additional inequalities that are obtained from this technique using two copy lemmas.
As it turns out, all of these inequalities except for \eqref{survivor} have since been superseded by
other inequalities.  We give here a proof only for \eqref{survivor}.

\begin{lemma}\label{le:1}
The following is a 5-variable information inequality:
\begin{eqnarray*}
    I(A;B) &\leq& I(A;B|C)+I(A;B|D)+I(C;D)\\
    &+& I(A;D|R)+I(A;R|D)+I(D;R|A).
\end{eqnarray*}
\end{lemma}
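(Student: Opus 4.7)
The plan is to verify Lemma~\ref{le:1} by expressing the difference between the right- and left-hand sides as a non-negative combination of basic Shannon inequalities on the five variables $\{A,B,C,D,R\}$, peeling off the six right-hand-side terms one at a time via chain-rule identities.

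First, I would apply the chain-rule identity $I(A;BD) = I(A;D) + I(A;B|D) \ge I(A;B)$ to obtain
$I(A;B) \le I(A;B|D) + I(A;D)$,
accounting for the term $I(A;B|D)$. Next, the identity $I(A;DR) = I(A;D|R) + I(A;R) = I(A;D) + I(A;R|D)$ gives $I(A;D) \le I(A;D|R) + I(A;R)$ with slack $I(A;R|D) \ge 0$, and the identity $I(AD;R) = I(A;R) + I(D;R|A) = I(D;R) + I(A;R|D)$ gives $I(A;R) \le I(D;R) + I(A;R|D)$ with slack $I(D;R|A) \ge 0$. Chaining these three steps yields
$I(A;B) \le I(A;B|D) + I(A;D|R) + I(A;R|D) + I(D;R)$,
where the generated slacks $I(A;R|D)$ and $I(D;R|A)$ match further right-hand-side terms of Lemma~\ref{le:1}.

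The remaining task is to bound $I(D;R) \le I(A;B|C) + I(C;D) + I(D;R|A)$ in Shannon terms. This is the main obstacle: the natural monotonicity chain $I(D;R) \le I(CD;R) = I(C;R) + I(D;R|C)$, followed by $I(C;R) \le I(C;R|D) + I(C;D)$ (the analogue of the earlier identity with $C$ in place of $A$), produces leftover slack terms like $I(D;R|C)$ and $I(C;R|D)$ that do not appear on the right-hand side. To eliminate this slack, I would apply the Copy Lemma once to introduce an auxiliary sixth variable $S$ whose copy conditions (C1) and (C2) force the problematic slack to vanish, in the same spirit as the proof of Theorem~\ref{th:ZY2}; a natural candidate is $S$ chosen as an appropriate copy of $C$ or $R$ over a subset of $\{A,B,D\}$ so that conditions like $I(C;D;R|S)=0$ or $H(R|AB)=H(R|ABCS)$ hold and telescope the residual terms. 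This single Copy Lemma step internal to Lemma~\ref{le:1}, combined with the Copy Lemma step needed to introduce $R$ when Lemma~\ref{le:1} is applied to derive inequality~(\ref{survivor}), accounts for the ``two-copy'' classification of that 4-variable inequality.
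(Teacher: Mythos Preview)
Your chain-rule reduction is correct up to the point where you conclude
\[
I(A;B)\ \le\ I(A;B|D)+I(A;D|R)+I(A;R|D)+I(D;R),
\]
but the ``remaining task'' you isolate,
\[
I(D;R)\ \le\ I(A;B|C)+I(C;D)+I(D;R|A),
\]
is \emph{false} as an information inequality: take $B$ and $C$ constant and $A=D=R$ nontrivial, so that the right-hand side vanishes while $I(D;R)=H(D)>0$. Since the Copy Lemma can only establish valid information inequalities, no choice of auxiliary $S$ will rescue this residual. The problem is that your three chain-rule steps discard slack (in particular $I(A;D|B)$ from the first step) that is genuinely needed; once thrown away, the leftover sub-problem is no longer true.

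The paper's proof does not attempt a preliminary Shannon-only reduction. Instead it introduces the sixth variable $S$ at the outset---specifically as a $BC$-copy of $R$ over $AD$, which is close to your guess---and then verifies a single $6$-variable identity in which every term on the left is either a Shannon term $I(\,\cdot\,;\,\cdot\,|\,\cdot\,)\ge 0$ or vanishes by the copy conditions (C1)/(C2). Your instinct that exactly one Copy Lemma step on a sixth variable is required is right; what fails is the attempt to peel off a Shannon-type portion first and apply the Copy Lemma only to the remainder.
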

\begin{proof}
By expanding mutual informations into entropies and cancelling
terms, one can verify the following 6-variable identity:
\begin{align*}
      & I(A;B)  & \\
       &\ \ + I(A;B|CS)   &                                                           \text{(S)} \\
       &\ \ + I(B;S|A)     &                                                          \text{(S)}  \\
       &\ \ + I(B;S|D)      &                                                          \text{(S)}  \\
       &\ \ + I(A;D|BS)       &                                                       \text{(S)}  \\
       &\ \ + I(C;D|S)     &                                                           \text{(S)}  \\
       &\ \ + I(C;S|A)     &                                                           \text{(S)}  \\
       &\ \ + I(C;S|B)     &                                                           \text{(S)}  \\
      &\ \  + I(ABR;S|CD)     &                                                        \text{(S)}  \\
      &\ \  + I(CR;S|ABD)     &                                                        \text{(S)} \\
      &\ \  + I(DR;S|ABC)     &                                                        \text{(S)}  \\
   &= I(A;B|C)+I(A;B|D)+I(C;D)   \\
   &\ \ +I(A;D|R)+I(A;R|D)+I(D;R|A)\\
       &\ \ + 3I(S;BCR|AD)      &                                                      \text{(C2)}  \\
       &\ \ - (H(S)-H(R))     &                                                        \text{(C1)}  \\
       &\ \ + 2(H(SA)-H(RA))      &                                                    \text{(C1)}  \\
       &\ \ + 2(H(SD)-H(RD))      &                                                    \text{(C1)}  \\
       &\ \ - 3(H(SAD)-H(RAD))     &                                                   \text{(C1)}
\end{align*}
Each of the terms in the lines marked (S) is a conditional mutual
information and is therefore nonnegative by Shannon's inequality \eqref{eq:55}.  Thus, if the terms marked (S)
are erased and the ``=" is replaced by ``$\leq$", then we obtain a 6-variable
Shannon-type inequality.  By the Copy Lemma, we may choose the random variable S such
that S is a BC-copy of R over AD.  Then the term marked (C2) is zero
by condition (C2), and each of the terms marked (C1) equals zero
by condition (C1).
\end{proof}

\begin{theorem}
The following is a 4-variable information inequality:
\begin{eqnarray*}
   2I(A;B) &\leq& 4I(A;B|C)+2I(A;C|B)+I(B;C|A)\\
   &+&3I(A;B|D)+I(A;D|B)+2I(C;D).
\end{eqnarray*}
\end{theorem}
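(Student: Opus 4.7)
The plan is to derive the theorem from Lemma~\ref{le:1} by a further application of the Copy Lemma, exactly parallel to the way Lemma~\ref{le:1} is itself deduced from a six-variable identity plus one internal copy step. Lemma~\ref{le:1} provides, for any auxiliary variable $R$ jointly distributed with $A,B,C,D$, the 5-variable Shannon-type bound
$$I(A;B)\le I(A;B|C)+I(A;B|D)+I(C;D)+I(A;D|R)+I(A;R|D)+I(D;R|A),$$
so it suffices to specialize $R$ via the Copy Lemma so that, after doubling, the three $R$-involving terms collapse into a 4-variable expression matching the target coefficient tuple $(4,2,1,3,1,0,2)$.

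Concretely, I would take two copies of Lemma~\ref{le:1}, yielding
$$2I(A;B)\le 2I(A;B|C)+2I(A;B|D)+2I(C;D)+2\bigl[I(A;D|R)+I(A;R|D)+I(D;R|A)\bigr],$$
and then invoke the Copy Lemma to realize $R$ as a $D$-copy of $C$ over $(A,B)$, the same copy specification used in the proof of Theorem~\ref{th:ZY2}. Property (C1) then gives $(A,B,R)\sim(A,B,C)$, which implies $I(A;R|B)=I(A;C|B)$, $I(B;R|A)=I(B;C|A)$, and $I(A;B|R)=I(A;B|C)$; property (C2) gives $I(CD;R|AB)=0$, which fixes $H(ABCDR)=H(ABC)+H(ABCD)-H(AB)$. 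Expanding $I(A;D|R)$, $I(A;R|D)$, and $I(D;R|A)$ into joint entropies and substituting these copy identities then rewrites the $R$-terms as an expression in $A,B,C,D$ only, modulo nonnegative Shannon slack that can be dropped.

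The main obstacle is that the target coefficient tuple is not symmetric under $A\leftrightarrow B$ (the coefficient of $I(B;D|A)$ is $0$ while that of $I(A;D|B)$ is $1$), whereas the single Zhang-Yeung-style copy above treats $A$ and $B$ symmetrically. I therefore expect that the cleanest proof actually uses \emph{two} different copy specifications---bundled, as in the proof of Lemma~\ref{le:1} itself, into one large entropy identity containing two Copy Lemma substitutions---with the asymmetric choice made at one of the two steps. Identifying the correct pair of copies is the crux of the argument; it is exactly the bookkeeping task handled automatically by the linear-programming search of Section~\ref{sec:search}. Once the pair is fixed, verifying the proof reduces to expanding each mutual information into joint entropies and matching coefficients on each of the fifteen atoms of $\{A,B,C,D,R\}$, which is tedious but entirely mechanical and can be presented in the same tabular format used in Section~\ref{ZY}.
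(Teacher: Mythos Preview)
Your overall framework is right --- the paper indeed combines Lemma~\ref{le:1} with one further Copy Lemma step in which $R$ is a $D$-copy of $C$ over $(A,B)$ --- but your concrete execution plan misfires in two places.

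First, the paper uses \emph{one} instance of Lemma~\ref{le:1}, not two. Doubling the lemma as you propose would give $2I(A;B)\le 2I(A;B|C)+2I(A;B|D)+2I(C;D)+2[\cdots]$, and no amount of copy-substitution on the bracketed $R$-terms will produce the asymmetric coefficients $4,2,1,3,1,0$ from a symmetric starting point. Second, your diagnosis of the asymmetry problem leads you in the wrong direction: you conjecture that a second, different copy specification is needed. It is not. The paper's proof uses exactly one copy step here (the $R$ above), and the asymmetry in the final coefficients comes entirely from an \emph{asymmetric choice of Shannon slack terms} added to the identity. For instance, the term $3I(D;R|B)$ appears on the left with no matching $I(D;R|A)$ term; likewise $I(C;R|A)$ and $I(C;R|B)$ appear with equal weight but $3I(C;R|ABD)$ versus $I(D;R|ABC)$ do not. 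These lopsided Shannon terms are what skew the output toward $I(A;D|B)$ and away from $I(B;D|A)$.

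So the missing ingredient is not a second copy but rather the explicit list of Shannon inequalities to add alongside the single lemma instance. Once you write down the correct Shannon terms (nine of them in the paper's proof, with carefully chosen multiplicities), the identity balances and the (C1)/(C2) cancellation goes through exactly as you describe. Your phrase ``modulo nonnegative Shannon slack that can be dropped'' is where the real work hides; that slack is not incidental cleanup but the mechanism that breaks the $A\leftrightarrow B$ symmetry.
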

\begin{proof}
By expanding mutual informations into entropies and cancelling
terms, one can verify the following 5-variable identity:
\begin{align*}
 & 2I(A;B) & \\
      &\ \ + I(A;B|CR)   &   \text{(S)} \\
      &\ \ + 3I(A;B|DR)   &   \text{(S)} \\
     &\ \  + I(C;D|R)      &   \text{(S)} \\
     &\ \  + I(D;R|ABC)     &   \text{(S)} \\
     &\ \  + 3I(D;R|B)       &   \text{(S)} \\
     &\ \  + I(AB;R|CD)        &   \text{(S)} \\
     &\ \  + 3I(C;R|ABD)       &   \text{(S)} \\
     &\ \  + I(C;R|A)          &   \text{(S)} \\
     &\ \  + I(C;R|B)         &   \text{(S)} \\
     &\ \  + [-I(A;B)+I(A;B|C)+I(A;B|D) &\\
     &\ \ +I(C;D)+I(A;D|R)+I(A;R|D)+I(D;R|A)]    &   \text{(L)} \\
   & = 4I(A;B|C)+2I(A;C|B)+I(B;C|A) \\
    &\ \ +3I(A;B|D)+I(A;D|B)+2I(C;D) \\
     &\ \  + 5I(R;CD|AB)     &   \text{(C2)} \\
    &\ \   - 2(H(R)-H(C))    &   \text{(C1)} \\
    &\ \   + 3(H(RA)-H(CA))  &   \text{(C1)} \\
     &\ \  + 4(H(RB)-H(CB))   &   \text{(C1)} \\
     &\ \  - 5(H(RAB)-H(CAB)) &   \text{(C1)} \\
\end{align*}
The expression enclosed in brackets ``[]" immediately before the ``=" sign
and marked with (L) is nonnegative by Lemma \ref{le:1}.
Each of the terms in the lines marked (S) is a conditional mutual
information and is therefore nonnegative.  Thus, if the terms marked (S)
are erased and the "=" is replaced by "$\le$", then we obtain a 5-variable
Shannon-type inequality.  By the Copy Lemma, we may choose R such
that R is a D-copy of C over AB.  Then the term marked (C2) is zero
by condition (C2), and each of the terms marked (C1) equals zero
by condition (C1).
\end{proof}


\section{Inequalities Using Three Copy Variables with at most Two Copy Steps}\label{ThreeCopies}

In this section and the next we will present all of the information inequalities that can be proved from this technique
using three copies.  Some of these inequalities have been superseded by other inequalities, and won't be
listed here.  In this section we will concentrate on those that involve only two instances of the Copy Lemma.
We will start out by giving the proof of the first inequality in full detail.  The proofs are all very similar.  Therefore, in order to
save space, as we go further down the list we will gradually take shortcuts, eventually giving only abbreviated proofs.
However, enough detail will given in these outlines so that the full proofs can be reconstructed.

\begin{theorem}  \label{thm:many-inequalities}
The following is an information inequality
\begin{eqnarray*}
aI(A;B) &\leq& bI(A;B|C) + cI(A;C|B) + dI(B;C|A) \\
      &+&  eI(A;B|D) + fI(A;D|B)+ gI(B;D|A)\\
      &+& hI(C;D) + iI(C;D|A)
\end{eqnarray*}
for each of the following values of $(a,b,c,d,e,f,g,h,i)$.
\begin{align}
&&(3&,&4&,&4&,&4&,&3&,&1&,&1&,&3&,&0)&&&& \label{ineq3.1} \\
&&(3&,&9&,&6&,&1&,&3&,&0&,&0&,&3&,&0)&&&& \label{ineq3.2}\\
&&(3&,&4&,&6&,&6&,&3&,&0&,&0&,&3&,&0)&&&& \label{ineq3.3}\\
&&(2&,&3&,&3&,&1&,&5&,&2&,&0&,&2&,&0)&&&& \label{ineq3.4}\\
&&(3&,&4&,&3&,&3&,&3&,&3&,&3&,&3&,&0)&&&& \label{ineq3.5}\\
&&(3&,&6&,&3&,&1&,&6&,&3&,&0&,&3&,&0)&&&& \label{ineq3.6}\\
&&(4&,&5&,&8&,&8&,&4&,&1&,&1&,&4&,&0)&&&& \label{ineq3.7}\\
&&(2&,&4&,&2&,&1&,&2&,&0&,&0&,&2&,&3)&&&& \label{ineq3.8}\\
&&(4&,&5&,&5&,&5&,&4&,&4&,&4&,&4&,&0)&&&& \label{ineq3.9}\\
&&(2&,&3&,&3&,&2&,&2&,&0&,&0&,&2&,&0)&&&& \label{ineq3.10}\\
&&(3&,&7&,&5&,&1&,&3&,&1&,&1&,&3&,&0)&&&& \label{ineq3.11}\\
&&(4&,&6&,&4&,&3&,&4&,&2&,&1&,&4&,&0)&&&& \label{ineq3.12}\\
&&(2&,&5&,&2&,&1&,&2&,&0&,&0&,&2&,&0)&&&& \label{ineq3.13}\\
&&(2&,&4&,&3&,&1&,&2&,&0&,&0&,&2&,&0)&&&& \label{ineq3.14}\\
&&(2&,&4&,&1&,&2&,&2&,&3&,&0&,&2&,&0)&&&& \label{ineq3.15}\\
&&(2&,&4&,&2&,&1&,&2&,&4&,&1&,&2&,&0)&&&& \label{ineq3.16}
\end{align}
\end{theorem}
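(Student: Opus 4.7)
The plan is to prove each of the sixteen inequalities using the same template already used for Theorem~\ref{th:ZY2} and the two-copy inequality in Section~\ref{TwoCopies}, now extended to accommodate three auxiliary variables introduced in at most two applications of the Copy Lemma~\ref{copyLemma}. For each inequality, I would specify a copy configuration: one copy step that introduces a single auxiliary variable $R$ (say, as a $D$-copy of $C$ over some subset $Z_1$ of $\{A,B,C,D\}$), and a second copy step that introduces either one further auxiliary $S$ or a pair $(S,T)$ over some subset $Z_2$. The full list of copy specifications is read directly off the computer search of Section~\ref{sec:search}; for each of the sixteen target coefficient vectors there is at least one such specification that triggers a contradiction against the extreme rays of the current (partially refined) outer polytope.

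With the copy specification fixed for a given inequality, the proof reduces to verifying a single algebraic identity in joint entropies on the enlarged set of (up to) seven variables. The identity has the schematic form
\begin{equation*}
\text{RHS of target} \;=\; \text{LHS of target} \;+\; \sum_j \alpha_j \, I(X_j;Y_j\,|\,Z_j) \;+\; \sum_k \beta_k \, I(\cdot\,;\cdot|\cdot)_{\text{(C2)}} \;+\; \sum_\ell \gamma_\ell \, \bigl(H(\cdot)_{R\text{-form}} - H(\cdot)_{C\text{-form}}\bigr)_{\text{(C1)}},
\end{equation*}
where the first sum consists of Shannon nonnegative terms (marked (S)), the second collects the $I(\cdot\,;\cdot|\cdot)$ expressions forced to zero by condition (C2) of the relevant copy step, and the third collects differences of entropies that vanish by the marginal-matching condition (C1). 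Discarding the (S) terms yields the target inequality. The $\alpha_j,\beta_k,\gamma_\ell$ are nonnegative integer coefficients; once one knows them, checking the identity is a routine entropy expansion.

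The main obstacle is not the verification but the \emph{discovery} of the specifications and the Shannon-term coefficients, and this is exactly the role of the search engine described in Section~\ref{sec:search}. Concretely, I would (i) enumerate all inequivalent copy specifications involving three auxiliary variables in at most two copy steps, (ii) for each specification, adjoin its (C1) and (C2) constraints to the current list of known inequalities and run linear programming to test whether the resulting polyhedron violates any extreme ray of the current outer bound; if so, extract the violating linear functional as a candidate new inequality; (iii) simplify the candidate by ITIP-style tightening (absorbing Shannon slack into the right-hand side and removing redundant constraints, as illustrated in Section~\ref{sec:search}) to obtain the clean coefficient vector $(a,b,c,d,e,f,g,h,i)$; and (iv) iterate, removing any inequality that becomes superseded after adding the others. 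The sixteen vectors listed in \eqref{ineq3.1}--\eqref{ineq3.16} are exactly the survivors of this process (from the two-copy-step slice).

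Finally, the presentation strategy matches what the authors announce: I would write the proof of \eqref{ineq3.1} in full, displaying the tabular identity in the style of Theorem~\ref{th:ZY2} so the reader can see how the (S), (C1), and (C2) labels match the Copy Lemma conditions; for subsequent entries \eqref{ineq3.2}--\eqref{ineq3.16} I would only list the copy specifications and the integer coefficients of the Shannon terms, since the identity is then purely mechanical to re-derive. In particular, some of the later inequalities in the list are most naturally proved by invoking an earlier one from the list as a ``lemma line'' (marked (L)), exactly as was done in the proof of \eqref{survivor} in Section~\ref{TwoCopies}; this is how a two-copy-step configuration can effectively compound the strength of an already-proved inequality.
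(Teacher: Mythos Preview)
Your proposal is correct and matches the paper's approach almost exactly: each of the sixteen inequalities is proved via two applications of the Copy Lemma (three auxiliary variables total), displayed as an entropy identity with (S), (C1), (C2), and (L) lines, with \eqref{ineq3.1} written in full and the rest abbreviated. One small correction: the (L) line in each proof does not invoke an earlier inequality from the list \eqref{ineq3.1}--\eqref{ineq3.16}; rather, for every target the paper first proves a bespoke intermediate $5$- or $6$-variable inequality (involving the first-stage copy variable $R$) via one copy step, and then the main identity uses that intermediate result as its (L) line together with the second copy step.
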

\begin{proof}
{\bf Proof of (\ref{ineq3.1}):}
We first show the following 5-variable information inequality:
\begin{eqnarray}\label{lem3.1}
   2I(A;B) &\le& I(AB;R)+I(A;B|C)+I(A;B|D) \nonumber\\
      && +2I(A;B|CR)+2I(A;B|DR)+I(A;C|B)\nonumber\\
      && +I(A;D|B)+I(B;C|A)+I(B;D|A)\nonumber\\
      && +I(C;D)+2I(C;D|R).
\end{eqnarray}
To see this, verify the following 7-variable identity by expanding into entropies
and canceling:
\begin{align*}
&2I(A;B)&\\
      &\ \  + I(CR;S|A)                                                         & \text{(S)} \\
      &\ \  + I(CR;S|B)                                                         & \text{(S)} \\
      &\ \  + I(DR;S|A)                                                         & \text{(S)} \\
      &\ \  + I(DR;S|B)                                                         & \text{(S)} \\
      &\ \  + I(CD;ST|ABR)                                                      & \text{(S)} \\
      &\ \  + I(A;B|CRS)                                                        & \text{(S)} \\
      &\ \  + I(A;B|DRS)                                                        & \text{(S)} \\
      &\ \  + I(AB;S|CDR)                                                       & \text{(S)} \\
      &\ \  + I(C;D|RS)                                                         & \text{(S)} \\
      &\ \  + I(D;S|ABCR)                                                       & \text{(S)} \\
      &\ \  + I(C;S|ABDR)                                                       & \text{(S)} \\
      &\ \  + I(CR;T|A)                                                         & \text{(S)} \\
      &\ \  + I(CR;T|B)                                                         & \text{(S)} \\
      &\ \  + I(DR;T|A)                                                         & \text{(S)} \\
      &\ \  + I(DR;T|B)                                                         & \text{(S)} \\
      &\ \  + I(A;B|CRT)                                                        & \text{(S)} \\
      &\ \  + I(A;B|DRT)                                                        & \text{(S)} \\
      &\ \  + I(C;D|RT)                                                         & \text{(S)} \\
      &\ \  + I(AB;T|CDR)                                                       & \text{(S)} \\
      &\ \  + + 3I(CDR;T|ABS)                                                     & \text{(S)} \\
      &\ \  + I(CDR;S|ABT)                                                      & \text{(S)} \\
      &\ \  + I(S;T|R)                                                          & \text{(S)} \\
      &\ \  + I(AB;R|ST)                                                        & \text{(S)} \\
      &\ \  + I(CR;S|ABT)                                                       & \text{(S)} \\
      &\ \  + I(DR;S|ABT)                                                       & \text{(S)} \\
      &\ \  + I(C;ST|ABDR)                                                      & \text{(S)} \\
      &\ \  + I(D;ST|ABCR)                                                      & \text{(S)} \\
      & = I(AB;R)+I(A;B|C)+I(A;B|D)&\\
      & +2I(A;B|CR)+2I(A;B|DR)+I(A;C|B)&\\
      & +I(A;D|B) +I(B;C|A)+I(B;D|A)&\\
      & +I(C;D)+2I(C;D|R)&\\
      &\ \   + 7I(ST;CDR|AB)                                                     & \text{(C2)} \\
      &\ \   - (H(ST)-H(CD))                                                     & \text{(C1)} \\
      &\ \   + 2(H(SA)-H(CA))                                                    & \text{(C1)} \\
      &\ \   + 2(H(TA)-H(DA))                                                    & \text{(C1)} \\
      &\ \   + 2(H(SB)-H(CB))                                                    & \text{(C1)} \\
      &\ \   + 2(H(TB)-H(DB))                                                    & \text{(C1)} \\
      &\ \   - 3(H(SAB)-H(CAB))                                                  & \text{(C1)} \\
      &\ \   - 3(H(TAB)-H(DAB))                                                  & \text{(C1)} \\
\end{align*}
Each of the terms in the lines marked (S) is a conditional mutual
information and is therefore nonnegative.  Thus, if the terms marked (*S)
are erased and the "=" is replaced by "$\leq$", then we obtain a 7-variable
Shannon-type inequality.  By the Copy Lemma, we may choose ST such
that ST is a R-copy of CD over AB.  Then the term marked (C2) is zero
by condition (C2), and each of the terms marked (C1) equals zero
by condition (C1).

With \eqref{lem3.1} verified, we now continue with the proof of \eqref{ineq3.1}.
By expanding mutual informations into entropies and cancelling
terms, one can verify the following 5-variable identity:
\begin{align*}
&3I(A;B)&\\
      &\ \ + 2I(C;R|A)                                                         & \text{(S)} \\
      &\ \ + 2I(C;R|B)                                                         & \text{(S)} \\
      &\ \ + 2I(D;R|A)                                                         & \text{(S)} \\
      &\ \ + 2I(D;R|B)                                                         & \text{(S)} \\
      &\ \ + 2I(AB;R|CD)                                                       & \text{(S)} \\
      &\ \ + 2I(D;R|ABC)                                                       & \text{(S)} \\
      &\ \ + 2I(C;R|ABD)                                                       & \text{(S)} \\
      &\ \ + [-2I(A;B)+I(AB;R)+I(A;B|C)\\
      &\ \ +I(A;B|D)+2I(A;B|CR)+2I(A;B|DR)\\
      &\ \ +I(A;C|B)+I(A;D|B)+I(B;C|A)\\
      &\ \ +I(B;D|A)+I(C;D)+2I(C;D|R)]                  & \text{(L)} \\
    & = 4I(A;B|C)+4I(A;C|B)+4I(B;C|A)\\
      &\ \ +3I(A;B|D)+I(A;D|B)+I(B;D|A)\\
      &\ \ +3I(C;D)\\
      &\ \ + 6I(R;CD|AB)                                                       & \text{(C2)} \\
      &\ \ - (H(R)-H(C))                                                       & \text{(C1)} \\
      &\ \ + 4(H(RA)-H(CA))                                                   & \text{(C1)} \\
      &\ \ + 4(H(RB)-H(CB))                                                    & \text{(C1)} \\
      &\ \ - 7(H(RAB)-H(CAB))                                                 & \text{(C1)} \\
\end{align*}
The expression in brackets ``[]" marked (L) is nonnegative by Lemma 11.1.
Each of the terms in the lines marked (S) is a conditional mutual
information and is therefore nonnegative.  Thus, if the terms marked (S)
are erased and the "=" is replaced by "$\le$", then we obtain a 5-variable
Shannon-type inequality.  By the Copy Lemma, we may choose R such
that R is a D-copy of C over AB.  Then the term marked (C2) is zero
by condition (C2), and each of the terms marked (C1) equals zero
by condition (C1).    This finishes the proof of \eqref{ineq3.1}.

{\bf Proof of (\ref{ineq3.2}):}
The proof of each of the remaining inequalities has the same form as the proof
of \eqref{ineq3.1}.  Therefore,
we will gradually take more and more shortcuts.  For each inequality there are two equations.
The first equation proves a lemma, and the second proves the inequality.
Each line marked with (S), (L), (C1), or (C2) can be ignored because it is a
Shannon inequality, an instance of the lemma, or it follows from (C1) or (C2) resp.
The inequalities are then formed by ignoring these lines and turning the
``=" sign into a ``$\le$" sign.

We now give an outline of the proof of \eqref{ineq3.2}:

\begin{align*}
& 2I(A;B)&\\
      &\ \ + I(A;B|CT)
      + I(A;B|DT)
      + I(A;B|RT)                                                          & \text{(S)} \\
      &\ \ + I(A;R|ST)
      + I(A;T|CD)
      + I(ACDR;T|BS)                                                       & \text{(S)} \\
      &\ \ + I(C;D|T)
      + I(C;T|B)
      + I(CRS;T|ABD)                                                       & \text{(S)} \\
      &\ \ + I(CDS;T|ABR)
      + I(C;T|ARS)
      + I(B;S|T)                                                           & \text{(S)} \\
      &\ \ + I(B;T|ACDRS)
      + I(D;RS|ACT)
      + I(D;T|A)                                                           & \text{(S)} \\
      &\ \ + I(D;T|B)
      + I(DRS;T|ABC)
      + I(R;T|A)                                                           & \text{(S)} \\
      &\ \ + I(R;T|B)
      + I(S;T|A)
      + I(S;T|R)                                                           & \text{(S)} \\
   & = I(A;B|C)+I(A;B|D)+3I(A;B|R) &\\
   &\ \ +2I(A;R|B)+I(A;R|S)+I(C;D) &\\
   &\ \ +I(B;R|A)+I(B;S) +I(D;RS|AC) &\\
       &\ \ + 5I(T;CDRS|AB)                                              & \text{(C2)} \\
       &\ \ - 2(H(T)-H(R))
        + 3(H(TA)-H(RA))    & \text{(C1)} \\
        &\ \ + 4(H(TB)-H(RB))
        - 5(H(TAB)-H(RAB))                                                 & \text{(C1)} \\
\end{align*}
where $T$ is a $CDS$-copy of $R$ over $AB$.  Then,
\begin{align*}
    &  3I(A;B) &\\
       &\ \ + 6I(A;C|BR)
      + I(A;B|DR)
      + 6I(B;R|C)                                                          & \text{(S)}\\
       &\ \ + I(D;R|A)
      + I(C;D|R)
      + I(AB;R|CD)                                                         & \text{(S)}\\
       &\ \ + I(D;R|B)
      + I(C;R|ABD)
      + I(B;RS|ACD)                                                        & \text{(S)}\\
       &\ \ + 6I(D;R|ABC)
      + I(C;S|B)
      + I(AR;B|CS)                                                         & \text{(S)}\\
       &\ \ + 8I(BD;S|ACR)
      + I(D;RS|ABC)                                                        & \text{(S)}\\
       &\ \ + [-2I(A;B)+I(A;B|C)+I(A;B|D) & \text{(L)}\\
       &\ \ +3I(A;B|R)+2I(A;R|B)+I(A;R|S) & \text{(L)}\\
       &\ \ +I(C;D)+I(B;R|A)+I(B;S)       & \text{(L)}\\
       &\ \ +I(D;RS|AC)]                                   & \text{(L)}\\
   &= 9I(A;B|C)+6I(A;C|B)+I(B;C|A)\\
     &\ \ +3I(A;B|D)+3I(C;D) \\
      &\ \ + 10I(RS;BD|AC)                                                    & \text{(C2)} \\
      &\ \ - 4(H(R)-H(B))
      + (H(RS)-H(BD))               & \text{(C1)}\\
       &\ \ + 5(H(RA)-H(BA))
       + (H(SA)-H(DA))               & \text{(C1)}\\
      &\ \ - (H(RSA)-H(BDA))
      + 7(H(RC)-H(BC))               & \text{(C1)}\\
      &\ \ - (H(SC)-H(DC))
      - 8(H(RAC)-H(BAC))                                                  & \text{(C1)}\\
\end{align*}
where $RS$ is a copy of $BD$ over $AC$.
This finishes the proof of \eqref{ineq3.2}.

{\bf Shortened Proof of (\ref{ineq3.2}):}
In the proof of the above lemma there are some lines marked (S).
These lines make up a Shannon inequality.
On the other side of the equation, there is a line marked (C2).  Except for the coefficient,
this comes directly from the Copy Lemma (Lemma~\ref{copyLemma}).  Finally, there are some lines
marked (C1).  These lines are zero by the Copy Lemma, and are easily deduced from the rest of the
equation.  We will now rewrite the proof in a more abbreviated form:

Let $T$ be a $CDS$-copy of $R$ over $AB$.  Let (L) be non-negative expression
\begin{align*}
       &\ \ -2I(A;B)+I(A;B|C)+I(A;B|D) +3I(A;B|R)\\
       &\ \ +3I(A;B|R)+2I(A;R|B)+I(A;R|S)+I(C;D) \\
       &\ \ +I(B;R|A) +I(B;S)  +I(D;RS|AC) \; ,
\end{align*}
which is proved using the Shannon inequality $0\leq $
\begin{align*}   &\ \ + I(A;B|CT)+ I(A;B|DT)+ I(A;B|RT) \\
      &\ \ + I(A;R|ST)+ I(A;T|CD)+ I(ACDR;T|BS)  \\
      &\ \ + I(C;D|T)+ I(C;T|B)+ I(CRS;T|ABD) \\
      &\ \ + I(CDS;T|ABR)+ I(C;T|ARS)+ I(B;S|T)\\
      &\ \ + I(B;T|ACDRS)+ I(D;RS|ACT)+ I(D;T|A) \\
      &\ \ + I(D;T|B)+ I(DRS;T|ABC)+ I(R;T|A) \\
      &\ \ + I(R;T|B)+ I(S;T|A)+ I(S;T|R)  \; ,
\end{align*}
and a (C2) term $5I(T;CDRS|AB)$.

Similarly, the main proof of inequality \eqref{ineq3.2} will be abbreviated as follows.
Let RS be a copy of BD over AC.
The proof uses the Shannon inequality $0\leq $
\begin{align*}
        &\ \  3I(A;B) + 6I(A;C|BR)+ I(A;B|DR)+ 6I(B;R|C)\\
        &\ \ + I(D;R|A)+ I(C;D|R)+ I(AB;R|CD) \\
       &\ \ + I(D;R|B)+ I(C;R|ABD)+ I(B;RS|ACD) \\
       &\ \ + 6I(D;R|ABC)+ I(C;S|B)+ I(AR;B|CS) \\
       &\ \ + 8I(BD;S|ACR)+ I(D;RS|ABC)
\end{align*}
and a (C2) term $10I(RS;BD|AC)$.

{\bf Proof of (\ref{ineq3.3}):}
Let T be a BDS-copy of R over AC.  Let (L) be the non-negative expression
\begin{align*}
&\ \  -2I(A;C)+3I(A;C|R)+I(C;D)+2I(C;R|A)\\
&\ \  +I(C;S)+I(A;B|D)+I(A;R|C)+I(A;R|S) \\
&\ \            +I(A;CS|B) \; ,
\end{align*}
which is proved using the Shannon inequality $0\leq $
\begin{align*}
&\ \  I(B;T|A)  + I(CS;A|BT) + I(C;D|T)+ I(D;T|B) \\
&\ \  + I(ABRS;T|CD) + I(D;T|A)+ I(A;B|DT)\\
&\ \  + I(CRS;T|ABD) + I(R;T|A) + I(R;T|C)\\
&\ \  + I(A;C|RT) + I(BDS;T|ACR) + I(C;S|T)\\
&\ \  + I(B;T|CS)+ I(S;T|A) + I(DR;T|ABCS)\\
&\ \  + I(S;T|R)  + I(BCD;T|ARS) \; ,
\end{align*}
and a (C2) term $5I(T;BDRS|AC)$ .  Then let RS be a copy of AD over BC.  The proof of \eqref{ineq3.3} uses the
Shannon inequality $0\leq $
\begin{align*}
&\ \  6I(A;R|B) + 6I(B;C|AR)+ I(C;D|R) + I(D;R|A)\\
&\ \    + I(D;R|B)+ I(AB;R|CD)+ I(A;B|DR)\\
&\ \    + 8I(D;R|ABC) + I(C;R|ABD) + I(A;S|B)\\
&\ \   + I(B;R|AS)+ 8I(A;S|BCR)  + I(A;R|BCS)\\
&\ \  + I(A;C|BRS)   + 10I(D;S|ABCR)
\end{align*}
and a (C2) term $10I(RS;AD|BC)$.

{\bf Proof of (\ref{ineq3.4}):} Let T be a BCS-copy of R over AD.  Let (L) be the non-negative expression
\begin{align*}
&\ \  -I(A;D)+I(A;B|CR)+I(A;D|S)+I(A;D|R)\\
&\ \  +I(A;D|BR)+2I(A;R|D)+I(C;D|R)+I(D;R|A)\\
&\ \  +I(C;S|ABR)+I(R;S) \; ,
\end{align*}
which is proved using the Shannon inequality $0\leq $
\begin{align*}
&\ \  I(BR;T|D) + I(CR;T|A)+ I(S;T|A)+ I(S;T|D)\\
&\ \        + I(A;D|ST)+ I(BCR;T|ADS)+ I(R;T|D)\\
&\ \  + I(A;D|BRT)     + I(C;T|BR) + I(A;B|CRT)\\
&\ \      + I(C;D|RT)+ I(ABS;T|CDR)    + I(CS;T|ABDR)\\
&\ \  + I(R;S|T)  + I(AB;T|RS)+ I(C;S|ABRT)\\
&\ \ + I(D;T|ABCRS) \; ,
\end{align*}
and a (C2) term $4I(T;BCRS|AD)$.  Then let RS be a copy of CD over AB.  The proof of \eqref{ineq3.4} uses the
Shannon inequality $0\leq $
\begin{align*}
&\ \ I(D;S|B)+ I(A;B|DS)+ 6I(C;R|ABD)\\
&\ \ + I(D;RS|ABC)       + 5I(D;R|B)+ 4I(A;B|DR) \\
&\ \   + I(C;R|A)+ I(C;R|B)   + I(AB;R|CD) \\
&\ \  + I(D;R|ABS) + 6I(D;S|ABR)+ 7I(C;S|ABDR)
\end{align*}
and a (C2) term $8I(RS;CD|AB)$.

{\bf Proof of (\ref{ineq3.5}):}  Let T be a CDRS-copy of (DR) over AB.  Let (L) be the non-negative expression
\begin{align*}
&\ \ -I(A;B)+2I(B;DR|A)+2I(A;DR|B)\\
&\ \ +2I(A;B|DR)+I(C;D)+I(A;B|C)\\
&\ \ +I(A;B|DS)+I(R;S|D) \; ,
\end{align*}
which is proved using the Shannon inequality $0\leq $
\begin{align*}
&\ \ I(C;T|A)+ I(C;T|B)+ I(A;B|CT)+ I(C;D|T)\\
&\ \ + I(ABRS;T|CD) + I(DRS;T|ABC)+ I(DS;T|A)\\
&\ \ + I(DS;T|B)+ I(A;B|DST)+ I(CR;T|ABDS)\\
&\ \ + I(DR;T|A)+ I(DR;T|B)+ I(A;B|DRT)\\
&\ \ + I(CS;T|ABDR)+ I(R;S|DT)+ I(ABC;T|DRS) \; ,
\end{align*}
and a (C2) term $5I(T;CDRS|AB)$.  Then let RS be a copy of CD over AB.  The proof of \eqref{ineq3.5} uses the
Shannon inequality $0\leq $
\begin{align*}
&\ \  I(D;S|A)+ I(D;S|B)+ 8I(C;R|ABD)\\
&\ \ + I(D;RS|ABC)+ 4I(D;R|A)+ 4I(D;R|B)\\
&\ \ + I(C;R|A)+ I(C;R|B) + I(A;B|CR)+ I(C;D|R)\\
&\ \  + I(AB;R|CD)+ I(AB;D|RS)+ I(D;R|ABS)\\
&\ \   + 7I(D;S|ABR) + I(C;S|ABR)+ 9I(C;S|ABDR)
\end{align*}
and a (C2) term $10I(RS;CD|AB)$.

{\bf Abbreviated Proof of (\ref{ineq3.5}):}
>From here on the proofs will be further abbreviated.  We begin with an abbreviation of the above proof.\newline
T: CDRS-copy of (DR) over AB.\newline
L: -a.b. +2b.dr.a +2a.dr.b +2a.b.dr +c.d. +a.b.c +a.b.ds +r.s.d\newline
SL: c.t.a +c.t.b +a.b.ct +c.d.t +abrs.t.cd +drs.t.abc +ds.t.a +ds.t.b +a.b.dst +cr.t.abds +dr.t.a +dr.t.b +a.b.drt +cs.t.abdr +r.s.dt +abc.t.drs
C2L: 5t.cdrs.ab\newline
RS:
copy of CD over AB.\newline
S: d.s.a +d.s.b +8c.r.abd +d.rs.abc +4d.r.a +4d.r.b +c.r.a +c.r.b +a.b.cr +c.d.r +ab.r.cd +ab.d.rs +d.r.abs +7d.s.abr +c.s.abr +9c.s.abdr
C2: 10rs.cd.ab\newline

{\bf Abbreviated Proof of (\ref{ineq3.6}):}
T: BCS-copy of R over AD.\newline
L: -2a.d. +a.b.c +ar.d.b +a.d.s +3a.d.r +2a.r.d +c.d. +d.r.a +c.s.abr +r.s.\newline
SL: b.t.d +ar.d.bt +c.t.a +c.t.b +a.b.ct +c.d.t +abrs.t.cd +s.t.a +s.t.d +a.d.st +bcr.t.ads +r.t.a +r.t.d +bcs.t.adr +a.d.rt +r.t.abc +cs.t.abdr +r.s.t +ab.t.rs +d.t.abcrs +c.s.abrt
C2L: 5t.bcrs.ad\newline
RS:
copy of CD over AB.\newline
S: d.s.b +a.b.ds +8c.r.abd +d.rs.abc +6d.r.b +6a.b.dr +c.r.a +c.r.b +a.b.cr +c.d.r +ab.r.cd +d.r.abs +8d.s.abr +9c.s.abdr
C2: 10rs.cd.ab\newline

{\bf Abbreviated Proof of (\ref{ineq3.7}):}
T: CDRS-copy of (CR) over AB.\newline
L: -a.b. +3a.cr.b +3b.cr.a +2a.b.cr +a.b.dr +a.b.cs +a.b.ds +c.d.r +c.d.s +r.s.\newline
SL: cr.t.a +cr.t.b +a.b.crt +dr.t.a +dr.t.b +a.b.drt +c.d.rt +ds.t.abcr +cs.t.abdr +abs.t.cdr +cs.t.a +cs.t.b +a.b.cst +ds.t.a +ds.t.b +a.b.dst +c.d.st +cr.t.abds +dr.t.abcs +abr.t.cds +r.s.t +abcd.t.rs
C2L: 7t.cdrs.ab\newline
RS:
copy of CD over AB.\newline
S: 5c.r.a +5c.r.b +d.r.a +d.r.b +ab.r.cd +11d.r.abc +2c.r.abd +2c.s.a +2c.s.b +a.b.cs +d.s.a +d.s.b +ab.s.cd +r.s.c +ab.c.rs +10c.s.abr +3c.r.abs +d.r.abs +13d.s.abcr +d.r.abcs +c.s.abdr
C2: 15rs.cd.ab\newline

{\bf Abbreviated Proof of (\ref{ineq3.8}):}
T: BS-copy of C over ADR.\newline
L: -2a.c. +a.b.d +3ad.c.r +a.r.b +a.r.c +a.r.s +bd.s.ac +c.d. +c.r.ad +c.s.\newline
SL: b.t.a +d.t.b +a.b.dt +c.t.abd +b.t.r +a.r.bt +c.t.r +a.r.ct +cds.t.abr +bds.t.acr +s.t.a +c.s.t +a.t.cs +bd.s.act +s.t.r +a.r.st +bcd.t.ars +r.t.abcds
C2L: 4t.bcs.adr\newline
RS:
copy of BD over AC.\newline
S: b.r.c +a.c.br +4d.r.a +d.r.abc +4b.r.acd +5bd.s.acr +bd.r.acs
C2: 6rs.bd.ac\newline

{\bf Abbreviated Proof of (\ref{ineq3.9}):}
T: CDRS-copy of (DR) over AB.\newline
L: -a.b. +3b.dr.a +3a.dr.b +a.b.cr +2a.b.dr +a.b.cs +a.b.ds +c.d.r +c.d.s +r.s.\newline
SL: cr.t.a +cr.t.b +dr.t.a +dr.t.b +cs.t.a +cs.t.b +ds.t.a +ds.t.b +a.b.crt +a.b.drt +a.b.cst +a.b.dst +c.d.rt +c.d.st +r.s.t +abcd.t.rs +abr.t.cds +abs.t.cdr +cr.t.abds +dr.t.abcs +cs.t.abdr +ds.t.abcr
C2L: 7t.cdrs.ab\newline
RS:
copy of CD over AB.\newline
S: 2c.r.a +2c.r.b +a.b.cr +5d.r.a +5d.r.b +c.d.r +2ab.r.cd +2d.r.abc +8c.r.abd +c.s.a +c.s.b +d.s.a +d.s.b +ab.s.cd +12cd.s.abr +c.r.abs +d.r.abs +c.rs.abd +cd.r.abs +d.rs.abc
C2: 15rs.cd.ab\newline

{\bf Abbreviated Proof of (\ref{ineq3.10}):}
T: DS-copy of B over ACR.\newline
L: -a.b. +a.br.d +3a.b.cr +a.b.cs +a.cr.b +b.c.rs +b.r.ac +c.d. +c.s.a +d.s.abc +r.s.c\newline
SL: a.br.dt +a.bc.st +a.r.bct +ab.t.cd +ad.t.bcrs +2b.t.cr +b.t.rs +c.d.t +c.s.brt +cs.t.abdr +d.s.abct +d.t.a +d.t.br +ds.t.abcr +r.s.t +r.t.abcds +s.t.a +s.t.bc
C2L: 4t.bds.acr\newline
RS:
copy of CD over AB.\newline
S: a.b.dr +a.b.cs +ab.c.rs +a.c.brs +c.d.r +2cd.r.abs +7d.rs.abc +cd.s.abr +2c.rs.abd +5c.s.abr +c.s.a +2c.s.b +ab.r.cd +2d.s.abr +d.r.a +2d.r.b +4c.r.a +4c.r.b +d.s.abc +r.s.c
C2: 12rs.cd.ab\newline

{\bf Abbreviated Proof of (\ref{ineq3.11}):}
T: DS-copy of A over BCR.\newline
L: -a.b. +2a.b.c +c.d. +2a.bc.r +c.r.a +a.r.bc +d.r.bc +a.b.ds +r.s.d\newline
SL: a.t.c +ds.t.a +ds.t.b +c.d.t +b.t.cd +a.t.r +c.r.at +a.b.crt +d.t.r +d.r.bct +ds.t.abcr +as.t.bcdr +a.b.dst +cr.t.abds +r.s.dt +abc.t.drs
C2L: 4t.ads.bcr\newline
RS:
copy of CD over AB.\newline
S: 5c.r.b +a.b.cr +2d.r.a +2d.r.b +2a.b.dr +c.d.r +ab.r.cd +4d.r.abc +4c.r.abd +a.r.bcd +d.s.a +d.s.b +8cd.s.abr +ab.d.rs +d.r.abs +2c.s.abdr
C2: 10rs.cd.ab\newline

{\bf Abbreviated Proof of (\ref{ineq3.12}):}
T: CS-copy of B over ADR.\newline
L: -a.b. +a.br.c +cd.r.b +a.d.b +c.d. +b.r.ad +d.r.b +3a.b.dr +d.s.a +a.b.ds +r.s.d +b.d.rs\newline
SL: c.t.a +a.br.ct +c.d.t +b.t.cd +c.t.br +2d.r.t +2b.t.dr +c.r.bdt +ds.t.abcr +as.t.bcdr +s.t.a +a.bd.st +s.t.bd +cr.t.abds +r.s.t +b.t.rs +d.s.brt +ac.t.bdrs
C2L: 4t.bcs.adr\newline
RS:
copy of CD over AB.\newline
S: 2c.r.a +3c.r.b +2a.b.cr +4d.r.a +2d.r.b +2c.d.r +2ab.r.cd +3d.r.abc +4c.r.abd +a.r.bcd +d.s.b +10cd.s.abr +d.r.abs +a.d.brs +2c.rs.abd
C2: 12rs.cd.ab\newline

{\bf Abbreviated Proof of (\ref{ineq3.13}):}
T: DS-copy of C over ABR.\newline
L: -b.c. +a.b.c +a.b.d +cr.d. +a.r.c +c.r.ab +abd.r.c +3b.c.ar +a.s.b +c.s.a +a.s.r +b.r.as +a.c.brs\newline
SL: d.t.a +d.t.b +a.b.dt +cr.d.t +c.t.abd +2a.r.t +2c.t.ar +bd.r.act +a.bs.crt +abs.t.cdr +ac.s.t +s.t.b +bdr.t.acs +s.t.r +b.r.st +c.t.brs +s.t.abcdr +d.t.abcrs
C2L: 4t.cds.abr\newline
RS:
copy of AD over BC.\newline
S: 4a.r.b +2a.r.c +d.r.a +d.r.b +a.b.dr +c.d.r +2ab.r.cd +8d.rs.abc +c.r.abd +a.s.b +c.s.a +2a.br.cs +a.s.r +b.r.as +4a.s.bcr +a.c.brs +4ad.s.bcr
C2: 12rs.ad.bc\newline

{\bf Abbreviated Proof of (\ref{ineq3.14}):}
T: DS-copy of B over ACR.\newline
L: -b.c. +a.c.b +a.b.d +c.d. +a.r.b +b.r.ac +3b.c.ar +ad.r.b +c.r.bd +c.s.a +a.b.cs +a.s.r +b.r.as\newline
SL: d.t.a +a.b.dt +b.t.cd +c.d.t +2a.r.t +2b.t.ar +d.t.br +d.r.abt +c.r.bdt +cs.t.abdr +as.t.bcdr +s.t.ab +c.s.t +b.t.cs +a.c.bst +dr.t.abcs +s.t.r +b.r.st +a.s.brt +cd.t.abrs
C2L: 4t.bds.acr\newline
RS:
copy of AD over BC.\newline
S: 2a.r.b +4a.r.c +d.r.a +2d.r.b +a.b.dr +c.d.r +ab.r.cd +8d.rs.abc +2c.r.abd +a.r.bcd +2a.s.b +c.s.a +a.br.cs +a.s.r +b.r.as +4a.s.bcr +a.r.bcs +2a.c.brs +4ad.s.bcr
C2: 12rs.ad.bc\newline

{\bf Abbreviated Proof of (\ref{ineq3.15}):}
T: DR-copy of A over BCS.\newline
L: -a.b. +2a.bs.c +b.c.a +a.b.d +c.d. +c.r.b +c.s.a +a.s.bc +a.b.cs +r.s.c +a.c.rs\newline
SL: a.t.c +d.t.a +d.t.b +a.b.dt +c.d.t +abrs.t.cd +crs.t.abd +r.t.b +r.t.c +c.rs.at +b.c.rt +ads.t.bcr +r.s.t +a.t.rs +bd.t.acrs
C2L: 4t.adr.bcs\newline
RS:
copy of CD over AB.\newline
S: c.r.a +a.b.cr +6d.rs.abc +4c.s.b +c.s.abr +4c.r.abs +b.c.ars
C2: 6rs.cd.ab\newline

{\bf Abbreviated Proof of (\ref{ineq3.16}):}
T: CS-copy of B over ADR.\newline
L: -a.b. +a.b.c +a.d.b +c.d. +r.ac.b +d.r.b +3a.b.dr +b.r.ad +d.r.bc +d.s.a +a.b.ds +r.s.d +b.d.rs\newline
SL: c.t.a +c.d.t +b.t.cd +c.t.br +a.br.ct +2d.r.t +2b.t.dr +c.r.bdt +s.t.a +s.t.bd +a.bd.st +r.s.t +b.t.rs +d.s.brt +t.ac.bdrs +t.cr.abds +t.ds.abcr +t.as.bcdr
C2L: 4t.bcs.adr\newline
RS:
copy of CD over AB.\newline
S: c.r.b +2a.d.r +2d.r.a +2d.r.b +a.r.bcd +4c.r.abd +d.r.abc +d.s.b +a.d.brs +d.r.abs +2c.rs.abd +6s.cd.abr
C2: 8rs.cd.ab\newline
\end{proof}


\section{Inequalities Using Three Copy Variables and Three Copy Steps}\label{ThreeCopiesB}

In this section we will continue the list of inequalities that involve three copy variables.
The inequalities in this section will also involve three copy steps, and hence two lemmas.
We will give the first one in full detail and then, as before, we will quickly turn to
abbreviated versions of the proofs.

\begin{theorem}
The following is an information inequality
\begin{eqnarray*}
aI(A;B) &\leq& bI(A;B|C) + cI(A;C|B) + dI(B;C|A) \\
      &+&  eI(A;B|D) + fI(A;D|B)+ gI(B;D|A)\\
      &+& hI(C;D) + iI(C;D|A)
\end{eqnarray*}
for each of the following values of $(a,b,c,d,e,f,g,h,i)$.
\begin{align}
&&(3&,&4&,&9&,&3&,&6&,&3&,&0&,&3&,&0)&&&& \label{ineq3.17}\\
&&(3&,&7&,&4&,&1&,&4&,&1&,&0&,&3&,&0)&&&& \label{ineq3.18}\\
&&(3&,&4&,&6&,&4&,&4&,&1&,&0&,&3&,&0)&&&& \label{ineq3.19}\\
&&(4&,&5&,&17&,&6&,&6&,&7&,&0&,&4&,&0)&&&& \label{ineq3.20}\\
&&(4&,&5&,&17&,&13&,&6&,&2&,&0&,&4&,&0)&&&& \label{ineq3.21}\\
&&(3&,&4&,&7&,&5&,&3&,&1&,&0&,&3&,&0)&&&& \label{ineq3.22}\\
&&(6&,&8&,&9&,&9&,&6&,&10&,&1&,&6&,&0)&&&& \label{ineq3.23}\\
&&(6&,&13&,&20&,&2&,&9&,&3&,&0&,&6&,&0)&&&& \label{ineq3.24}\\
&&(4&,&10&,&15&,&1&,&4&,&2&,&2&,&4&,&0)&&&& \label{ineq3.25}\\
&&(4&,&6&,&11&,&3&,&6&,&2&,&0&,&4&,&0)&&&& \label{ineq3.26}\\
&&(3&,&6&,&6&,&1&,&5&,&4&,&0&,&3&,&0)&&&& \label{ineq3.27}\\
&&(3&,&6&,&8&,&1&,&3&,&2&,&2&,&3&,&0)&&&& \label{ineq3.28}\\
&&(4&,&5&,&6&,&6&,&4&,&2&,&2&,&4&,&0)&&&& \label{ineq3.29}\\
&&(3&,&8&,&6&,&1&,&3&,&1&,&0&,&3&,&0)&&&& \label{ineq3.30}\\
&&(4&,&14&,&10&,&1&,&6&,&2&,&0&,&4&,&0)&&&& \label{ineq3.31}\\
&&(3&,&4&,&4&,&3&,&3&,&4&,&2&,&3&,&0)&&&& \label{ineq3.32}\\
&&(4&,&13&,&9&,&1&,&7&,&3&,&0&,&4&,&0)&&&& \label{ineq3.33}\\
&&(6&,&8&,&16&,&7&,&6&,&3&,&3&,&6&,&0)&&&& \label{ineq3.34}
\end{align}
\end{theorem}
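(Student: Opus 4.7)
The plan is to prove each of the eighteen inequalities \eqref{ineq3.17}--\eqref{ineq3.34} by the same general template used for Theorem \ref{thm:many-inequalities}, but now employing three separate applications of the Copy Lemma rather than two. For each target inequality with coefficients $(a,b,c,d,e,f,g,h,i)$, one introduces three auxiliary random variables $R$, $S$, $T$, each via its own instance of Lemma \ref{copyLemma}, and builds the proof through a nested pair of intermediate non-Shannon inequalities before arriving at the stated 4-variable bound.

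Specifically, I would organize each proof into three layers. In the innermost layer, introduce one copy variable (say $T$) relative to a chosen subset of $\{A,B,C,D,R,S\}$, and assemble a 7-variable identity whose right-hand side is a first intermediate inequality $L_1\ge 0$ in $A,B,C,D,R,S$, together with nonnegative Shannon terms marked (S) and terms marked (C1), (C2) that vanish by the $T$-copy construction. In the middle layer, introduce $S$ via a second copy specification over the appropriate subset, and combine $L_1$ (now nonnegative by the innermost lemma) with additional Shannon terms and the (C1)/(C2) conditions for $S$ to produce a second intermediate inequality $L_2\ge 0$ in $A,B,C,D,R$. In the outer layer, introduce $R$ via the final copy specification and combine $L_2$ with the outermost list of Shannon terms plus (C1)/(C2) for $R$ to obtain the desired 4-variable inequality. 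Each of the three identities is then written in the abbreviated format of Section \ref{ThreeCopies}, recording only the copy specification, the list of Shannon summands, the (C2) coefficient, and the intermediate lemma invoked.

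Once the three copy specifications and the three lists of Shannon coefficients are fixed, verifying each identity reduces to a mechanical check: expand every conditional mutual information and conditional entropy into joint entropies and confirm the linear combination cancels exactly. The real obstacle is not verification but \emph{discovery} of the right specifications and coefficients for each of the eighteen cases. For this I would follow the methodology of Section \ref{sec:search}: maintain the running list of already-established inequalities (Shannon together with \eqref{ineq3.1}--\eqref{ineq3.16} and any of \eqref{ineq3.17}--\eqref{ineq3.34} already proved), recompute the extreme rays of the current polyhedral cone with Cddlib \cite{cddlib}, and then exhaustively enumerate ordered triples of single-variable copy specifications (modulo the symmetry action permuting $A,B,C,D$) that introduce three auxiliary variables in three separate steps. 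For each surviving triple, I would test whether the copy-lemma equations together with the equations forcing an entropy vector onto a given symmetry class of extreme rays are incompatible with the Shannon inequalities; any incompatibility yields, via ITIP \cite{ITIP} or the faster C-program mentioned in the Introduction, the Shannon certificate that closes the proof. The hardest part is computational: the search space is enormous, and keeping it feasible requires aggressive pruning by symmetry and by compatibility with the Shannon-only extreme rays, exactly as described in Section \ref{sec:search}. Once a Shannon certificate is produced for a given target, the resulting identity is purely algebraic and can be checked independently line by line.
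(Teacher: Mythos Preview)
Your proposal is correct and follows essentially the same approach as the paper: each inequality is proved via three nested applications of the Copy Lemma, producing two intermediate lemmas $L_1$ and $L_2$ exactly as you describe, with the specific copy specifications and Shannon certificates discovered through the extreme-ray search methodology of Section~\ref{sec:search}.
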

\begin{proof}
{\bf Proof of (\ref{ineq3.17}):}
We first show the following 6-variable information inequality:
\begin{eqnarray}\label{lem4.1}
   I(A;B) &\le& 2I(A;BRS|D)+I(A;S) \nonumber\\
      && +I(A;BR|CS)+I(C;D|S)+I(B;C|RS)\nonumber\\
      && +I(D;R|AS) +I(A;D|BRS).
\end{eqnarray}
To see this, verify the following 7-variable identity by expanding into entropies
and canceling:
\begin{align*}
&I(A;B)&\\
      &\ \  + I(ARS;T|B)                                                        & \text{(S)} \\
      &\ \  + I(AS;T|D)                                                         & \text{(S)} \\
      &\ \  + I(C;T|AS)                                                          & \text{(S)} \\
      &\ \  + I(S;T|D)                                                        & \text{(S)} \\
      &\ \  + I(C;D|ST)                                                      & \text{(S)} \\
      &\ \  + I(ABR;T|CDS)                                                        & \text{(S)} \\
      &\ \  + I(C;T|RS)                                                        & \text{(S)} \\
      &\ \  + I(A;R|CST)                                                       & \text{(S)} \\
      &\ \  + I(B;C|ARST)                                                         & \text{(S)} \\
      &\ \  + I(D;R|AST)                                                       & \text{(S)} \\
      &\ \  + I(BC;T|ADRS)                                                       & \text{(S)} \\
      &\ \  + I(D;T|ABCRS)                                                         & \text{(S)} \\
      & = 2I(A;BRS|D)+I(A;S)+I(A;BR|CS)&\\
      & +I(C;D|S)+I(B;C|RS)+I(D;R|AS)&\\
      & +I(A;D|BRS) &\\
      &\ \   + 3I(T;AC|BDRS)                                                     & \text{(C2)} \\
      &\ \   + (H(TB)-H(AB))                                                      & \text{(C1)} \\
      &\ \   + 2(H(TD)-H(AD))                                                    & \text{(C1)} \\
      &\ \   - (H(TS)-H(AS))                                                   & \text{(C1)} \\
      &\ \   + (H(TRS)-H(ARS))                                                    & \text{(C1)} \\
      &\ \   - 3(H(TBDRS)-H(ABDRS))                                                     & \text{(C1)} \\
\end{align*}

Each of the terms in the lines marked (S) is a conditional mutual
information and is therefore nonnegative.  Thus, if the terms marked (S)
are erased and the "=" is replaced by "$\leq$", then we obtain a 7-variable
Shannon-type inequality.  By the Copy Lemma, we may choose T such
that T is a C-copy of A over BDRS.  Then the term marked (C2) is zero
by condition (C2), and each of the terms marked (C1) equals zero
by condition (C1).

With \eqref{lem4.1} verified, we next show the following 5-variable information inequality:
\begin{eqnarray}\label{lem4.2}
   2I(A;B) &\le& 2I(A;BR|C)+5I(A;BR|D) \nonumber\\
      && +I(A;DR|B)+2I(C;D)+2I(B;C|R)\nonumber\\
      && +I(D;R|A)+2I(A;D|BR).
\end{eqnarray}
To see this, verify the following 6-variable identity by expanding into entropies
and canceling:
\begin{align*}
&2I(A;B)&\\
      &\ \  + 2I(AR;S|B)                                                       & \text{(S)} \\
      &\ \  + 2I(C;S|A)                                                         & \text{(S)} \\
      &\ \  + I(A;S|D)                                                          & \text{(S)} \\
      &\ \  + I(C;D|S)                                                       & \text{(S)} \\
      &\ \  + 2I(ABR;S|CD)                                                    & \text{(S)} \\
      &\ \  + 2I(C;S|R)                                                        & \text{(S)} \\
      &\ \  + I(A;R|CS)                                                       & \text{(S)} \\
      &\ \  + I(B;C|ARS)                                                       & \text{(S)} \\
      &\ \  + I(BC;S|ADR)                                                         & \text{(S)} \\
      &\ \  + 3I(C;S|ABDR)                                                       & \text{(S)} \\
      &\ \  2I(D;S|ABCR)                                                      & \text{(S)} \\
      &\ \  [-I(A;B)+2I(A;BRS|D)+I(A;S) \\
      &\ \  +I(A;BR|CS)+I(C;D|S)+I(B;C|RS) \\
      &\ \  +I(D;R|AS)+I(A;D|BRS)]                                              & \text{(L1)} \\
      & = 2I(A;BR|C)+5I(A;BR|D)+I(A;DR|B)&\\
      & +2I(C;D)+2I(B;C|R)+I(D;R|A) &\\
      & +2I(A;D|BR) + 8I(S;AC|BDR) &\\
      &\ \   + 8I(S;AC|BDR)                                                     & \text{(C2)} \\
      &\ \   - (H(S)-H(A))                                                      & \text{(C1)} \\
      &\ \   + 2(H(SB)-H(AB))                                                     & \text{(C1)} \\
      &\ \   + 3(H(SD)-H(AD))                                                    & \text{(C1)} \\
      &\ \   + (H(SR)-H(AR))                                                     & \text{(C1)} \\
      &\ \   - 5(H(SBDR)-H(ABDR))                                                      & \text{(C1)} \\
\end{align*}

The expression in brackets marked with (L1) is nonnegative by \eqref{lem4.1}.
Each of the terms in the lines marked (S) is a conditional mutual
information and is therefore nonnegative.  Thus, if the terms marked (S)
are erased and the "=" is replaced by "$\leq$", then we obtain a 6-variable
Shannon-type inequality.  By the Copy Lemma, we may choose S such
that S is a C-copy of A over BDR.  Then the term marked (C2) is zero
by condition (C2), and each of the terms marked (C1) equals zero
by condition (C1).

With \eqref{lem4.2} verified, we now continue with the proof of \eqref{ineq3.17}.
By expanding mutual informations into entropies and cancelling
terms, one can verify the following 5-variable identity:
\begin{align*}
&3I(A;B)&\\
      &\ \ + 3I(A;R|B)                                                         & \text{(S)} \\
      &\ \ + I(A;R|C)                                                          & \text{(S)} \\
      &\ \ + I(B;C|AR)                                                         & \text{(S)} \\
      &\ \ + 9I(D;R|B)                                                        & \text{(S)} \\
      &\ \ + I(A;B|DR)                                                       & \text{(S)} \\
      &\ \ + I(C;D|R)                                                        & \text{(S)} \\
      &\ \ + I(AB;R|CD)                                                        & \text{(S)} \\
      &\ \ + 9I(C;R|ABD)                                                        & \text{(S)} \\
      &\ \ + 3I(D;R|ABC)                                                        & \text{(S)} \\
      &\ \ + [-2I(A;B)+2I(A;BR|C)+5I(A;BR|D)\\
      &\ \ +I(A;DR|B)+2I(C;D)+2I(B;C|R)\\
      &\ \ +I(D;R|A)+2I(A;D|BR)]                                                & \text{(L2)} \\
      & = 4I(A;B|C)+9I(A;C|B)+3I(B;C|A)\\
      &\ \ +6I(A;B|D)+3I(A;D|B)+3I(C;D)\\
      &\ \ + 13I(R;AD|BC)                                                       & \text{(C2)} \\
      &\ \ - 3(H(R)-H(A))                                                       & \text{(C1)} \\
      &\ \ + 12(H(RB)-H(AB))                                                   & \text{(C1)} \\
      &\ \ + 4(H(RC)-H(AC))                                                    & \text{(C1)} \\
      &\ \ - 13(H(RBC)-H(ABC))                                                 & \text{(C1)} \\
\end{align*}
The expression in brackets ``[]" marked (L) is nonnegative by Lemma 11.1.
Each of the terms in the lines marked (S) is a conditional mutual
information and is therefore nonnegative.  Thus, if the terms marked (S)
are erased and the "=" is replaced by "$\le$", then we obtain a 5-variable
Shannon-type inequality.  By the Copy Lemma, we may choose R such
that R is a D-copy of C over AB.  Then the term marked (C2) is zero
by condition (C2), and each of the terms marked (C1) equals zero
by condition (C1).    This finishes the proof of \eqref{ineq3.1}.

The expression in brackets marked with (L2) is nonnegative by \eqref{lem4.2}.
Each of the terms in the lines marked (S) is a conditional mutual
information and is therefore nonnegative.  Thus, if the terms marked (S)
are erased and the "=" is replaced by "$\leq$", then we obtain a 5-variable
Shannon-type inequality.  By the Copy Lemma, we may choose R such
that R is a D-copy of A over BC.  Then the term marked (C2) is zero
by condition (C2), and each of the terms marked (C1) equals zero
by condition (C1).

{\bf Abbreviated Proof of (\ref{ineq3.17}):}

We immediately give the abbreviated proof of the same inequality, using the
same abbreviations as in the last section, the main difference being that this time
there are two lemmas used in the proof.

T: C-copy of A over BDRS.\newline
L1: -a.b. +2a.brs.d +a.s. +a.br.cs +c.d.s +b.c.rs +d.r.as +a.d.brs\newline
SL1: ars.t.b +as.t.d +c.t.as +s.t.d +c.d.st +abr.t.cds +c.t.rs +a.r.cst +b.c.arst +d.r.ast +bc.t.adrs +d.t.abcrs
C2L1: 3t.ac.bdrs\newline
S: C-copy of A over BDR.\newline
L2: -2a.b. +2a.br.c +5a.br.d +a.dr.b +2c.d. +2b.c.r +d.r.a +2a.d.br\newline
SL2: 2ar.s.b +2c.s.a +a.s.d +c.d.s +2abr.s.cd +2c.s.r +a.r.cs +b.c.ars +bc.s.adr +3c.s.abdr +2d.s.abcr\newline
C2L2: 8s.ac.bdr\newline
R: D-copy of A over BC.\newline
S: 3a.r.b +a.r.c +b.c.ar +9d.r.b +a.b.dr +c.d.r +ab.r.cd +9c.r.abd +3d.r.abc\newline
C2: 13r.ad.bc\newline

{\bf Abbreviated Proof of (\ref{ineq3.18}):}
T: BCS-copy of R over AD.\newline
L1: -a.b. +a.b.c +a.b.d +c.d. +a.d.r +a.r.d +d.r.a\newline
SL1: b.t.a +c.t.a +c.t.b +a.b.ct +b.t.d +a.d.bt +c.d.t +abrs.t.cd +crs.t.abd +drs.t.abc
C2L1: 3t.bcrs.ad\newline
S: BD-copy of R over AC.\newline
L2: -2a.b. +2a.b.c +2a.b.d +2c.d. +a.c.r +a.r.c +c.r.a +a.d.r +a.r.d +d.r.a\newline
SL2: b.s.a +b.s.c +a.c.bs +d.s.a +d.s.b +a.b.ds +c.d.s +abr.s.cd +cr.s.abd +dr.s.abc\newline
C2L2: 3s.bdr.ac\newline
R: D-copy of C over AB.\newline
S: 3c.r.b +3a.b.cr +3d.r.b +3a.b.dr +c.d.r +ab.r.cd +3c.r.abd +3d.r.abc\newline
C2: 7r.cd.ab\newline

{\bf Abbreviated Proof of (\ref{ineq3.19}):}
T: CS-copy of A over BDR.\newline
L1: -a.b. +a.br.c +2a.br.d +c.d. +b.c.r +d.r.a +a.d.br\newline
SL1: ar.t.b +c.t.a +a.t.d +c.d.t +abrs.t.cd +c.t.r +a.r.ct +b.c.art +d.r.at +bcs.t.adr +ds.t.abcr
C2L1: 3t.acs.bdr\newline
S: BD-copy of R over AC.\newline
L2: -2a.b. +a.b.c +a.b.d +2c.d. +a.c.r +a.r.c +c.r.a +b.c.r +a.br.c +d.r.a +a.d.br +2a.br.d\newline
SL2: b.s.a +b.s.c +a.c.bs +d.s.a +d.s.b +a.b.ds +c.d.s +abr.s.cd +cr.s.abd +dr.s.abc\newline
C2L2: 3s.bdr.ac\newline
R: D-copy of A over BC.\newline
S: 4a.r.b +3b.c.ar +4d.r.b +a.b.dr +c.d.r +ab.r.cd +4c.r.abd +4d.r.abc\newline
C2: 9r.ad.bc\newline

{\bf Abbreviated Proof of (\ref{ineq3.20}):}
T: D-copy of C over ABRS.\newline
L1: -2a.c. +2brs.c.a +2a.b.d +c.d. +2abs.c.r +a.r.c +a.c.br +2c.r.bd +c.ds. +a.s.bcd +a.r.cs +as.c.br +d.s.abr +d.r.abcs +2d.s.abcr\newline
SL1: c.t.a +2d.t.a +2a.b.dt +c.d.t +b.t.cd +c.t.abd +c.t.r +a.r.ct +b.t.acr +2d.t.br +r.t.abd +2c.r.bdt +2a.t.bcdr +cs.t.a +s.t.d +c.d.st +b.t.cds +a.s.bcdt +cs.t.r +a.r.cst +b.c.arst +cd.r.abst +c.s.abrt +d.s.abrt +c.t.abdrs +2cd.s.abrt +2d.t.abcrs
C2L1: 6t.cd.abrs\newline
S: C-copy of A over BDR.\newline
L2: -3a.b. +3a.br.c +2a.cr.b +2br.c.a +5a.br.d +2a.d.b +3c.d. +a.r.c +3b.c.r +d.r.a +5a.d.br +2c.r.bd +d.r.abc\newline
SL2: 7ar.s.b +c.s.a +a.s.d +c.d.s +abr.s.cd +b.s.cd +dr.s.abc +c.s.r +b.c.ars +d.r.as +bc.s.adr +c.s.abdr +3d.s.abcr\newline
C2L2: 10s.ac.bdr\newline
R: D-copy of A over BC.\newline
S: 8a.r.b +b.c.ar +13d.r.b +a.b.dr +c.d.r +ab.r.cd +2a.r.bcd +11c.r.abd +7d.r.abc\newline
C2: 22r.ad.bc\newline

{\bf Abbreviated Proof of (\ref{ineq3.21}):}
T: C-copy of A over BDRS.\newline
L1: -2a.b. +a.bs.c +2a.br.d +2c.d. +b.c.r +a.br.c +d.r.a +a.ds.br +b.c.s +d.s.a +a.dr.bs +2a.brs.d +2c.s.bdr +c.r.abds +c.s.abdr\newline
SL1: as.t.b +2c.t.a +2a.t.d +2c.d.t +2br.t.cd +ar.t.b +c.t.r +a.r.ct +b.c.art +d.r.at +b.t.adr +d.t.abcr +c.t.s +a.s.ct +b.c.ast +d.s.at +b.t.ads +d.t.abcs +a.r.bdst +a.s.bdrt +2c.s.bdrt +2a.t.bcdrs +ac.r.bdst +ac.s.bdrt +2c.t.abdrs
C2L1: 6t.ac.bdrs\newline
S: BD-copy of A over CR.\newline
L2: -3a.c. +2a.cr.b +4a.br.d +3c.d. +8a.c.r +a.r.c +2c.r.a +2b.c.r +a.r.d +d.r.a +2a.d.br +d.r.abc\newline
SL2: a.s.c +c.d.s +abr.s.cd +d.s.abc +a.s.r +6bd.s.r +ab.s.r +c.r.as +bd.s.acr +d.s.r +a.r.ds +bc.s.adr +d.r.abs +2a.s.bcdr +3c.s.abdr\newline
C2L2: 10s.abd.cr\newline
R: D-copy of A over BC.\newline
S: 13a.r.b +11b.c.ar +8d.r.b +2a.b.dr +c.d.r +ab.r.cd +8c.r.abd +12d.r.abc\newline
C2: 22r.ad.bc\newline

{\bf Abbreviated Proof of (\ref{ineq3.22}):}
T: C-copy of A over BDRS.\newline
L1: -a.c. +a.brs.d +c.d. +cd.r.a +a.bs.dr +ab.c.s +d.s.a +a.dr.bs +c.r.abds +c.s.abdr\newline
SL1: c.t.a +c.d.t +a.t.cd +d.r.t +a.t.dr +c.r.adt +b.t.acdr +a.t.bs +c.t.s +a.s.ct +b.c.ast +d.s.at +b.t.ads +d.t.abcs +a.r.bdst +ac.r.bdst +ac.s.bdrt +c.t.abdrs
C2L1: 3t.ac.bdrs\newline
S: D-copy of A over BCR.\newline
L2: -2a.c. +a.cr.b +2br.c.a +a.b.d +2c.d. +2cd.r.a +a.b.cr +5a.c.br +a.b.dr +a.d.br +a.br.d +d.r.ab +d.r.abc\newline
SL2: d.s.b +a.b.ds +c.d.s +a.s.cd +d.s.abc +3a.s.br +c.r.s +a.s.cr +c.br.as +3d.s.br +2d.r.abs +d.r.acs +b.s.acdr +3c.s.abdr +d.s.abcr\newline
C2L2: 7s.ad.bcr\newline
R: D-copy of C over AB.\newline
S: c.r.a +7c.r.b +d.r.a +3d.r.b +c.d.r +ab.r.cd +2b.r.acd +4c.r.abd +8d.r.abc\newline
C2: 16r.cd.ab\newline

{\bf Abbreviated Proof of (\ref{ineq3.23}):}
T: D-copy of A over BCRS.\newline
L1: -3a.c. +a.crs.b +bs.c.a +a.b.d +c.d. +2a.bs.cr +5a.c.br +2c.br.a +2a.b.dr +d.r.ab +2c.dr. +a.br.cs +c.s.ad +3ad.s.bcr +2d.s.bcr +d.r.abcs\newline
SL1: d.t.b +a.b.dt +c.d.t +a.t.cd +3a.t.br +2c.r.t +2a.t.cr +2b.c.art +2d.t.ar +2d.t.br +2a.b.drt +d.r.abt +2c.dr.t +2b.t.cdr +3c.t.abdr +s.t.ab +c.s.t +a.t.cs +b.c.ast +d.t.as +c.s.adt +b.t.acds +a.r.bcst +2a.s.bcrt +2d.s.bcrt +2a.t.bcdrs +r.ad.bcst +3s.ad.bcrt +3d.t.abcrs
C2L1: 9t.ad.bcrs\newline
S: C-copy of A over BDR.\newline
L2: -4a.c. +a.cr.b +3br.c.a +a.b.d +2c.d. +2a.b.cr +6a.c.br +4a.b.dr +10a.d.br +a.br.d +b.d.ar +d.r.ab +2c.dr. +c.r.ad +d.r.ac +d.r.abc\newline
SL2: c.d.s +a.s.cd +d.s.abc +2a.s.br +c.s.ar +9c.s.br +c.r.abs +2a.s.dr +b.d.ars +c.r.ads +d.r.acs +2a.s.bcdr +2b.s.acdr +c.s.abdr +4d.s.abcr\newline
C2L2: 13s.ac.bdr\newline
R: D-copy of C over AB.\newline
S: 3c.r.a +9c.r.b +6d.r.a +15d.r.b +2c.d.r +2b.r.acd +4r.ab.cd +17c.r.abd +11d.r.abc\newline
C2: 35r.cd.ab\newline

{\bf Abbreviated Proof of (\ref{ineq3.24}):}
T: C-copy of A over BDRS.\newline
L1: -3a.d. +2a.b.c +3a.brs.d +3c.d. +a.brs.c +d.r.a +c.ds.r +2a.bdr.s +2d.s.a +3a.dr.bs +2d.s.bc +a.bd.rs +a.b.drs\newline
SL1: 3c.t.a +2a.b.ct +3a.t.d +3c.d.t +2b.t.cd +c.t.r +a.r.ct +d.r.at +2a.t.s +2c.t.bs +2d.s.at +2d.s.bct +a.t.rs +c.s.art +t.cr.bs +a.b.crst +r.t.ads +a.b.drst +c.d.rst +t.ab.cdrs +c.t.abdrs +d.t.abcrs +t.bcr.ads +2t.ar.bcds +t.abrs.cd +2t.drs.abc +t.bcs.adr
C2L1: 10t.ac.bdrs\newline
S: D-copy of C over ABR.\newline
L2: -4a.c. +4a.br.c +a.cr.b +a.b.d +3c.d. +a.r.c +11a.c.br +2b.c.ar +2d.r.a +a.b.dr +3a.d.br +6a.br.d +d.r.ab +c.d.r +d.cr. +2d.r.bc\newline
SL2: d.s.b +a.b.ds +2a.r.s +c.s.ar +4c.s.br +a.b.crs +r.ad.s +10d.s.br +d.r.abs +d.cr.s +2d.r.bcs +2a.s.bcdr +2s.ab.cdr +11c.s.abdr +2d.s.abcr\newline
C2L2: 17s.cd.abr\newline
R: D-copy of A over BC.\newline
S: 17a.r.b +a.r.c +11d.r.b +a.b.dr +c.d.r +2a.r.bcd +3r.ab.cd +12c.r.abd +18d.r.abc\newline
C2: 35r.ad.bc\newline

{\bf Abbreviated Proof of (\ref{ineq3.25}):}
T: C-copy of A over BDRS.\newline
L1: -2a.b. +2a.b.c +2b.d.a +2c.d. +d.ar.b +d.r.ac +a.b.drs +2a.brs.d +2a.rs.bd +a.bs.dr +d.ars.b +d.rs.ac\newline
SL1: 2c.t.b +2a.b.ct +2c.d.t +2a.t.cd +t.ar.b +c.t.ar +d.r.t +a.t.dr +b.d.art +d.r.act +t.ars.b +c.t.ars +d.rs.t +a.t.drs +b.d.arst +d.rs.act +b.t.acdrs +c.t.abdrs +2t.drs.abc +t.cs.abdr +t.bs.acdr
C2L1: 6t.ac.bdrs\newline
S: D-copy of A over BCR.\newline
L2: -3a.b. +2a.c.b +2c.d. +5a.cr.b +5a.br.c +b.c.ar +a.b.dr +2a.dr.b +3a.br.d +b.d.ar +d.br.a +d.cr. +2d.r.ac\newline
SL2: 2s.ar.b +r.s.c +2s.ar.c +b.c.ars +d.s.ar +6s.dr.b +c.d.rs +b.s.acdr +s.ab.cdr +7c.s.abdr +d.s.abcr\newline
C2L2: 10s.ad.bcr\newline
R: D-copy of A over BC.\newline
S: 3a.r.b +3a.r.c +2d.r.a +6d.r.b +c.d.r +2b.r.acd +2r.ab.cd +8c.r.abd +11d.r.abc\newline
C2: 23r.ad.bc\newline

{\bf Abbreviated Proof of (\ref{ineq3.26}):}
T: C-copy of A over BDRS.\newline
L1: -2a.c. +2c.d. +a.c.br +2d.r.a +a.c.bs +r.s.abc +s.dr.a +a.b.drs +2a.bd.rs +4a.brs.d +c.ds.r\newline
SL1: 2c.t.a +a.b.ct +2a.t.d +2c.d.t +c.t.r +a.r.ct +c.t.br +d.r.at +c.t.bs +r.s.t +2a.t.rs +c.s.art +r.s.bct +a.b.crst +d.rs.at +a.b.drst +c.d.rst +t.ab.cdrs +c.t.abdrs +d.t.abcrs +2t.abrs.cd +t.drs.abc +t.bcs.adr
C2L1: 7t.ac.bdrs\newline
S: D-copy of C over ABR.\newline
L2: -2a.b. +2a.b.c +2c.d. +c.r.a +c.r.b +2a.b.cr +2c.r.ab +c.br.a +5c.ar.b +2d.r.a +a.b.dr +2a.d.br +4a.br.d +c.d.r\newline
SL2: c.s.b +r.s.a +7s.dr.b +s.ab.cdr +7c.s.abdr +d.s.abcr\newline
C2L2: 9s.cd.abr\newline
R: D-copy of C over AB.\newline
S: c.r.a +c.r.b +8d.r.b +a.b.dr +c.d.r +2r.ab.cd +8c.r.abd +10d.r.abc\newline
C2: 20r.cd.ab\newline

{\bf Abbreviated Proof of (\ref{ineq3.27}):}
T: D-copy of A over BCRS.\newline
L1: -a.b. +c.d. +a.b.cs +b.as.d +c.rs.a +a.brs.c +a.bcs.r +d.r.bcs +d.r.abcs\newline
SL1: d.t.b +c.d.t +c.s.t +a.t.cs +d.t.as +b.as.dt +t.bs.cd +c.t.abds +t.as.r +c.r.ast +a.b.crst +d.r.bcst +a.t.bcdrs +r.ad.bcst +d.t.abcrs
C2L1: 3t.ad.bcrs\newline
S: C-copy of A over BDR.\newline
L2: -2a.c. +a.c.b +a.b.d +2c.d. +c.r.a +a.b.cr +2a.bc.r +d.r.a +a.b.dr +4a.d.br +3a.br.d +c.d.r +d.r.bc +d.r.abc\newline
SL2: c.s.a +2a.s.d +c.d.s +b.s.cd +c.s.abd +2a.s.r +4c.s.br +a.b.crs +d.r.as +a.b.drs +c.d.rs +a.s.bcdr +s.ab.cdr +c.s.abdr +2d.s.abcr +2s.dr.abc\newline
C2L2: 8s.ac.bdr\newline
R: D-copy of C over AB.\newline
S: 4c.r.b +8d.r.b +a.r.bcd +r.ab.cd +8c.r.abd +2d.r.abc\newline
C2: 13r.cd.ab\newline

{\bf Abbreviated Proof of (\ref{ineq3.28}):}
T: D-copy of B over ACRS.\newline
L1: -a.brs. +a.rs. +c.s.br +2a.b.crs +b.c.ars +c.as.br +a.b.drs +c.d.rs\newline
SL1: b.t.ars +c.s.rt +b.t.crs +a.c.brst +d.t.ars +d.t.brs +a.b.drst +c.ds.rt +t.ab.cdrs +c.t.abdrs +d.t.abcrs
C2L1: 3t.bd.acrs\newline
S: C-copy of D over ABR.\newline
L2: -2a.d. +2a.b.cr +a.c.br +b.c.ar +a.b.dr +4a.d.br +b.d.ar +d.br.a +2d.cr.\newline
SL2: 3s.cr.a +c.s.br +2d.s.ar +2d.s.br +a.b.drs +c.d.rs +2s.ab.cdr +2c.s.abdr +4d.s.abcr\newline
C2L2: 8s.cd.abr\newline
R: D-copy of A over BC.\newline
S: 8a.r.b +3a.r.c +2d.r.a +5d.r.b +c.d.r +3r.ab.cd +7c.r.abd +4d.r.abc\newline
C2: 14r.ad.bc\newline

{\bf Abbreviated Proof of (\ref{ineq3.29}):}
T: D-copy of C over ABRS.\newline
L1: -2a.b. +2a.b.c +a.b.cr +a.b.dr +c.dr. +a.b.crs +2c.rs.ab +2c.brs.a +2c.ars.b +a.b.drs +c.drs.\newline
SL1: t.cr.a +t.cr.b +a.b.crt +t.dr.a +t.dr.b +a.b.drt +c.d.rt +t.crs.a +t.crs.b +a.b.crst +t.drs.a +t.drs.b +a.b.drst +c.d.rst +t.ab.cdrs +c.t.abdrs +d.t.abcrs +t.ds.abcr +t.cs.abdr +t.abs.cdr
C2L1: 6t.cd.abrs\newline
S: C-copy of D over ABR.\newline
L2: -3a.b. +3a.b.c +a.b.cr +a.c.br +b.c.ar +3c.r.ab +c.br.a +c.ar.b +d.r.a +d.r.b +4a.b.dr +2d.br.a +2d.ar.b +3c.dr.\newline
SL2: s.cr.a +s.cr.b +2s.dr.a +2s.dr.b +a.b.drs +c.d.rs +2s.ab.cdr +2c.s.abdr +7d.s.abcr\newline
C2L2: 11s.cd.abr\newline
R: D-copy of C over AB.\newline
S: 2c.r.a +2c.r.b +3d.r.a +3d.r.b +c.d.r +4r.ab.cd +8c.r.abd +8d.r.abc\newline
C2: 20r.cd.ab\newline

{\bf Abbreviated Proof of (\ref{ineq3.30}):}
T: BC-copy of S over ADR.\newline
L1: -a.s. +a.r. +a.b.cr +a.d.br +c.d.r +a.r.s +2a.s.dr +d.s.ar\newline
SL1: 2a.r.t +b.t.ar +c.t.ar +c.t.br +a.b.crt +b.t.dr +a.d.brt +c.d.rt +t.ds.abcr +t.cs.abdr +t.abs.cdr
C2L1: 3t.bcs.adr\newline
S: D-copy of C over ABR.\newline
L2: -2a.c. +2a.r. +a.b.cr +4a.c.br +a.br.c +b.c.ar +3a.b.dr +a.d.br +2c.d.r\newline
SL2: a.r.s +c.s.ar +c.s.br +a.b.crs +3d.s.br +3a.b.drs +c.d.rs +s.ab.cdr +3c.s.abdr +d.s.abcr\newline
C2L2: 5s.cd.abr\newline
R: D-copy of A over BC.\newline
S: 6a.r.b +2a.r.c +3d.r.a +4d.r.b +c.d.r +3r.ab.cd +4c.r.abd +7d.r.abc\newline
C2: 14r.ad.bc\newline

{\bf Abbreviated Proof of (\ref{ineq3.31}):}
T: BCR-copy of S over AD.\newline
L1: 2(-a.b. +a.b.c +a.b.d +c.d. +a.d.s +a.s.d +d.s.a)\newline
SL1: b.t.a +c.t.a +c.t.b +a.b.ct +b.t.d +a.d.bt +c.d.t +t.drs.abc +t.crs.abd +t.abrs.cd
C2L1: 3t.bcrs.ad\newline
S: BD-copy of C over AR.\newline
L2: -3a.c. +2a.b.c +2a.d.b +3c.d. +6a.c.r +a.r.c +c.r.a +5a.r.d +d.r.a\newline
SL2: c.s.a +c.d.s +c.s.r +a.r.cs +5d.s.r +5a.r.ds +d.r.as +6s.bc.adr +s.bd.acr +s.abr.cd\newline
C2L2: 8s.bcd.ar\newline
R: D-copy of C over AB.\newline
S: 8c.r.b +8a.b.cr +6d.r.b +6a.b.dr +c.d.r +r.ab.cd +6c.r.abd +8d.r.abc\newline
C2: 15r.cd.ab\newline

{\bf Abbreviated Proof of (\ref{ineq3.32}):}
T: D-copy of A over BCRS.\newline
L1: -a.c. +c.d. +r.cd.a +c.s.a +b.d.s +a.bs.d +2a.cr.bs +a.bs.cr +d.r.abcs +d.s.abcr\newline
SL1: d.t.a +c.d.t +a.t.cd +c.r.t +a.t.cr +d.r.act +b.t.acdr +a.t.bs +c.s.at +b.t.acs +d.t.s +a.s.dt +b.d.ast +c.t.abds +a.r.bcst +r.ad.bcst +s.ad.bcrt +d.t.abcrs
C2L1: 3t.ad.bcrs\newline
S: C-copy of A over BDR.\newline
L2: -2a.c. +2c.d. +a.b.cr +2a.c.br +c.ar.b +a.b.dr +4a.d.br +2a.br.d +2d.br.a +2r.cd.a +d.r.abc\newline
SL2: c.s.b +a.b.cs +c.d.s +a.s.cd +d.s.abc +3a.s.br +3c.s.br +c.r.abs +d.r.s +a.s.dr +d.r.abs +d.br.as +c.r.ads +b.s.acdr +2c.s.abdr +2d.s.abcr\newline
C2L2: 7s.ac.bdr\newline
R: D-copy of C over AB.\newline
S: c.r.a +3c.r.b +d.r.a +7d.r.b +c.d.r +2b.r.acd +r.ab.cd +9c.r.abd +3d.r.abc\newline
C2: 16r.cd.ab\newline

{\bf Abbreviated Proof of (\ref{ineq3.33}):}
T: BDR-copy of S over AC.\newline
L1: 2(-a.b. +a.b.c +a.b.d +c.d. +a.c.s +a.s.c +c.s.a)\newline
SL1: b.t.a +b.t.c +a.c.bt +d.t.a +d.t.b +a.b.dt +c.d.t +t.drs.abc +t.crs.abd +t.abrs.cd
C2L1: 3t.bdrs.ac\newline
S: BC-copy of D over AR.\newline
L2: -3a.c. +2a.b.c +2a.d.b +3c.d. +5a.c.r +c.r.a +a.d.r +6a.r.d +d.r.a\newline
SL2: d.s.a +c.d.s +5c.s.r +5a.r.cs +c.r.as +d.s.r +a.r.ds +s.bc.adr +6s.bd.acr +s.abr.cd\newline
C2L2: 8s.bcd.ar\newline
R: D-copy of C over AB.\newline
S: 6c.r.b +6a.b.cr +8d.r.b +8a.b.dr +c.d.r +r.ab.cd +8c.r.abd +6d.r.abc\newline
C2: 15r.cd.ab\newline

{\bf Abbreviated Proof of (\ref{ineq3.34}):}
T: C-copy of A over BDRS.\newline
L1: -3a.d. +a.b.c +c.d. +2a.b.cr +c.r.ab +a.d.br +a.dr.b +2d.br.a +2d.cr. +d.bs.a +d.s.ac +a.br.ds +4a.ds.br +2a.bs.dr +2c.s.bdr +c.r.abds +3c.s.abdr\newline
SL1: c.t.b +a.b.ct +c.d.t +a.t.cd +3a.t.br +2c.t.ar +2c.t.br +2a.b.crt +c.r.abt +2d.r.t +2a.t.dr +2b.d.art +2d.cr.t +2b.t.cdr +3d.t.abcr +s.t.ab +c.t.as +d.s.t +a.t.ds +b.d.ast +d.s.act +b.t.acds +a.r.bdst +2a.s.bdrt +2c.s.bdrt +2a.t.bcdrs +r.ac.bdst +3s.ac.bdrt +3c.t.abdrs
C2L1: 9t.ac.bdrs\newline
S: D-copy of A over BCR.\newline
L2: -4a.c. +b.d.a +2c.d. +2a.b.cr +13a.c.br +c.br.a +c.ar.b +d.r.a +4a.b.dr +3a.d.br +2a.br.d +2d.br.a +2c.dr. +2d.r.ac +r.cd.ab\newline
SL2: c.d.s +a.s.cd +c.s.abd +2a.s.br +2c.r.s +2a.s.cr +b.c.ars +d.s.ar +9d.s.br +d.r.abs +c.r.ads +d.r.acs +2a.s.bcdr +2b.s.acdr +4c.s.abdr +d.s.abcr\newline
C2L2: 13s.ad.bcr\newline
R: D-copy of C over AB.\newline
S: 4c.r.a +15c.r.b +3d.r.a +9d.r.b +2c.d.r +2b.r.acd +4r.ab.cd +11c.r.abd +17d.r.abc\newline
C2: 35r.cd.ab\newline

\end{proof}
\section{Inequalities Using Four Copy Variables with at most Three Copy Steps}\label{FourCopies}

In this section we list the inequalities that can be derived from this method using
at most four auxiliary variables and at most three instances of the Copy Lemma.
Therefore, for the sake of completeness, this list includes those found in previous sections.
Each entry in the following list begins with a list of nine integers.  These integers represent the
coefficients $a,b,c,d,e,f,g,h,i$ for the non-Shannon inequality of the form
\begin{eqnarray*}
aI(A;B) &\leq& bI(A;B|C) + cI(A;C|B) + dI(B;C|A) \\
      &+&  eI(A;B|D) + fI(A;D|B)+ gI(B;D|A)\\
      &+& hI(C;D) + iI(C;D|A)
\end{eqnarray*}
Following the list of integers, is a list of copy steps used in the proof of the inequality.
Each copy step is ended by a period.

\begin{enumerate}
\item 2 4 2 1 3 1 0 2 0 r c ab.s r ad.
\item 2 3 3 1 5 2 0 2 0 rs cd ab.t r ad.
\item 3 6 3 1 6 3 0 3 0 rs cd ab.t r ad.
\item 2 4 2 1 2 0 0 2 3 rs bd ac.t c adr.
\item 2 3 3 2 2 0 0 2 0 rs cd ab.t b acr.
\item 4 6 4 3 4 2 1 4 0 rs cd ab.t b adr.
\item 2 5 2 1 2 0 0 2 0 rs ad bc.t c abr.
\item 2 4 3 1 2 0 0 2 0 rs ad bc.t b acr.
\item 2 4 1 2 2 3 0 2 0 rs cd ab.t a bcs.
\item 3 7 4 1 4 1 0 3 0 r c ab.s r ac.t r ad.
\item 4 6 11 3 6 2 0 4 0 r c ab.s c abr.t a bdrs.
\item 3 6 3 1 4 1 0 3 5 rs bd ac.tu cs adr.
\item 7 8 12 12 7 5 5 7 0 rs cd ab.tu (cr)(cs) ab.
\item 5 14 9 1 7 2 0 5 0 rs cd ab.tu cd ar.
\item 6 7 11 11 6 3 3 6 0 rs cd ab.tu (cr)(dr) ab.
\item 3 4 6 3 6 2 0 3 0 r c ab.st cd abr.u (cr) at.
\item 11 23 28 3 11 7 5 11 0 r a bc.st cd abr.u b acrt.
\item 5 6 8 7 5 3 2 5 0 r c ab.st cd abr.u b acrt.
\item 6 12 6 3 6 4 3 6 0 r b ac.st cd ab.u b act.
\item 4 5 16 4 10 6 0 4 0 r a bc.st ca bdr.u (at) bdr.
\item 3 6 5 1 5 3 0 3 0 r c ab.st ac bdr.u r ads.
\item 4 13 7 1 4 2 2 4 0 r b ac.st cd ab.u a bcrs.
\item 4 5 7 6 4 1 1 4 0 r c ab.st cd ab.u b acrs.
\item 4 8 4 1 10 6 0 4 0 r c ab.st bc ad.u r as.
\item 5 16 13 1 5 1 1 5 0 r a bc.st cd abr.u s acr.
\item 5 6 11 11 5 1 1 5 0 r c ab.st cd ab.u (cs) abr.
\item 2 3 4 1 4 5 0 2 0 rs cd ab.t c abrs.u a dst.
\item 4 5 6 4 4 2 4 4 0 rs cd ab.t d abrs.u b acrt.
\item 4 7 4 2 4 1 1 4 7 rs bd ac.t c adr.u a rst.
\item 2 3 3 1 2 3 2 2 0 rs cd ab.t d abrs.u a bcrt.
\item 4 10 2 3 9 9 0 4 0 rs cd ab.t c as.u r bdst.
\item 4 7 3 4 5 5 0 4 0 rs cd ab.t a bdr.u c abrst.
\item 3 5 4 2 3 0 0 3 4 rs bd ac.t c abs.u b acrst.
\item 5 14 11 1 5 2 2 5 0 rs cd ab.t a bcr.u a bcrt.
\item 6 15 10 2 6 0 0 6 11 rs bd ac.t c adr.u a bcrt.
\item 11 31 18 3 13 4 0 11 0 rs cd ab.t c ar.u d art.
\item 18 38 31 6 18 6 6 18 0 rs cd ab.t a bcr.u b acrt.
\item 4 9 3 2 8 4 0 4 0 rs cd ab.t c as.u r bst.
\item 5 12 3 3 10 9 0 5 0 rs cd ab.t c as.u r bdst.
\item 8 19 6 4 9 14 0 8 0 rs cd ab.t a bdr.u a bdrt.
\item 3 5 4 2 4 1 0 3 0 rs cd ab.t d ar.u b acrt.
\item 7 19 8 1 9 8 2 7 0 rs cd ab.t d ar.u c art.
\item 6 16 2 9 6 11 0 6 0 rs cd ab.t a bdr.u b acrt.
\item 7 8 11 11 7 7 7 7 0 rs cd ab.t c abrs.u (dr) abt.
\item 5 8 10 3 5 1 1 5 0 rs cd ab.t c ar.u b acrt.
\item 4 10 10 1 4 1 4 4 0 rs cd ab.t a bcr.u b adrt.
\item 8 9 14 14 8 8 8 8 0 rs cd ab.t c abrs.u (dr) abt.
\item 3 5 4 1 8 5 0 3 0 rs cd ab.t c as.u a bdst.
\item 6 11 10 2 6 6 9 6 0 rs cd ab.t d abr.u a bcrt.
\item 7 19 11 2 7 5 10 7 0 rs cd ab.t b adr.u a bcrt.
\item 6 13 10 2 6 2 2 6 0 rs cd ab.t a bcr.u b acrt.
\item 9 12 16 7 9 13 11 9 0 rs cd ab.t (dr) ab.u a bcrt.
\item 7 8 16 16 7 3 3 7 0 rs cd ab.t (cr) ab.u c abrt.
\item 5 9 4 4 5 3 1 5 0 rs cd ab.t a bcs.u (cs) abrt.
\item 8 17 7 3 16 8 0 8 0 rs cd ab.t c as.u c bst.
\item 3 9 2 2 3 0 0 3 0 rs bd ac.t b ar.u c abrt.
\item 9 17 15 3 9 15 5 9 0 rs cd ab.t d abrs.u a bcrt.
\item 3 4 5 4 4 1 0 3 0 rs cd ab.t a bcr.u a bdst.
\item 7 16 12 2 7 4 3 7 0 rs cd ab.t a bcr.u b adst.
\item 3 5 6 2 3 0 0 3 0 rs ad bc.t a cr.u b acrt.
\item 6 11 10 2 6 9 4 6 0 rs cd ab.t d abrs.u a bcrt.
\item 6 7 13 13 6 2 2 6 0 rs cd ab.t (cr) ab.u c abrst.
\item 10 23 16 3 10 5 5 10 0 rs cd ab.t a bcr.u d abst.
\item 4 5 9 6 6 3 0 4 0 rs cd ab.t a bcr.u a bdst.
\item 9 19 8 4 15 7 0 9 0 rs cd ab.t c as.u s act.
\item 10 11 22 22 10 9 9 10 0 rs cd ab.t (cr) ab.u (dr) abt.
\item 5 20 15 1 5 0 0 5 0 rs ad bc.t (ar) bc.u (ar) bct.
\item 4 8 11 1 4 4 4 4 0 rs cd ab.t a bcr.u s abcdt.
\item 7 16 12 2 7 3 4 7 0 rs cd ab.t a bcr.u a bdst.
\item 4 9 8 1 4 3 2 4 0 rs cd ab.t a bcr.u b adst.
\item 6 12 5 5 6 3 0 6 0 rs cd ab.t a bcs.u c abrt.
\item 7 12 10 3 14 6 0 7 0 rs cd ab.t c as.u r act.
\item 13 24 16 5 26 10 0 13 0 rs cd ab.t c as.u s act.
\item 6 14 4 3 12 11 0 6 0 rs cd ab.t c as.u c bdt.
\item 10 18 11 7 13 3 0 10 0 rs cd ab.t a bdr.u r abcdt.
\item 12 22 14 5 23 9 0 12 0 rs cd ab.t r ad.u a bdrt.
\item 4 14 9 1 4 0 0 4 0 rs ad bc.t (ar) bc.u c abrt.
\item 6 11 10 2 6 8 5 6 0 rs cd ab.t d abrs.u a bcrt.
\item 5 11 4 2 10 8 0 5 0 rs cd ab.t c as.u c bdt.
\item 8 14 13 4 12 8 0 8 0 rs cd ab.t a bcs.u a bdrt.
\item 10 20 9 6 12 7 0 10 0 rs cd ab.t a bcs.u c abrt.
\item 8 17 9 5 8 0 0 8 0 rs ad bc.t c abr.u c abrt.
\item 3 8 1 4 6 7 0 3 0 rs cd ab.t c as.u t cs.
\item 8 9 15 15 8 6 6 8 0 rs cd ab.t (cr) ab.u (dr) abt.
\item 7 8 21 21 7 2 2 7 0 rs cd ab.t (cr) ab.u (cr) abt.
\item 5 6 15 15 5 0 0 5 0 rs cd ab.t (cr) ab.u (cr) abt.
\item 5 6 5 5 5 10 10 5 0 rs cd ab.t (cs) ab.u (ds) abt.
\item 9 10 19 19 9 7 7 9 0 rs cd ab.t (cr) ab.u (dr) abt.
\item 6 7 17 17 6 1 1 6 0 rs cd ab.t (cr) ab.u (cr) abt.
\item 10 17 11 11 10 0 0 10 0 rs cd ab.t b acr.u b acrt.
\item 4 9 3 2 5 7 0 4 0 rs cd ab.t a bcs.u (at) bcs.
\item 4 6 5 4 4 4 0 4 0 rs cd ab.t a bcs.u (bt) acr.
\item 4 5 10 9 4 0 0 4 0 rs cd ab.t (cr) ab.u b acrt.
\item 3 4 5 5 3 0 0 3 0 rs cd ab.t b acr.u t bcr.
\item 6 13 5 3 8 0 2 6 10 rs bd ac.t c adr.u t rs.
\item 3 4 4 5 4 1 0 3 0 rs cd ab.t a bcr.u t acs.
\item 7 25 18 1 13 6 0 7 0 rs cd ab.t c ar.u t ad.
\item 10 17 8 9 10 4 2 10 0 rs cd ab.t a bcs.u a bcst.
\item 9 16 14 4 15 9 0 9 0 rs cd ab.t a bcs.u (rt) ad.
\item 9 10 18 18 9 8 8 9 0 rs cd ab.t (cr) ab.u (cs) abt.
\item 7 15 6 3 12 7 0 7 0 rs cd ab.t c as.u t cs.
\item 3 6 5 1 3 1 3 3 0 rs cd ab.t c abrs.u t adr.
\item 5 9 6 2 10 5 0 5 0 rs cd ab.t c as.u t abc.
\item 7 35 28 1 11 4 0 7 0 rs cd ab.t r ac.u s at.
\item 4 7 5 1 13 8 0 4 0 rs cd ab.t c as.u t ad.
\item 7 19 2 11 7 12 1 7 0 rs cd ab.t a bcs.u b acrt.
\item 4 13 9 1 4 0 0 4 7 rs bd ac.t c abs.u t ar.
\item 4 13 10 1 4 0 0 4 0 rs ad bc.t (ar) bc.u b acrt.
\item 3 8 5 1 3 0 0 3 0 rs ad bc.t b acr.u t abr.
\item 3 6 2 2 4 0 1 3 5 rs bd ac.t c adr.u t rs.
\item 4 9 8 1 4 2 3 4 0 rs cd ab.t a bcr.u a bdst.
\item 5 7 11 6 5 2 0 5 0 rs cd ab.t (cr) ab.u t adr.
\item 5 10 5 1 15 10 0 5 0 rs cd ab.t c as.u t ad.
\item 4 11 7 1 4 4 5 4 0 rs cd ab.t b acs.u t ar.
\item 8 17 7 4 10 0 2 8 12 rs bd ac.t c adr.u t rs.
\item 7 8 9 9 7 14 14 7 0 rs cd ab.t (cs) ab.u (ds) abt.
\item 4 5 10 6 7 2 0 4 0 rs cd ab.t (cr) ab.u (rt) ad.
\item 10 26 3 16 10 18 1 10 0 rs cd ab.t a bdr.u b acrt.
\item 9 12 22 10 13 10 0 9 0 rs cd ab.t (cr) ab.u a bdst.
\item 3 7 2 2 3 0 0 3 2 rs bd ac.t c abs.u t bcr.
\item 7 9 16 10 10 2 0 7 0 rs cd ab.t (cr) ab.u (rt) ad.
\item 6 9 6 4 6 3 4 6 0 rs cd ab.t b acs.u t bcr.
\item 4 8 2 5 4 0 0 4 4 rs bd ac.t c abs.u (at) bcr.
\item 13 37 22 3 17 6 0 13 0 rs cd ab.t c ar.u d art.
\item 8 10 23 12 8 1 4 8 0 rs cd ab.t (cr) ab.u a bcst.
\item 6 7 7 7 6 11 11 6 0 rs cd ab.t (cs) ab.u (ds) abt.
\item 11 29 3 19 11 21 2 11 0 rs cd ab.t a bcs.u b acrt.
\item 9 12 8 13 9 9 2 9 0 rs cd ab.t a bcs.u a bcrt.
\item 4 7 5 2 8 2 0 4 0 rs cd ab.t r ad.u r adt.
\item 4 11 7 1 4 1 7 4 0 rs cd ab.t b acs.u a bcrt.
\item 8 14 6 7 8 4 2 8 0 rs cd ab.t a bcs.u c abrst.
\item 6 8 6 5 6 4 5 6 0 rs cd ab.t b adr.u t bdr.
\item 7 13 5 6 7 5 1 7 0 rs cd ab.t a bdr.u r abcdt.
\item 12 16 25 18 12 3 0 12 0 rs cd ab.t (cr) ab.u t acs.
\item 8 12 9 5 8 4 8 8 0 rs cd ab.t b acs.u t bcr.
\item 5 6 14 9 5 1 2 5 0 rs cd ab.t (cr) ab.u a bdrt.
\item 15 27 45 5 15 27 11 15 0 rs cd ab.t (cs) ab.u a bcrt.
\item 10 18 12 4 21 7 0 10 0 rs cd ab.t r ad.u r adt.
\item 9 17 9 6 11 4 0 9 0 rs cd ab.t a bcs.u c abrst.
\item 11 29 16 3 15 6 0 11 0 rs cd ab.t s ac.u r act.
\item 18 22 51 30 18 3 6 18 0 rs cd ab.t (cr) ab.u a bcst.
\item 3 6 6 1 3 1 1 3 0 rs ad bc.t b ars.u b acr.
\item 18 32 13 16 18 13 2 18 0 rs cd ab.t a bcs.u (bt) acs.
\item 6 14 9 2 6 2 3 6 0 rs cd ab.t a bcr.u (ds) abt.
\item 14 24 13 14 14 11 0 14 0 rs cd ab.t a bcs.u (bt) acr.
\item 5 9 5 4 6 2 0 5 0 rs cd ab.t a bcs.u c abrt.
\item 6 17 2 9 9 3 0 6 0 rs cd ab.t c br.u t ad.
\item 4 11 7 1 4 2 6 4 0 rs cd ab.t b acs.u t ar.
\item 18 30 20 12 37 15 0 18 0 rs cd ab.t c as.u b acrt.
\item 8 10 22 13 8 1 2 8 0 rs cd ab.t (cr) ab.u a bdrt.
\item 5 8 3 5 5 5 2 5 0 rs cd ab.t a bcs.u t acr.
\item 9 19 7 5 12 8 0 9 0 rs cd ab.t a bcs.u t as.
\item 4 5 4 4 4 6 5 4 0 rs cd ab.t (dr) ab.u b adst.
\item 14 28 10 13 14 0 0 14 11 rs bd ac.t c abs.u (at) bcr.
\item 7 15 6 4 8 6 0 7 0 rs cd ab.t a bdr.u r abcdt.
\item 7 18 26 2 7 2 1 7 0 rs da bc.t (as) bc.u a bdst.
\item 11 19 34 4 11 20 8 11 0 rs cd ab.t (cs) ab.u a bcrt.
\item 4 5 5 4 4 3 3 4 0 rs cd ab.t b acr.u (dr) ab.
\item 5 13 8 1 8 3 0 5 0 rs cd ab.t c ar.u r ad.
\item 6 7 15 12 6 2 3 6 0 rs cd ab.t (cr) ab.u a bcst.
\item 14 21 17 9 14 6 12 14 0 rs cd ab.t b acs.u t bcr.
\item 3 5 2 4 3 2 0 3 0 rs cd ab.t a bdr.u t ar.
\item 4 7 4 3 6 2 0 4 0 rs cd ab.t d ar.u a bcr.
\item 19 32 21 12 40 17 0 19 0 rs cd ab.t c as.u b acrt.
\item 10 20 9 7 11 5 0 10 0 rs cd ab.t a bcs.u c abrt.
\item 8 21 34 2 8 3 2 8 0 rs ad bc.t (ar) bc.u r abst.
\item 6 8 6 7 6 2 1 6 0 rs cd ab.t b adr.u a bcr.
\item 4 10 2 3 5 1 0 4 4 rs bd ac.t b adr.u c abr.
\item 4 5 9 7 5 1 0 4 0 rs ad bc.t a bdr.u a cr.
\item 5 8 5 3 5 4 4 5 0 rs cd ab.t b acs.u t bcr.
\item 4 13 1 9 4 7 0 4 0 rs cd ab.t a bdr.u t bc.
\item 9 25 12 3 11 4 0 9 0 rs cd ab.t d ar.u c art.
\item 5 6 7 6 5 4 4 5 0 rs cd ab.t b acr.u (dr) ab.
\item 6 9 9 5 8 2 0 6 0 rs cd ab.t b acr.u r ad.
\item 7 11 6 6 7 3 2 7 0 rs cd ab.t a bdr.u b adr.
\item 4 9 5 1 7 3 0 4 0 rs cd ab.t c ar.u r ad.
\item 9 18 7 6 12 9 0 9 0 rs cd ab.t a bdr.u t ad.
\item 3 7 2 2 3 4 0 3 0 rs cd ab.t a bdr.u r abcdt.
\item 7 14 6 4 10 7 0 7 0 rs cd ab.t a bcs.u t as.
\item 6 8 8 5 6 2 1 6 0 rs cd ab.t b acr.u b acs.
\item 9 24 31 2 9 12 3 9 0 rs ad bc.t a bdr.u (ar) bct.
\item 9 24 43 2 9 4 3 9 0 rs ad bc.t (ar) bc.u r abst.
\item 5 9 7 2 5 3 2 5 0 rs cd ab.t a bcr.u b acs.
\item 13 20 8 14 13 12 6 13 0 rs cd ab.t a bdr.u t acr.
\item 5 7 7 4 5 1 2 5 0 rs cd ab.t b acr.u b adr.
\item 21 32 24 13 21 10 18 21 0 rs cd ab.t b acs.u t bcr.
\item 5 10 3 6 5 0 0 5 2 rs bd ac.t c adr.u a bcr.
\item 6 11 6 3 12 6 0 6 0 rs cd ab.t a bcr.u r ad.
\item 20 34 17 18 20 5 4 20 0 rs cd ab.t a bcs.u a bcst.
\item 24 40 17 24 24 13 6 24 0 rs cd ab.t a bdr.u d abst.
\item 5 11 14 1 5 6 6 5 0 rs cd ab.t a bcr.u (ds) abt.
\item 7 18 2 21 7 1 7 7 3 rs bd ac.t b adr.u (br) act.
\item 8 10 14 13 8 2 1 8 0 rs cd ab.t b acs.u (cr) ab.
\item 5 8 6 5 5 2 0 5 0 rs cd ab.t b acr.u a bdr.
\item 19 29 22 12 19 8 16 19 0 rs cd ab.t b acs.u t bcr.
\item 4 6 5 4 6 2 0 4 0 rs cd ab.t a bcr.u r ad.
\item 26 43 25 21 26 6 7 26 0 rs cd ab.t b acs.u b acst.
\item 11 19 28 4 11 20 12 11 0 rs cd ab.t (cs) ab.u a bcrt.
\item 5 10 19 1 5 8 8 5 0 rs cd ab.t a bcr.u (ds) abt.
\item 7 13 6 7 7 3 0 7 0 rs cd ab.t a bdr.u r abcdt.
\item 36 61 35 29 36 6 11 36 0 rs cd ab.t b acs.u b acst.
\item 6 9 8 4 6 2 5 6 0 rs cd ab.t b adr.u t bcr.
\item 10 12 18 17 10 4 3 10 0 rs cd ab.t b adr.u (cr) ab.
\item 14 36 45 4 14 17 2 14 0 rs ad bc.t r abs.u (ar) bct.
\item 14 22 10 5 16 20 7 14 0 rs cd ab.t s ac.u b acst.
\item 3 6 5 1 6 2 0 3 0 rs cd ab.t c ar.u r ad.
\item 11 19 30 4 11 20 10 11 0 rs cd ab.t (cs) ab.u a bcrt.
\item 20 34 20 16 20 3 6 20 0 rs cd ab.t b acs.u b acst.
\item 16 42 51 4 16 19 4 16 0 rs ad bc.t a bdr.u (ar) bct.
\item 15 25 15 12 15 3 4 15 0 rs cd ab.t b acs.u b acst.
\item 8 21 2 25 8 2 9 8 1 rs bd ac.t b adr.u (br) act.
\item 47 79 37 41 50 15 9 47 0 rs cd ab.t a bcs.u a bcst.
\item 25 66 81 6 25 30 7 25 0 rs ad bc.t r abs.u (ar) bct.
\item 20 30 14 15 24 8 7 20 0 rs cd ab.t b adr.u b adrt.
\end{enumerate}


\section{Extension Rules and Infinite Families of Inequalities}\label{infinite}

The first examples of infinite families of non-Shannon inequalities
were found by Matus \cite{Matus}.  For example, Matus showed (\cite{Matus} Corollary 3, formula (4))
that for any nonnegative integer s, the following is an
information inequality:
\begin{eqnarray} \label{Matus}
sI(A;B)&<=&(s(s+3)/2)I(A;B|C)\\
\nonumber &+&(s(s+1)/2)I(A;C|B)+I(B;C|A)\\
\nonumber &+&sI(A;B|D)+sI(C;D).
\end{eqnarray}
Here we have interpreted Matus' variables $\xi_1,\xi_2,\xi_3,\xi_4$ as $D,C,B,A$ respectively.
Note that in the case $s=0$ this is just the Shannon inequality $0 \leq I(B;C|A)$.
If $s=1$ we get the Zhang-Yeung inequality. When $s=2$ this is Inequality~(\ref{Matus2}).
But when $s=3$ the resulting inequality does not appear previously in this paper.
Each inequality in this list can be proved using the Copy Lemma method,
but the number of copies goes to $\infty$ as $s\to \infty$.
Matus used this list in a clever way to show that $\overline{\Gamma}^*_4$ is not polyhedral.

In this section we will introduce another method of quickly generating inequalities.  Rather than
listing them using an index $s$, we will use a more general approach which we call Extension Rules.
An Extension Rule turns one inequality of a certain type into another of the same type.
As an example, consider our first Extension Rule.

{\bf Rule [1]:} If
\begin{eqnarray*}
   aI(A;B)&\leq&bI(A;B|C)+cI(A;C|B)+dI(B;C|A)\\
   &+&aI(A;B|D)+fI(A;D|B)+gI(B;D|A)\\
   &+&hI(C;D)+iI(C;D|A)+jI(C;D|B)
\end{eqnarray*}
is an information inequality with nonnegative coefficients, then so is
\begin{eqnarray*}
    (a+d)I(A;B)&\leq&(b+2d+h+i)I(A;B|C)\\
    &+&(a+c+d+f+g)I(A;C|B)\\
    &+&dI(B;C|A)+(a+d)I(A;B|D)\\
    &+&fI(A;D|B)+gI(B;D|A)\\
    &+&(d+h)I(C;D)+iI(C;D|A)\\
    &+&jI(C;D|B).
\end{eqnarray*}

Notice that the premise inequality differs from the general form given in Section~\ref{FourCopies} in two ways.
First, an additional term $jI(C;D|B)$ has been added.  This of course, is not necessary since the coefficient
$j$ could be set to zero.  In other words, this term is only included because it can be and it is not
known whether it is ever useful.  Secondly, notice that the coefficient of $I(A;B|D)$ on the right matches the
coefficient of $I(A;B)$ on the left.  This is the only restriction (assuming it is of the general form)
needed to apply this rule.  Notice also that the conclusion inequality also satisfies this same restriction,
allowing us to iterate this rule indefinitely.

To see why this rule generalizes the Matus list~\eqref{Matus}, start with the Shannon inequality
$0 \leq I(B;C|A)$ and iterate the rule indefinitely.

Notice also that there are many other inequalities in Section~\ref{FourCopies} that satisfy
the special condition of the coefficients and can therefore be iterated using this rule.

We will now give an outline of the proof of Rule [1] and then give several other similar rules.
We observe that None of the rules discovered so far generalize the second infinite list of inequalities given by Matus:
\begin{eqnarray}
sI(A;B)&<=&(s(s+1)/2)I(A;B|C)\\
\nonumber &+&(s(s-1)/2)I(A;C|B)+2sI(A;B|D)\\
\nonumber &+&sI(A;D|B)+I(B;D|A)+sI(C;D).
\end{eqnarray}

\begin{proof}[Proof Outline for Rule 1]
The procedure for proving Rule [1] is similar to that in preceding sections.
Given a collection of already-known inequalities, one can perform
a polytope computation to produce a set of extreme rays.  Each of
these extreme rays can be tested against one or more copy
specifications; if such a test results in a contradiction,
then a new inequality can be deduced.  The difference here is
that, when testing an extreme ray against a copy specification,
one can use instances of previously-obtained inequalities,
involving both the given random variables and the copy
variables, in an attempt to reach a contradiction.

For example, suppose we take the Zhang-Yeung inequality
as known, and produce the list of extreme rays (see \eqref{newList}).
We now consider the copy specification "$R$ is a
$D$-copy of $A$ over $BC$".  Using the equations represented by
this copy specification, the Shannon inequalities on
variables $A,B,C,D,R,$ and the instances of the Zhang-Yeung
inequality obtained by substituting in combinations of
these five variables, one finds that some of the
listed extreme rays are not attainable.  This leads to
additional inequalities, such as:
\begin{eqnarray}\label{*}
    2I(A;B) &\leq& 5I(A;B|C)+3I(A;C|B)\\
\nonumber    && +I(B;C|A)+2I(A;B|D)+2I(C;D).
\end{eqnarray}

This inequality could be obtained purely using copy variables
(and Shannon inequalities); in fact, it is one of the six
inequalities we obtained in Section~\ref{TwoCopies} using two copy variables.
This is not surprising, because further investigation shows
that the process of obtaining this inequality here used only
one instance of Zhang-Yeung, namely:
\begin{eqnarray}\label{**}
    I(AR;BR) &\leq& 2I(AR;BR|CR)\\
\nonumber     &+& I(AR;CR|BR) + I(BR;CR|AR)\\
\nonumber     &+& I(AR;BR|D)+I(CR;D).
\end{eqnarray}
It only takes one additional copy variable to prove \eqref{**}
directly, so two copy variables in all suffice to prove \eqref{*}.

But further information can be extracted from the proof of \eqref{*}
which will allow us to get a substantially more general result.
First, we find that the proof of \eqref{*} from \eqref{**} actually shows
that the inequality
\begin{eqnarray}\label{***}
    2I(A;B) &\leq& 5I(A;B|C)+3I(A;C|B)+I(B;C|A)\\
\nonumber     &+& 2I(A;B|D)+2I(C;D)+I(AR;BR)\\
\nonumber     &-&2I(AR;BR|CR)-I(AR;CR|BR)\\
\nonumber     &-&I(BR;CR|AR)-I(AR;BR|D)-I(CR;D)
 \end{eqnarray}
is provable from "$R$ is a $D$-copy of $A$ over $BC$" and the
Shannon inequalities.  It turns out that \eqref{***} can be decomposed
into several simpler inequalities: each of the inequalities
\begin{eqnarray*}
       I(A;B) &\leq& I(A;C|B)+I(A;B|D)
                  +I(AR;BR)\\
                  &-&I(AR;BR|D)\\
            0 &\leq& I(A;B|C)
                  -I(AR;BR|CR)\\
            0 &\leq& I(A;C|B)
                  -I(AR;CR|BR)\\
       I(A;B) &\leq& 2I(A;B|C)+I(A;C|B)+I(B;C|A)\\
       &+&I(A;B|D)+I(C;D)
                  -I(BR;CR|AR)\\
            0 &\leq& I(A;B|C)+I(C;D)
                  -I(CR;D)
\end{eqnarray*}
is provable from "$R$ is a $D$-copy of $A$ over $BC$" and the
Shannon inequalities.  (This decomposition was found by a
trial-and-error approach.)

Now that we have divided \eqref{***} into these pieces, we
can reassemble the pieces with different coefficients
to get a more general rule for deducing inequalities from
old ones.  By putting arbitrary nonnegative coefficients $a,b,c,d,h$ on the
five inequalities above, instead of the coefficients $1,2,1,1,1$
used to get \eqref{***}, we get: the inequality
\begin{eqnarray*}\label{****}
     (a+d)I(A;B) &\leq& (b+2d+h)I(A;B|C)\\
\nonumber        &+& (a+c+d)I(A;C|B)+dI(B;C|A)\\
\nonumber        &+&(a+d)I(A;B|D)+(d+h)I(C;D)\\
\nonumber        &+&aI(AR;BR)-bI(AR;BR|CR)\\
\nonumber        &-&cI(AR;CR|BR)-dI(BR;CR|AR)\\
\nonumber        &-&aI(AR;BR|D)-hI(CR;D)
\end{eqnarray*}

is provable from "$R$ is a $D$-copy of $A$ over $BC$" and the
Shannon inequalities.  Therefore, if
\begin{eqnarray*}
    aI(A;B) &\leq& I(A;B|C)+cI(A;C|B)+dI(B;C|A)\\
    &+&aI(A;B|D)+hI(C;D)
\end{eqnarray*}
is an information inequality (i.e., always true, so there is no
problem substituting in $AR,BR,CR,D$ for $A,B,C,D$), then
\begin{eqnarray*}
    (a+d)I(A;B) &\leq& (b+2d+h)I(A;B|C)\\
    &+&(a+c+d)I(A;C|B)+dI(B;C|A)\\
    &+&(a+d)I(A;B|D)+(d+h)I(C;D)
\end{eqnarray*}
is also an information inequality.

It turns out that we can increase its applicability
by adding a few more pieces to the list needed for \eqref{***}.  Namely, we
use the fact that the inequalities
\begin{eqnarray*}
            0 &\leq& I(A;C|B)+I(A;D|B)
                  -I(AR;D|BR)\\
            0 &\leq& I(A;C|B)+I(B;D|A)
                  -I(BR;D|AR)\\
            0 &\leq& I(A;B|C)+I(C;D|A)
                  -I(CR;D|AR)\\
            0 &\leq& I(C;D|B)
                  -I(CR;D|BR)
\end{eqnarray*}
are provable from "$R$ is a $D$-copy of $A$ over $BC$" and the
Shannon inequalities.  Using these additional pieces, we can extend
the above rule to the form Rule [1].

Now the premise inequality (assuming it is of the general form
we have seen earlier) needs to meet only one restriction:
the coefficient of I(A;B|D) on the right must match the
coefficient of I(A;B) on the left.  (We tried to find
another piece which would remove this restriction, but
did not find a suitable one.)
\end{proof}

We now list the other extension rules that have been found,
where much of the above process has been summarized succinctly.  Each of the
lower case letters in these rules is assumed to be a nonnegative real number.

{\bf Rule [2]}
Given:
\begin{eqnarray*}&&     aI(A;B)   \\
	&\leq&   (b+h)I(A;B|C)+cI(A;C|B)+dI(B;C|A)  \\
	&+& aI(A;B|D)+fI(A;D|B)+gI(B;D|A)  \\
	&+& hI(C;D)+iI(C;D|A)+jI(C;D|B)
\end{eqnarray*}
Get:
\begin{eqnarray*}
&&    (a+b)I(A;B)   \\
	&\leq&   (2b+h)I(A;B|C)+(b+c+h+j)I(A;C|B)  \\
	&+& (b+d+h+i)I(B;C|A)+(a+b)I(A;B|D)  \\
	&+& fI(A;D|B)+gI(B;D|A)+(b+h)I(C;D)  \\
	&+& iI(C;D|A)+jI(C;D|B)
\end{eqnarray*}
 Using:    $R$ is copy of $C$ over $AB$    
\newline Substitution: $AR$ $BR$ $CR$ $D$

{\bf Rule [3]}
Given:
\begin{eqnarray*}&&     aI(A;B)   \\
	&\leq&   aI(A;B|C)+cI(A;C|B)  \\
	&+& (d+h+z)I(B;C|A)+(a+e+h+z)I(A;B|D)  \\
	&+& fI(A;D|B)+gI(B;D|A)+(a+h)I(C;D)  \\
	&+& iI(C;D|A)+jI(C;D|B)
\end{eqnarray*}
Get:
\begin{eqnarray*}
&&    (a+z)I(A;B)   \\
	&\leq&   (2a+d+h+i+2z)I(A;B|C)  \\
	&+& (a+c+e+f+g+2h+z)I(A;C|B)  \\
	&+& (d+h+z)I(B;C|A)+(a+e+h+z)I(A;B|D)  \\
	&+& fI(A;D|B)+gI(B;D|A)+(a+h+z)I(C;D)  \\
	&+& iI(C;D|A)+jI(C;D|B)
\end{eqnarray*}
 Using:    $R$ is copy of $A$ over $BC$    
\newline Substitution: $AR$ $BR$ $C$ $DR$

{\bf Rule [4]}
Given:
\begin{eqnarray*}&&     aI(A;B)   \\
	&\leq&   aI(A;B|C)+cI(A;C|B)+dI(B;C|A)  \\
	&+& (e+h)I(A;B|D)+fI(A;D|B)+gI(B;D|A)  \\
	&+& hI(C;D)+iI(C;D|A)+jI(C;D|B)
\end{eqnarray*}
Get:
\begin{eqnarray*}
&&    (a+e)I(A;B)   \\
	&\leq&   (a+2e)I(A;B|C)+(c+e+h+j)I(A;C|B)  \\
	&+& (d+e+h+i)I(B;C|A)+(e+h)I(A;B|D)  \\
	&+& fI(A;D|B)+gI(B;D|A)+(e+h)I(C;D)  \\
	&+& iI(C;D|A)+jI(C;D|B)
\end{eqnarray*}
 Using:    $R$ is copy of $C$ over $AB$    
\newline Substitution: $AR$ $BR$ $C$ $DR$

{\bf Rule [5]}
Given:
\begin{eqnarray*}&&     aI(A;B)   \\
	&\leq&   aI(A;B|C)+cI(A;C|B)  \\
	&+& (d+h+z)I(B;C|A)  \\
	&+& (a+h+z+z')I(A;B|D)+fI(A;D|B)  \\
	&+& gI(B;D|A)+(a+h)I(C;D)+iI(C;D|A)  \\
	&+& jI(C;D|B)
\end{eqnarray*}
and
\begin{eqnarray*}&&     a'I(A;B)   \\
	&\leq&   b'I(A;B|C)+c'I(A;C|B)+z'I(B;C|A)  \\
	&+& a'I(A;B|D)+f'I(A;D|B)+g'I(B;D|A)  \\
	&+& h'I(C;D)+i'I(C;D|A)+j'I(C;D|B)
\end{eqnarray*}
Get:
\begin{eqnarray*}
&&    (a+a'+z+z')I(A;B)   \\
	&\leq&  (2a+d+h+i+b'+h'+i'+2z+2z')  \\
	&& I(A;B|C)  \\
	&+& (a+c+f+g+2h+a'+c'+f'+g'+z+z')  \\
	&& I(A;C|B)+(d+h+z+z')I(B;C|A)  \\
	&+& (a+h+a'+z+z')I(A;B|D)  \\
	&+& (f+f')I(A;D|B)+(g+g')I(B;D|A)  \\
	&+& (a+h+h'+z+z')I(C;D)  \\
	&+& (i+i')I(C;D|A)+(j+j')I(C;D|B)
\end{eqnarray*}
 Using:    $R$ is copy of $A$ over $BC$    
\newline Substitutions: $AR$ $BR$ $C$ DR;                   $AR$ $BR$ $CR$ $D$

{\bf Rule [6]}
Given:
\begin{eqnarray*}&&     aI(A;B)   \\
	&\leq&   aI(A;B|C)+cI(A;C|B)+dI(B;C|A)  \\
	&+& (h+z)I(A;B|D)+fI(A;D|B)+gI(B;D|A)  \\
	&+& hI(C;D)+iI(C;D|A)+jI(C;D|B)
\end{eqnarray*}
and
\begin{eqnarray*}&&     a'I(A;B)   \\
	&\leq&   (h'+z)I(A;B|C)+c'I(A;C|B)  \\
	&+& d'I(B;C|A)+a'I(A;B|D)+f'I(A;D|B)  \\
	&+& g'I(B;D|A)+h'I(C;D)+i'I(C;D|A)  \\
	&+& j'I(C;D|B)
\end{eqnarray*}
Get:
\begin{eqnarray*}
&&    (a+a'+z)I(A;B)   \\
	&\leq&   (a+h'+2z)I(A;B|C)  \\
	&+& (c+h+j+c'+h'+j'+z)I(A;C|B)  \\
	&+& (d+h+i+d'+h'+i'+z)I(B;C|A)  \\
	&+& (h+a'+z)I(A;B|D)+(f+f')I(A;D|B)  \\
	&+& (g+g')I(B;D|A)+(h+h'+z)I(C;D)  \\
	&+& (i+i')I(C;D|A)+(j+j')I(C;D|B)
\end{eqnarray*}
 Using:    $R$ is copy of $C$ over $AB$    
\newline Substitutions: $AR$ $BR$ $C$ DR;                   $AR$ $BR$ $CR$ $D$

{\bf Rule [7]}
Given:
\begin{eqnarray*}&&     aI(A;B)   \\
	&\leq&   bI(A;B|C)+cI(A;C|B)  \\
	&+& (d+j+z)I(B;C|A)+eI(A;B|D)  \\
	&+& fI(A;D|B)+(g+x+z)I(B;D|A)+hI(C;D)  \\
	&+& iI(C;D|A)+(j+x+z)I(C;D|B)
\end{eqnarray*}
Get:
\begin{eqnarray*}
&&    (a+z)I(A;B)   \\
	&\leq&   (a+b+c+f+j+x+2z)I(A;B|C)  \\
	&+& (-a+b+c+e+j+x+z)I(A;C|B)  \\
	&+& (d+j+z)I(B;C|A)+(e+z)I(A;B|D)  \\
	&+& fI(A;D|B)+(g+x)I(B;D|A)  \\
	&+& (h+j+z)I(C;D)+iI(C;D|A)+xI(C;D|B)
\end{eqnarray*}
 Using:    $R$ is copy of $C$ over $AB$    
\newline Substitution: $A$ $R$ $C$ $D$

{\bf Rule [8]}
Given:
\begin{eqnarray*}&&     aI(A;B)   \\
	&\leq&   (h+a'+c'+z)I(A;B|C)+cI(A;C|B)  \\
	&+& dI(B;C|A)+aI(A;B|D)+fI(A;D|B)  \\
	&+& gI(B;D|A)+hI(C;D)+iI(C;D|A)  \\
	&+& jI(C;D|B)
\end{eqnarray*}
and
\begin{eqnarray*}&&     a'I(A;B)   \\
	&\leq&   (a'+b')I(A;B|C)+c'I(A;C|B)  \\
	&+& (d+h+d'+j'+z)I(B;C|A)+e'I(A;B|D)  \\
	&+& f'I(A;D|B)+(a+g+z)I(B;D|A)  \\
	&+& h'I(C;D)+i'I(C;D|A)  \\
	&+& (j'+z)I(C;D|B)
\end{eqnarray*}
Get:
\begin{eqnarray*}
&&    (a+a'+z)I(A;B)   \\
	&\leq&   (h+2a'+b'+c'+f'+j'+2z)I(A;B|C)  \\
	&+& (c+h+j+b'+c'+e'+j'+z)I(A;C|B)  \\
	&+& (d+h+i+d'+j'+z)I(B;C|A)  \\
	&+& (a+e'+z)I(A;B|D)+(f+f')I(A;D|B)  \\
	&+& (g)I(B;D|A)+(h+h'+j'+z)I(C;D)  \\
	&+& (i+i')I(C;D|A)+(j)I(C;D|B)
\end{eqnarray*}
 Using:    $R$ is copy of $C$ over $AB$    
\newline Substitutions: $AR$ $BR$ $CR$ D;                   $A$ $R$ $C$ $D$

{\bf Rule [9]}
Given:
\begin{eqnarray*}&&     aI(A;B)   \\
	&\leq&   (a+b+f+g+z)I(A;B|C)+cI(A;C|B)  \\
	&+& dI(B;C|A)+eI(A;B|D)+fI(A;D|B)  \\
	&+& (g+g_2+z)I(B;D|A)+hI(C;D)  \\
	&+& iI(C;D|A)+(j+z)I(C;D|B)
\end{eqnarray*}
Get:
\begin{eqnarray*}
&&    (a+z)I(A;B)   \\
	&\leq&   (a+b+f+g+2z)I(A;B|C)  \\
	&+& (c+d+e+f+g+g_2+z)I(A;C|B)  \\
	&+& (a+c+d+f+g+j+z)I(B;C|A)  \\
	&+& (e+z)I(A;B|D)+(f)I(A;D|B)  \\
	&+& (g_2)I(B;D|A)+(h+z)I(C;D)  \\
	&+& (i)I(C;D|A)+(j)I(C;D|B)
\end{eqnarray*}
 Using:    $R$ is copy of $A$ over $BC$    
\newline Substitution: $A$ $R$ $C$ $D$

{\bf Rule [10]}
Given:
\begin{eqnarray*}&&     aI(A;B)   \\
	&\leq&   bI(A;B|C)+(c+x+z)I(A;C|B)  \\
	&+& dI(B;C|A)+eI(A;B|D)+fI(A;D|B)  \\
	&+& gI(B;D|A)+hI(C;D)+(i+x+z)I(C;D|A)  \\
	&+& (j+z)I(C;D|B)
\end{eqnarray*}
Get:
\begin{eqnarray*}
&&    (a+z)I(A;B)   \\
	&\leq&   (b+d+h+i+x+2z)I(A;B|C)  \\
	&+& (c+j_1+x+z)I(A;C|B)  \\
	&+& (b+d+x+z)I(B;C|A)+(e+z)I(A;B|D)  \\
	&+& (f)I(A;D|B)+(g)I(B;D|A)  \\
	&+& (h+z)I(C;D)+(i)I(C;D|A)  \\
	&+& (j_2)I(C;D|B)
\end{eqnarray*}
 Using:    $R$ is copy of $A$ over $BC$    
\newline Substitution: $A$ $B$ $R$ $D$

{\bf Rule [11]}
Given:
\begin{eqnarray*}&&     aI(A;B)   \\
	&\leq&   bI(A;B|C)+cI(A;C|B)+(d+x)I(B;C|A)  \\
	&+& (a+d+e)I(A;B|D)+fI(A;D|B)  \\
	&+& gI(B;D|A)+aI(C;D)+iI(C;D|A)  \\
	&+& jI(C;D|B)
\end{eqnarray*}
Get:
\begin{eqnarray*}
&&    (a+d)I(A;B)   \\
	&\leq&   (a+b+2d+i+x)I(A;B|C)  \\
	&+& (a+c+d+e+f+g)I(A;C|B)  \\
	&+& (d+x)I(B;C|A)+(a+d+e)I(A;B|D)  \\
	&+& fI(A;D|B)+gI(B;D|A)+(a+d)I(C;D)  \\
	&+& iI(C;D|A)+jI(C;D|B)
\end{eqnarray*}
 Using:    $R$ is copy of $A$ over $BC$    
\newline Substitution: $AR$ $BR$ $CR$ $DR$

{\bf Rule [12]}
Given:
\begin{eqnarray*}&&     aI(A;B)   \\
	&\leq&   (a+z)I(A;B|C)+cI(A;C|B)+dI(B;C|A)  \\
	&+& (a+x+z)I(A;B|D)+fI(A;D|B)  \\
	&+& gI(B;D|A)+aI(C;D)+iI(C;D|A)  \\
	&+& jI(C;D|B)
\end{eqnarray*}
Get:
\begin{eqnarray*}
&&    (a+z)I(A;B)   \\
	&\leq&   (a+2z)I(A;B|C)+(a+c+j+z)I(A;C|B)  \\
	&+& (a+d+i+x+z)I(B;C|A)  \\
	&+& (a+x+z)I(A;B|D)+fI(A;D|B)  \\
	&+& gI(B;D|A)+(a+z)I(C;D)+iI(C;D|A)  \\
	&+& jI(C;D|B)
\end{eqnarray*}
 Using:    $R$ is copy of $C$ over $AB$    
\newline Substitution: $AR$ $BR$ $CR$ $DR$

{\bf Rule [13]}
Given:
\begin{eqnarray*}&&     aI(A;B)   \\
	&\leq&   (a+y+z)I(A;B|C)+cI(A;C|B)  \\
	&+& dI(B;C|A)+(a+z)I(A;B|D)+fI(A;D|B)  \\
	&+& gI(B;D|A)+aI(C;D)+iI(C;D|A)  \\
	&+& jI(C;D|B)
\end{eqnarray*}
and
\begin{eqnarray*}&&     a'I(A;B)   \\
	&\leq&   b'I(A;B|C)+c'I(A;C|B)+d'I(B;C|A)  \\
	&+& (h'+y)I(A;B|D)+f'I(A;D|B)  \\
	&+& g'I(B;D|A)+h'I(C;D)+i'I(C;D|A)  \\
	&+& j'I(C;D|B)
\end{eqnarray*}
Get:
\begin{eqnarray*}
&&    (a+a'+y+z)I(A;B)   \\
	&\leq&   (a+b'+i'+2y+2z)I(A;B|C)  \\
	&+& (a+c+j+c'+f'+h'+i'+j'+y+z)  \\
	&& I(A;C|B)  \\
	&+& (a+d+i+d'+g'+h'+i'+y+z)I(B;C|A)  \\
	&+& (a+h'+y+z)I(A;B|D)+(f+f')I(A;D|B)  \\
	&+& (g+g')I(B;D|A)+(a+h'+y+z)I(C;D)  \\
	&+& (i+i')I(C;D|A)+(j+j')I(C;D|B)
\end{eqnarray*}
 Using:    $R$ is copy of $C$ over $AB$    
\newline Substitutions: $AR$ $BR$ $CR$ DR;                   $A$ $B$ $C$ $DR$

{\bf Rule [14]}
Given:
\begin{eqnarray*}&&     aI(A;B)   \\
	&\leq&   (a+b'+c'+z)I(A;B|C)+cI(A;C|B)  \\
	&+& dI(B;C|A)+(a+e'+f'+z)I(A;B|D)  \\
	&+& fI(A;D|B)+gI(B;D|A)+aI(C;D)  \\
	&+& iI(C;D|A)+jI(C;D|B)
\end{eqnarray*}
and
\begin{eqnarray*}&&     a'I(A;B)   \\
	&\leq&   b'I(A;B|C)+c'I(A;C|B)  \\
	&+& (a+d+i+j'+z)I(B;C|A)+e'I(A;B|D)  \\
	&+& f'I(A;D|B)+(a+g+i+z)I(B;D|A)  \\
	&+& h'I(C;D)+i'I(C;D|A)  \\
	&+& (j'+z)I(C;D|B)
\end{eqnarray*}
Get:
\begin{eqnarray*}
&&    (a+a'+z)I(A;B)   \\
	&\leq&   (a+a'+b'+c'+f'+j'+2z)I(A;B|C)  \\
	&+& (a+c+j-a'+b'+c'+e'+j'+z)I(A;C|B)  \\
	&+& (a+d+i+j'+z)I(B;C|A)  \\
	&+& (a+e'+z)I(A;B|D)+(f+f')I(A;D|B)  \\
	&+& gI(B;D|A)+(a+h'+j'+z)I(C;D)  \\
	&+& (i+i')I(C;D|A)+jI(C;D|B)
\end{eqnarray*}
 Using:    $R$ is copy of $C$ over $AB$    
\newline Substitutions: $AR$ $BR$ $CR$ DR;                   $A$ $R$ $C$ $D$

{\bf Rule [15]}
Given:
\begin{eqnarray*}&&     aI(A;B)   \\
	&\leq&   aI(A;B|C)+cI(A;C|B)+dI(B;C|A)  \\
	&+& (h+a')I(A;B|D)+fI(A;D|B)  \\
	&+& gI(B;D|A)+hI(C;D)+iI(C;D|A)  \\
	&+& jI(C;D|B)
\end{eqnarray*}
and
\begin{eqnarray*}&&     a'I(A;B)   \\
	&\leq&   b'I(A;B|C)+c'I(A;C|B)  \\
	&+& (a+d')I(B;C|A)+(a'+e')I(A;B|D)  \\
	&+& f'I(A;D|B)+(g+h+g')I(B;D|A)  \\
	&+& h'I(C;D)+i'I(C;D|A)+j'I(C;D|B)
\end{eqnarray*}
Get:
\begin{eqnarray*}
&&    (a+a')I(A;B)   \\
	&\leq&   (a+a'+b'+c'+f'+j')I(A;B|C)  \\
	&+& (c+h+j+b'+c'+e'+j')I(A;C|B)  \\
	&+& (d+h+i+d'+g'_1+j')I(B;C|A)  \\
	&+& (h+a'+e')I(A;B|D)+(f+f')I(A;D|B)  \\
	&+& (g+g'_2)I(B;D|A)+(h+h'+j')I(C;D)  \\
	&+& (i+i')I(C;D|A)+jI(C;D|B)
\end{eqnarray*}
 Using:    $R$ is copy of $C$ over $AB$    
\newline Substitutions: $AR$ $BR$ $C$ DR;                   $A$ $R$ $C$ $D$

{\bf Rule [16]}
Given:
\begin{eqnarray*}&&     aI(A;B)   \\
	&\leq&   aI(A;B|C)+cI(A;C|B)  \\
	&+& (-a+a'+c'+d'+f'+x)I(B;C|A)  \\
	&+& (e'+f'+x)I(A;B|D)+fI(A;D|B)  \\
	&+& gI(B;D|A)+(x-w)I(C;D)+iI(C;D|A)  \\
	&+& jI(C;D|B)
\end{eqnarray*}
and
\begin{eqnarray*}&&     a'I(A;B)   \\
	&\leq&   (a'+b'+f'+x)I(A;B|C)+c'I(A;C|B)  \\
	&+& d'I(B;C|A)+e'I(A;B|D)+f'I(A;D|B)  \\
	&+& (g+x)I(B;D|A)+h'I(C;D)+i'I(C;D|A)  \\
	&+& (j'+w)I(C;D|B)
\end{eqnarray*}
Get:
\begin{eqnarray*}
&&    (a+a'+w)I(A;B)   \\
	&\leq&   (a+i+a'+b'+f'+w+x)I(A;B|C)  \\
	&+& (-a+c+f+g+c'+d'+e'+f'-w+2x)  \\
	&& I(A;C|B)  \\
	&+& (-a+a'+c'+d'+f'+j'+x)I(B;C|A)  \\
	&+& (e'+x)I(A;B|D)+(f+f')I(A;D|B)  \\
	&+& (g)I(B;D|A)+(h'+x)I(C;D)  \\
	&+& (i+i')I(C;D|A)+(j+j')I(C;D|B)
\end{eqnarray*}
 Using:    $R$ is copy of $A$ over $BC$    
\newline Substitutions: $AR$ $BR$ $C$ DR;                   $A$ $R$ $C$ $D$

{\bf Rule [17]}
Given:
\begin{eqnarray*}&&     aI(A;B)   \\
	&\leq&   (h+z)I(A;B|C)+cI(A;C|B)+dI(B;C|A)  \\
	&+& eI(A;B|D)+fI(A;D|B)+gI(B;D|A)  \\
	&+& hI(C;D)+iI(C;D|A)+jI(C;D|B)
\end{eqnarray*}
and
\begin{eqnarray*}&&     a'I(A;B)   \\
	&\leq&   b'I(A;B|C)+c'I(A;C|B)+d'I(B;C|A)  \\
	&+& (h'+z)I(A;B|D)+f'I(A;D|B)  \\
	&+& g'I(B;D|A)+h'I(C;D)+i'I(C;D|A)  \\
	&+& j'I(C;D|B)
\end{eqnarray*}
Get:
\begin{eqnarray*}
&&    (a+a'+z)I(A;B)   \\
	&\leq&   (h+b'+i'+2z)I(A;B|C)  \\
	&+& (2c+h+j+c'+f'+h'+i'+j'+z)I(A;C|B)  \\
	&+& (2d+h+i+d'+g'+h'+i'+z)I(B;C|A)  \\
	&+& (e+h'+z)I(A;B|D)+(f+f')I(A;D|B)  \\
	&+& (g+g')I(B;D|A)+(h+h'+z)I(C;D)  \\
	&+& (i+i')I(C;D|A)+(j+j')I(C;D|B)
\end{eqnarray*}
 Using:    $R$ is copy of $C$ over $AB$    
\newline Substitutions: $A$ $B$ $CR$ D;                   $A$ $B$ $C$ $DR$

{\bf Rule [18]}
Given:
\begin{eqnarray*}&&     aI(A;B)   \\
	&\leq&   bI(A;B|C)+(i+b'+c'+z)I(A;C|B)  \\
	&+& dI(B;C|A)+eI(A;B|D)  \\
	&+& (e'+z)I(A;D|B)+gI(B;D|A)+hI(C;D)  \\
	&+& (i+z_1)I(C;D|A)+jI(C;D|B)
\end{eqnarray*}
and
\begin{eqnarray*}&&     a'I(A;B)   \\
	&\leq&   b'I(A;B|C)+c'I(A;C|B)  \\
	&+& (b+j'+z)I(B;C|A)+e'I(A;B|D)  \\
	&+& f'I(A;D|B)+(e+g+z)I(B;D|A)  \\
	&+& h'I(C;D)+i'I(C;D|A)  \\
	&+& (j'+z_2)I(C;D|B)
\end{eqnarray*}
Get:
\begin{eqnarray*}
&&    (a+a'+z)I(A;B)   \\
	&\leq&  (a+b+d+g+i+a'+b'+c'+f'+j'+2z)  \\
	&& I(A;B|C)  \\
	&+& (i-a'+b'+c'+e'+j'+z)I(A;C|B)  \\
	&+& (-a+b+d+e+i+j'+z)I(B;C|A)  \\
	&+& (e+e'+z)I(A;B|D)+f'I(A;D|B)  \\
	&+& (g)I(B;D|A)+(h+i+h'+j'+z)I(C;D)  \\
	&+& i'I(C;D|A)+jI(C;D|B)
\end{eqnarray*}
 Using:    $R$ is copy of $C$ over $AB$    
\newline Substitutions: $R$ $B$ $C$ D;                   $A$ $R$ $C$ $D$

{\bf Rule [19]}
Given:
\begin{eqnarray*}&&     aI(A;B)   \\
	&\leq&   (a+z_1)I(A;B|C)+cI(A;C|B)  \\
	&+& dI(B;C|A)+(a+z)I(A;B|D)+fI(A;D|B)  \\
	&+& gI(B;D|A)+aI(C;D)+iI(C;D|A)  \\
	&+& jI(C;D|B)
\end{eqnarray*}
and
\begin{eqnarray*}&&     a'I(A;B)   \\
	&\leq&   (h'+z_2)I(A;B|C)+c'I(A;C|B)  \\
	&+& d'I(B;C|A)+e'I(A;B|D)+f'I(A;D|B)  \\
	&+& g'I(B;D|A)+h'I(C;D)+i'I(C;D|A)  \\
	&+& j'I(C;D|B)
\end{eqnarray*}
Get:
\begin{eqnarray*}
&&    (a+a'+z)I(A;B)   \\
	&\leq&   (a+h'+2z)I(A;B|C)  \\
	&+& (a+c+j+2c'+h'+j'+z)I(A;C|B)  \\
	&+& (a+d+i+2d'+h'+i'+z)I(B;C|A)  \\
	&+& (a+e'+z)I(A;B|D)+(f+f')I(A;D|B)  \\
	&+& (g+g')I(B;D|A)+(a+h'+z)I(C;D)  \\
	&+& (i+i')I(C;D|A)+(j+j')I(C;D|B)
\end{eqnarray*}
 Using:    $R$ is copy of $C$ over $AB$    
\newline Substitutions: $AR$ $BR$ $CR$ DR;                   $A$ $B$ $CR$ $D$

{\bf Rule [20]}
Given:
\begin{eqnarray*}&&     aI(A;B)   \\
	&\leq&   (a+b)I(A;B|C)+cI(A;C|B)  \\
	&+& (j+x+z)I(B;C|A)+eI(A;B|D)  \\
	&+& fI(A;D|B)+xI(B;D|A)+hI(C;D)  \\
	&+& iI(C;D|A)+(j+x)I(C;D|B)
\end{eqnarray*}
Get:
\begin{eqnarray*}
&&    (a+x+z)I(A;B)   \\
	&\leq&   (a+b+x+2z)I(A;B|C)+(c+z)I(A;C|B)  \\
	&+& (j+z)I(B;C|A)  \\
	&+& (a+c+e+f+j+2x+z)I(A;B|D)  \\
	&+& (b+e+f+j+x)I(A;D|B)+(x)I(B;D|A)  \\
	&+& (h+x+z)I(C;D)+(i)I(C;D|A)  \\
	&+& (j)I(C;D|B)
\end{eqnarray*}
 Using:    $RS$ is copy of $CD$ over $AB$    
\newline Substitution: $A$ $D$ $R$ $S$

{\bf Rule [21]}
Given:
\begin{eqnarray*}&&     aI(A;B)   \\
	&\leq&   (a+b)I(A;B|C)+cI(A;C|B)  \\
	&+& (j+w+x)I(B;C|A)+eI(A;B|D)  \\
	&+& fI(A;D|B)+(g+x)I(B;D|A)+hI(C;D)  \\
	&+& iI(C;D|A)+(j+x)I(C;D|B)
\end{eqnarray*}
and
\begin{eqnarray*}&&     a'I(A;B)   \\
	&\leq&   (a'+b')I(A;B|C)+c'I(A;C|B)  \\
	&+& zI(B;C|A)+e'I(A;B|D)+f'I(A;D|B)  \\
	&+& (g'+j'-w+z)I(B;D|A)+h'I(C;D)  \\
	&+& i'I(C;D|A)+(j'+z)I(C;D|B)
\end{eqnarray*}
Get:
\begin{eqnarray*}
&&    (a+a'+x+z)I(A;B)   \\
	&\leq&   (a+b+j+2a'+b'+c'+f'+x+2z)I(A;B|C)  \\
	&+& (c+b'+c'+e'+j'+z)I(A;C|B)  \\
	&+& (j'+z)I(B;C|A)  \\
	&+& (a+c+e+f+e'+j'+2x+z)I(A;B|D)  \\
	&+& (b+e+f+f'+j+x)I(A;D|B)  \\
	&+& (g+g'+j+x)I(B;D|A)  \\
	&+& (h+j+h'+j'+x+z)I(C;D)  \\
	&+& (i+i')I(C;D|A)
\end{eqnarray*}
 Using:    $RS$ is copy of $CD$ over $AB$    
\newline Substitutions: $A$ $D$ $R$ S;                   $A$ $C$ $R$ $S$

{\bf Rule [22]}
Given:
\begin{eqnarray*}&&     aI(A;B)   \\
	&\leq&   (h+z)I(A;B|C)+cI(A;C|B)+dI(B;C|A)  \\
	&+& aI(A;B|D)+fI(A;D|B)+gI(B;D|A)  \\
	&+& hI(C;D)+iI(C;D|A)+jI(C;D|B)
\end{eqnarray*}
and
\begin{eqnarray*}&&     a'I(A;B)   \\
	&\leq&   (a'+b')I(A;B|C)+c'I(A;C|B)  \\
	&+& (a+j'+z)I(B;C|A)+e'I(A;B|D)  \\
	&+& f'I(A;D|B)+g'I(B;D|A)+h'I(C;D)  \\
	&+& i'I(C;D|A)+j'I(C;D|B)
\end{eqnarray*}
Get:
\begin{eqnarray*}
&&    (a+a'+z)I(A;B)   \\
	&\leq&   (h+a'+b'+j'+2z)I(A;B|C)  \\
	&+& (c+h+j+c'+z)I(A;C|B)  \\
	&+& (d+h+i+z)I(B;C|A)  \\
	&+& (a+a'+c'+e'+f'+z)I(A;B|D)  \\
	&+& (f+b'+e'+f'+j')I(A;D|B)  \\
	&+& (g+g'+j')I(B;D|A)  \\
	&+& (h+h'+j'+z)I(C;D)+(i+i')I(C;D|A)  \\
	&+& jI(C;D|B)
\end{eqnarray*}
 Using:    $RS$ is copy of $CD$ over $AB$    
\newline Substitutions: $AC$ $BC$ $CR$ S;                   $A$ $D$ $R$ $S$

{\bf Rule [23]}
Given:
\begin{eqnarray*}&&     aI(A;B)   \\
	&\leq&   (a+b+f+g+x)I(A;B|C)+cI(A;C|B)  \\
	&+& dI(B;C|A)+eI(A;B|D)+fI(A;D|B)  \\
	&+& (g+g_2)I(B;D|A)+hI(C;D)+iI(C;D|A)  \\
	&+& (j+x)I(C;D|B)
\end{eqnarray*}
Get:
\begin{eqnarray*}
&&    (a+2x)I(A;B)   \\
	&\leq&   (a+b+f+g+3x)I(A;B|C)  \\
	&+& (c+d+e+f+g+g_2+4x)I(A;C|B)  \\
	&+& (a+c+d+f+g+j+2x)I(B;C|A)  \\
	&+& (e+3x)I(A;B|D)+(f+x)I(A;D|B)  \\
	&+& (g_2)I(B;D|A)+(h+2x)I(C;D)  \\
	&+& (i)I(C;D|A)+jI(C;D|B)
\end{eqnarray*}
 Using:    $R$ is copy of $A$ over BC, $S$ is copy of $A$ over $BDR$    
\newline Substitution: $A$ $R$ $C$ $D$

{\bf Rule [24]}
Given:
\begin{eqnarray*}&&     (w-x)I(A;B)   \\
	&\leq&   wI(A;B|C)+cI(A;C|B)+dI(B;C|A)  \\
	&+& (a'+b'+c'+w+z)I(A;B|D)+fI(A;D|B)  \\
	&+& gI(B;D|A)+(w-x)I(C;D)+iI(C;D|A)  \\
	&+& jI(C;D|B)
\end{eqnarray*}
and
\begin{eqnarray*}&&     a'I(A;B)   \\
	&\leq&   (a'+b')I(A;B|C)+c'I(A;C|B)  \\
	&+& (g+i+j'+w+z)I(B;C|A)+e'I(A;B|D)  \\
	&+& f'I(A;D|B)+(g'+z)I(B;D|A)  \\
	&+& h'I(C;D)+i'I(C;D|A)  \\
	&+& (j'+z)I(C;D|B)
\end{eqnarray*}
Get:
\begin{eqnarray*}
&&    (a'+w+z)I(A;B)   \\
	&\leq&   (a'+b'+j'+w+x+z)I(A;B|C)  \\
	&+& (c+j+c'+w)I(A;C|B)  \\
	&+& (d+i+w)I(B;C|A)  \\
	&+& (a'+c'+e'+f'+w+2z)I(A;B|D)  \\
	&+& (f+b'+e'+f'+j'+z)I(A;D|B)  \\
	&+& (g+g'+j'+z)I(B;D|A)  \\
	&+& (h'+j'+w+z)I(C;D)+(i+i')I(C;D|A)  \\
	&+& jI(C;D|B)
\end{eqnarray*}
 Using:    $RS$ is copy of $CD$ over $AB$    
\newline Substitutions: $AC$ $BC$ $CR$ CS;                   $A$ $D$ $R$ $S$

{\bf Rule [25]}
Given:
\begin{eqnarray*}&&     aI(A;B)   \\
	&\leq&   bI(A;B|C)+cI(A;C|B)+dI(B;C|A)  \\
	&+& (h+z)I(A;B|D)+fI(A;D|B)+gI(B;D|A)  \\
	&+& hI(C;D)+iI(C;D|A)+jI(C;D|B)
\end{eqnarray*}
Get:
\begin{eqnarray*}
&&    (a+z)I(A;B)   \\
	&\leq&   (b+z)I(A;B|C)+(c+f+h+j)I(A;C|B)  \\
	&+& (d+g+h+i)I(B;C|A)+(h+2z)I(A;B|D)  \\
	&+& (f+z)I(A;D|B)+(g+z)I(B;D|A)  \\
	&+& (h+z)I(C;D)+(i)I(C;D|A)+jI(C;D|B)
\end{eqnarray*}
 Using:    $RS$ is copy of $CD$ over $AB$    
\newline Substitution: $A$ $B$ $C$ $DR$

{\bf Rule [26]}
Given:
\begin{eqnarray*}&&     aI(A;B)   \\
	&\leq&   bI(A;B|C)+cI(A;C|B)  \\
	&+& (j+e'+f'+w+x+z)I(B;C|A)  \\
	&+& (a+e)I(A;B|D)+fI(A;D|B)  \\
	&+& (g+w)I(B;D|A)+hI(C;D)+iI(C;D|A)  \\
	&+& (j+w+x)I(C;D|B)
\end{eqnarray*}
and
\begin{eqnarray*}&&     a'I(A;B)   \\
	&\leq&   (a'+b')I(A;B|C)+c'I(A;C|B)  \\
	&+& zI(B;C|A)+e'I(A;B|D)+f'I(A;D|B)  \\
	&+& (b+c+g'+j'+w+x+z)I(B;D|A)  \\
	&+& h'I(C;D)+i'I(C;D|A)  \\
	&+& (j'+z)I(C;D|B)
\end{eqnarray*}
Get:
\begin{eqnarray*}
&&    (a+a'+w+z)I(A;B)   \\
	&\leq&   (b+j+2a'+b'+c'+f'+w+2z)I(A;B|C)  \\
	&+& (c+j'+z)I(A;C|B)  \\
	&+& (d+b'+c'+e'+j'+x+z)I(B;C|A)  \\
	&+& (2a+c+e+f+e'+j'+2w+x+z)I(A;B|D)  \\
	&+& (b+e+f+j+g'+w+x)I(A;D|B)  \\
	&+& (g+j+f'+w)I(B;D|A)  \\
	&+& (h+j+h'+j'+w+z)I(C;D)+i'I(C;D|A)  \\
	&+& xI(C;D|B)
\end{eqnarray*}
 Using:    $RS$ is copy of $CD$ over $AB$    
\newline Substitutions: $A$ $D$ $R$ S;                   $B$ $C$ $R$ $S$

{\bf Rule [27]}
Given:
\begin{eqnarray*}&&     aI(A;B)   \\
	&\leq&   bI(A;B|C)+cI(A;C|B)+(d+z)I(B;C|A)  \\
	&+& eI(A;B|D)+fI(A;D|B)+(x+z)I(B;D|A)  \\
	&+& hI(C;D)+iI(C;D|A)+(x+z)I(C;D|B)
\end{eqnarray*}
Get:
\begin{eqnarray*}
&&    (a+2x+z)I(A;B)   \\
	&\leq&   (a+b+c+f+5x+2z)I(A;B|C)  \\
	&+& (-a+b+c+e+3x+z)I(A;C|B)  \\
	&+& (d+x+z)I(B;C|A)+(e+2x+z)I(A;B|D)  \\
	&+& (f)I(A;D|B)+(h+2x+z)I(C;D)  \\
	&+& (i)I(C;D|A)
\end{eqnarray*}
 Using:    $R$ is copy of $C$ over $AB$    $S$ is copy of $R$ over $AC$    
\newline Substitution: $A$ $R$ $C$ $D$

{\bf Rule [28]}
Given:
\begin{eqnarray*}&&     aI(A;B)   \\
	&\leq&   bI(A;B|C)+cI(A;C|B)  \\
	&+& (j+g'+h'+i'+z)I(B;C|A)  \\
	&+& (a+e)I(A;B|D)+fI(A;D|B)+zI(B;D|A)  \\
	&+& hI(C;D)+iI(C;D|A)+(j+z)I(C;D|B)
\end{eqnarray*}
and
\begin{eqnarray*}&&     a'I(A;B)   \\
	&\leq&   b'I(A;B|C)+c'I(A;C|B)+d'I(B;C|A)  \\
	&+& (b+c+j+h'+z)I(A;B|D)+f'I(A;D|B)  \\
	&+& g'I(B;D|A)+h'I(C;D)+i'I(C;D|A)  \\
	&+& j'I(C;D|B)
\end{eqnarray*}
Get:
\begin{eqnarray*}
&&    (a+a'+z)I(A;B)   \\
	&\leq&   (b+b'+z)I(A;B|C)  \\
	&+& (c'+f'+h'+j')I(A;C|B)  \\
	&+& (j+d'+g'+h'+i')I(B;C|A)  \\
	&+& (2a+e+f+j+h'+2z)I(A;B|D)  \\
	&+& (b+e+f+j+f'+z)I(A;D|B)  \\
	&+& (g'+z)I(B;D|A)+(h+h'+z)I(C;D)  \\
	&+& (i+i')I(C;D|A)+(j+j')I(C;D|B)
\end{eqnarray*}
 Using:    $RS$ is copy of $CD$ over $AB$    
\newline Substitutions: $A$ $D$ $R$ S;                   $A$ $B$ $C$ $DR$

{\bf Rule [29]}
Given:
\begin{eqnarray*}&&     aI(A;B)   \\
	&\leq&   bI(A;B|C)+xI(A;C|B)+dI(B;C|A)  \\
	&+& eI(A;B|D)+(f+x)I(A;D|B)+gI(B;D|A)  \\
	&+& hI(C;D)+xI(C;D|A)+jI(C;D|B)
\end{eqnarray*}
Get:
\begin{eqnarray*}
&&    (a+2x)I(A;B)   \\
	&\leq&   (2b+d+e+g+5x)I(A;B|C)+(x)I(A;C|B)  \\
	&+& (b+2d+e+g+3x)I(B;C|A)  \\
	&+& (e+2x)I(A;B|D)+fI(A;D|B)  \\
	&+& gI(B;D|A)+(h+2x)I(C;D)+jI(C;D|B)
\end{eqnarray*}
 Using:    $R$ is copy of $A$ over $BC$    $S$ is copy of $R$ over $AB$    
\newline Substitution: $S$ $B$ $C$ $D$

{\bf Rule [30]}
Given:
\begin{eqnarray*}&&     aI(A;B)   \\
	&\leq&   (a+g+x)I(A;B|C)+cI(A;C|B)  \\
	&+& dI(B;C|A)+eI(A;B|D)+(f+x)I(A;D|B)  \\
	&+& gI(B;D|A)+hI(C;D)+xI(C;D|A)  \\
	&+& jI(C;D|B)
\end{eqnarray*}
Get:
\begin{eqnarray*}
&&    (a+2x)I(A;B)   \\
	&\leq&   (a+g+3x)I(A;B|C)  \\
	&+& (a+2c+d+e+f+2g+3x)I(A;C|B)  \\
	&+& (a+c+2d+e+f+2g+3x)I(B;C|A)  \\
	&+& (e+f+2x)I(A;B|D)+(g)I(B;D|A)  \\
	&+& (h+2x)I(C;D)+jI(C;D|B)
\end{eqnarray*}
 Using:    $R$ is copy of $A$ over $BC$    $S$ is copy of $R$ over $AC$    
\newline Substitution: $S$ $B$ $C$ $D$

{\bf Rule [31]}
Given:
\begin{eqnarray*}&&     aI(A;B)   \\
	&\leq&   bI(A;B|C)+cI(A;C|B)+xI(B;C|A)  \\
	&+& (a+x)I(A;B|D)+fI(A;D|B)+gI(B;D|A)  \\
	&+& aI(C;D)+iI(C;D|A)+xI(C;D|B)
\end{eqnarray*}
Get:
\begin{eqnarray*}
&&    (a+2x)I(A;B)   \\
	&\leq&   (a+b+i+4x)I(A;B|C)  \\
	&+& (2a+b+2c+2f+2g+4x)I(A;C|B)  \\
	&+& (c+x)I(B;C|A)+(a+2x)I(A;B|D)  \\
	&+& (b+c+f+g+x)I(A;D|B)  \\
	&+& (a+f+g+x)I(B;D|A)+(a+2x)I(C;D)  \\
	&+& (i)I(C;D|A)
\end{eqnarray*}
 Using:    $R$ is copy of $A$ over $BC$    $S$ is copy of $A$ over $BDR$    
\newline Substitution: $AR$ $RS$ $CR$ $DR$

{\bf Rule [32]}
Given:
\begin{eqnarray*}&&     (x-w)I(A;B)   \\
	&\leq&   xI(A;B|C)+cI(A;C|B)+dI(B;C|A)  \\
	&+& (x+z)I(A;B|D)+fI(A;D|B)+gI(B;D|A)  \\
	&+& (x-w)I(C;D)+iI(C;D|A)+jI(C;D|B)
\end{eqnarray*}
Get:
\begin{eqnarray*}
&&    (x+z)I(A;B)   \\
	&\leq&   (w+x+z)I(A;B|C)+(c+j+x)I(A;C|B)  \\
	&+& (d+i+x)I(B;C|A)+(x+2z)I(A;B|D)  \\
	&+& (f+z)I(A;D|B)+(g+z)I(B;D|A)  \\
	&+& (x+z)I(C;D)+(i)I(C;D|A)+jI(C;D|B)
\end{eqnarray*}
 Using:    $RS$ is copy of $CD$ over $AB$    
\newline Substitution: $AC$ $BC$ $CR$ $CS$

{\bf Rule [33]}
Given:
\begin{eqnarray*}&&     aI(A;B)   \\
	&\leq&   bI(A;B|C)+cI(A;C|B)+dI(B;C|A)  \\
	&+& eI(A;B|D)+fI(A;D|B)+xI(B;D|A)  \\
	&+& hI(C;D)+xI(C;D|A)+xI(C;D|B)
\end{eqnarray*}
Get:
\begin{eqnarray*}
&&    (a+2x)I(A;B)   \\
	&\leq&   (b+e+4x)I(A;B|C)  \\
	&+& (c+f+h+2x)I(A;C|B)+(d+x)I(B;C|A)  \\
	&+& (e+f+h+3x)I(A;B|D)  \\
	&+& (e+f+x)I(A;D|B)+(h+2x)I(C;D)
\end{eqnarray*}
 Using:    $R$ is copy of $C$ over $AB$    $S$ is copy of $R$ over $AD$    
\newline Substitution: $A$ $B$ $C$ $S$

{\bf Rule [34]}
Given:
\begin{eqnarray*}&&     (a-d)I(A;B)   \\
	&\leq&   (a-w+x)I(A;B|C)+(c+x)I(A;C|B)  \\
	&+& dI(B;C|A)+(a+x)I(A;B|D)+fI(A;D|B)  \\
	&+& gI(B;D|A)+(a-d)I(C;D)+xI(C;D|A)  \\
	&+& jI(C;D|B)
\end{eqnarray*}
Get:
\begin{eqnarray*}
&&    (a+2x)I(A;B)   \\
	&\leq&   (a+d+3x)I(A;B|C)+(a+j+2x)I(A;C|B)  \\
	&+& (a+d+2x)I(B;C|A)+(a+2x)I(A;B|D)  \\
	&+& (a+f+g+x)I(A;D|B)  \\
	&+& (a+d+f+g-w+2x)I(B;D|A)  \\
	&+& (a+2x)I(C;D)+jI(C;D|B)
\end{eqnarray*}
 Using:    $R$ is copy of $C$ over $AB$    $S$ is copy of $B$ over $ADR$    
\newline Substitution: $RS$ $BR$ $CR$ $DR$

{\bf Rule [35]}
Given:
\begin{eqnarray*}&&     (e-w)I(A;B)   \\
	&\leq&   (e+x)I(A;B|C)+cI(A;C|B)+dI(B;C|A)  \\
	&+& eI(A;B|D)+fI(A;D|B)+gI(B;D|A)  \\
	&+& (e-w)I(C;D)+iI(C;D|A)+jI(C;D|B)
\end{eqnarray*}
and
\begin{eqnarray*}&&     (b'-w')I(A;B)   \\
	&\leq&   b'I(A;B|C)+c'I(A;C|B)+d'I(B;C|A)  \\
	&+& (b'-x)I(A;B|D)+f'I(A;D|B)  \\
	&+& g'I(B;D|A)+(b'-w')I(C;D)  \\
	&+& i'I(C;D|A)+j'I(C;D|B)
\end{eqnarray*}
Get:
\begin{eqnarray*}
&&    (e+b')I(A;B)   \\
	&\leq&   (e+b'+w')I(A;B|C)  \\
	&+& (c+b'+c'+j')I(A;C|B)  \\
	&+& (d+b'+d'+i')I(B;C|A)  \\
	&+& (e+w+b')I(A;B|D)  \\
	&+& (e+f+j+f')I(A;D|B)  \\
	&+& (e+g+i+g')I(B;D|A)+(e+b')I(C;D)  \\
	&+& (i+i')I(C;D|A)+(j+j')I(C;D|B)
\end{eqnarray*}
 Using:    $RS$ is copy of $CD$ over $AB$    
\newline Substitutions: $AD$ $BD$ $DR$ DS;                   $AC$ $BC$ $CR$ $CS$

{\bf Rule [36]}
Given:
\begin{eqnarray*}&&     aI(A;B)   \\
	&\leq&   (a+x)I(A;B|C)+cI(A;C|B)+dI(B;C|A)  \\
	&+& (a+z)I(A;B|D)+fI(A;D|B)+gI(B;D|A)  \\
	&+& aI(C;D)+iI(C;D|A)+jI(C;D|B)
\end{eqnarray*}
and
\begin{eqnarray*}&&     a'I(A;B)   \\
	&\leq&   b'I(A;B|C)+c'I(A;C|B)+d'I(B;C|A)  \\
	&+& (h'-x+z)I(A;B|D)+f'I(A;D|B)  \\
	&+& g'I(B;D|A)+h'I(C;D)+i'I(C;D|A)  \\
	&+& j'I(C;D|B)
\end{eqnarray*}
Get:
\begin{eqnarray*}
&&    (a+a'+z)I(A;B)   \\
	&\leq&   (a+b'+z)I(A;B|C)  \\
	&+& (c+c'+f'+h'+j')I(A;C|B)  \\
	&+& (d+d'+g'+h'+i')I(B;C|A)  \\
	&+& (a+h'+2z)I(A;B|D)  \\
	&+& (a+f+j+f'+z)I(A;D|B)  \\
	&+& (a+g+i+g'+z)I(B;D|A)  \\
	&+& (a+h'+z)I(C;D)+(i+i')I(C;D|A)  \\
	&+& (j+j')I(C;D|B)
\end{eqnarray*}
 Using:    $RS$ is copy of $CD$ over $AB$    
\newline Substitutions: $AD$ $BD$ $DR$ DS;                   $A$ $B$ $C$ $DR$

{\bf Rule [37]}
Given:
\begin{eqnarray*}&&     aI(A;B)   \\
	&\leq&   (a+b'+c'+j'+z+z')I(A;B|C)  \\
	&+& cI(A;C|B)+dI(B;C|A)  \\
	&+& (a+e'+f'+z')I(A;B|D)+fI(A;D|B)  \\
	&+& gI(B;D|A)+aI(C;D)+iI(C;D|A)  \\
	&+& jI(C;D|B)
\end{eqnarray*}
and
\begin{eqnarray*}&&     a'I(A;B)   \\
	&\leq&   b'I(A;B|C)+c'I(A;C|B)  \\
	&+& (a+d+i+j'+z+z')I(B;C|A)  \\
	&+& e'I(A;B|D)+f'I(A;D|B)  \\
	&+& (a+g+i+z')I(B;D|A)+h'I(C;D)  \\
	&+& i'I(C;D|A)+(j'+z')I(C;D|B)
\end{eqnarray*}
Get:
\begin{eqnarray*}
&&    (a+a'+z+z')I(A;B)   \\
	&\leq&   (a+b'+j'+2z+z')I(A;B|C)  \\
	&+& (c+c'+z)I(A;C|B)+(d+z)I(B;C|A)  \\
	&+& (a+a'+c'+e'+f'+z+2z')I(A;B|D)  \\
	&+& (a+f+j-a'+b'+e'+f'+j'+z')I(A;D|B)  \\
	&+& (a+g+i+j'+z')I(B;D|A)  \\
	&+& (a+h'+j'+z+z')I(C;D)  \\
	&+& (i+i')I(C;D|A)+(j)I(C;D|B)
\end{eqnarray*}
 Using:    $RS$ is copy of $CD$ over $AB$    
\newline Substitutions: $AD$ $BD$ $DR$ DS;                   $A$ $D$ $R$ $S$

{\bf Rule [38]}
Given:
\begin{eqnarray*}&&     aI(A;B)   \\
	&\leq&   (a+x)I(A;B|C)+(x-w)I(A;C|B)  \\
	&+& dI(B;C|A)+eI(A;B|D)+xI(A;D|B)  \\
	&+& gI(B;D|A)+aI(C;D)+xI(C;D|A)  \\
	&+& jI(C;D|B)
\end{eqnarray*}
Get:
\begin{eqnarray*}
&&    (a+2x)I(A;B)   \\
	&\leq&   (2a+d+g+4x)I(A;B|C)  \\
	&+& (2j+x)I(A;C|B)+(a+d+e+3x)I(B;C|A)  \\
	&+& (a+e+j+3x)I(A;B|D)+()I(A;D|B)  \\
	&+& (a+d+g-w+2x)I(B;D|A)+(a+2x)I(C;D)  \\
	&+& ()I(C;D|A)+jI(C;D|B)
\end{eqnarray*}
 Using:    $R$ is copy of $C$ over $AB$    $S$ is copy of $B$ over $ADR$    
\newline Substitution: $RS$ $BS$ $CS$ $DS$

{\bf Rule [39]}
Given:
\begin{eqnarray*}&&     aI(A;B)   \\
	&\leq&   (a+f+x)I(A;B|C)+cI(A;C|B)  \\
	&+& dI(B;C|A)+eI(A;B|D)+fI(A;D|B)  \\
	&+& xI(B;D|A)+hI(C;D)+iI(C;D|A)  \\
	&+& xI(C;D|B)
\end{eqnarray*}
Get:
\begin{eqnarray*}
&&    (2a+c+f+2x)I(A;B)   \\
	&\leq&   (3a+2c+3f+5x)I(A;B|C)  \\
	&+& (a+2c+2d+2e+3f+3x)I(A;C|B)  \\
	&+& (a+c+d+f+x)I(B;C|A)  \\
	&+& (a+c+e+f+2x)I(A;B|D)+(f)I(A;D|B)  \\
	&+& ()I(B;D|A)+(a+c+f+h+2x)I(C;D)  \\
	&+& (i)I(C;D|A)+I(C;D|B)
\end{eqnarray*}
 Using:    $R$ is copy of $A$ over $BC$    $S$ is copy of $AR$ over $BC$    
\newline Substitution: $A$ $S$ $C$ $D$

{\bf Rule [40]}
Given:
\begin{eqnarray*}&&     aI(A;B)   \\
	&\leq&   bI(A;B|C)+xI(A;C|B)+dI(B;C|A)  \\
	&+& eI(A;B|D)+fI(A;D|B)+gI(B;D|A)  \\
	&+& hI(C;D)+xI(C;D|A)+xI(C;D|B)
\end{eqnarray*}
Get:
\begin{eqnarray*}
&&    (a+b+2x)I(A;B)   \\
	&\leq&   (3b+2d+2h+5x)I(A;B|C)  \\
	&+& (b+3x)I(A;C|B)+(b+d+x)I(B;C|A)  \\
	&+& (b+e+2x)I(A;B|D)+(f)I(A;D|B)  \\
	&+& (g)I(B;D|A)+(b+h+2x)I(C;D)
\end{eqnarray*}
 Using:    $R$ is copy of $A$ over $BC$    $S$ is copy of $AR$ over $BC$    
\newline Substitution: $A$ $B$ $S$ $D$

{\bf Rule [41]}
Given:
\begin{eqnarray*}&&     (a+x)I(A;B)   \\
	&\leq&   (a+b+g+x)I(A;B|C)+cI(A;C|B)  \\
	&+& (d+x)I(B;C|A)+(d+x)I(A;B|D)  \\
	&+& fI(A;D|B)+gI(B;D|A)+(a+x)I(C;D)  \\
	&+& iI(C;D|A)+jI(C;D|B)
\end{eqnarray*}
Get:
\begin{eqnarray*}
&&    (a+b+d+2g+2x)I(A;B)   \\
	&\leq&   (2a+b+2d+4g+i+3x)I(A;B|C)  \\
	&+& (2a+3b+2c+2d+f+4g+i+5x)I(A;C|B)  \\
	&+& (d+g+x)I(B;C|A)  \\
	&+& (a+2b+d+2g+i+3x)I(A;B|D)  \\
	&+& (a+b+c+d+f+g+2x)I(A;D|B)  \\
	&+& (b+g)I(B;D|A)+(a+b+d+2g+2x)I(C;D)  \\
	&+& (i)I(C;D|A)+jI(C;D|B)
\end{eqnarray*}
 Using:    $R$ is copy of $A$ over $BC$    $S$ is copy of $A$ over $BDR$    
\newline Substitution: $ARS$ $BRS$ $CRS$ $DRS$

{\bf Rule [42]}
Given:
\begin{eqnarray*}&&     aI(A;B)   \\
	&\leq&   (a+b+g)I(A;B|C)+cI(A;C|B)  \\
	&+& (d+d')I(B;C|A)  \\
	&+& (a+2d+d'+e)I(A;B|D)+fI(A;D|B)  \\
	&+& gI(B;D|A)+aI(C;D)+iI(C;D|A)  \\
	&+& jI(C;D|B)
\end{eqnarray*}
Get:
\begin{eqnarray*}
&&    (2a+b+2d+d'+e+g)I(A;B)   \\
	&\leq&   (3a+b+3d+1.5d'+2e+g+i)I(A;B|C)  \\
	&+& (4a+3b+2c+4d+2.5d'+e+f+4g+i)  \\
	&& I(A;C|B)+(a+2d+d'+e)I(B;C|A)  \\
	&+& (3a+2b+3d+1.5d'+e+2g+i)I(A;B|D)  \\
	&+& (a+b+c+d+d'+f+g)I(A;D|B)  \\
	&+& (b+g)I(B;D|A)  \\
	&+& (2a+b+2d+d'+e+g)I(C;D)  \\
	&+& (i)I(C;D|A)+jI(C;D|B)
\end{eqnarray*}
 Using:    $R$ is copy of $A$ over $BC$    $S$ is copy of $A$ over $BDR$    
\newline Substitution: $ARS$ $BRS$ $CRS$ $DRS$

{\bf Rule [43]}
Given:
\begin{eqnarray*}&&     aI(A;B)   \\
	&\leq&   bI(A;B|C)+cI(A;C|B)+zI(B;C|A)  \\
	&+& eI(A;B|D)+fI(A;D|B)  \\
	&+& (b'+d'+z)I(B;D|A)+hI(C;D)  \\
	&+& iI(C;D|A)+zI(C;D|B)
\end{eqnarray*}
and
\begin{eqnarray*}&&     a'I(A;B)   \\
	&\leq&   b'I(A;B|C)+c'I(A;C|B)+d'I(B;C|A)  \\
	&+& e'I(A;B|D)+f'I(A;D|B)+g'I(B;D|A)  \\
	&+& h'I(C;D)+i'I(C;D|A)+j'I(C;D|B)
\end{eqnarray*}
Get:
\begin{eqnarray*}
&&    (a+a'+z)I(A;B)   \\
	&\leq&   (a+b+c+f+b'+2z)I(A;B|C)  \\
	&+& (-a+b+c+e+c'+z)I(A;C|B)  \\
	&+& (d'+z)I(B;C|A)+(e+e'+z)I(A;B|D)  \\
	&+& (f+f')I(A;D|B)  \\
	&+& (-a'+b'+e'+g'+i')I(B;D|A)  \\
	&+& (h+h'+z)I(C;D)+(i+i')I(C;D|A)  \\
	&+& (j')I(C;D|B)
\end{eqnarray*}
 Using:    $RS$ is copy of $CD$ over $AB$    
\newline Substitutions: $A$ $C$ $R$ S;                   $AD$ $B$ $R$ $S$


\section{Connection with inequalities of Xu, Wang, and Sun}\label{XWS}

Another, very similar, method that automatically generates non-Shannon inequalities was
presented by Xu, Wang, and Sun \cite{XuWangSun}.  As an example of their method,
they give four new inequalities and also a separate infinite list of inequalities.

The new inequalities given \cite{XuWangSun}, Section V, are summarized in the
following two theorems.  The first theorem has four inequalities that were presented as
examples of their method in Sections V.A and V.B of \cite{XuWangSun}.
In order to connect these with the present work,
we have relabeled their variables $\xi_1, \xi_2, \xi_3, \xi_4$ as
$B,C,A,D$ in the first three of these inequalities and as $A,B,C,D$ in the fourth.
The inequalities are then rewritten in the form of Theorem~\ref{thm:many-inequalities}.
It should be emphasized that these inequalities were intended only as examples of their method
and should not be considered complete or exhaustive.  In the second theorem below we present their
infinite list of inequalities from Section V.C of of \cite{XuWangSun}.
The variables $\xi_1, \xi_2, \xi_3, \xi_4$
are again relabeled as $A,B,C,D$ and the inequalities are rewritten to match Theorem~\ref{thm:many-inequalities}.

\begin{theorem}[Xu, Wang, Sun]\label{thXWS1}
The following is an information inequality
\begin{eqnarray*}
aI(A;B) &\leq& bI(A;B|C) + cI(A;C|B) + dI(B;C|A) \\
      &+&  eI(A;B|D) + fI(A;D|B)+ gI(B;D|A)\\
      &+& hI(C;D) +iI(C;D|A)
\end{eqnarray*}
for each of the following values of $(a,b,c,d,e,f,g,h,i)$.
\begin{align}
&&(8&,&12&,&33&,&10&,&8&,&15&,&1&,&8&,&0&)&&&& \label{XWS1}\\
&&(5&,&7&,&20&,&5&,&6&,&9&,&1&,&5&,&0&)&&&& \label{XWS2}\\
&&(4&,&5&,&17&,&6&,&6&,&7&,&0&,&4&,&0&)&&&& \label{XWS3}\\
&&(3&,&4&,&6&,&6&,&3&,&1&,&1&,&3&,&0&)&&&& \label{XWS4}
\end{align}
\end{theorem}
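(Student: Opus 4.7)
My plan is to prove Theorem \ref{thXWS1} by exhibiting, for each of the four inequalities (\ref{XWS1})--(\ref{XWS4}), an already-established inequality that implies it. Because every information term appearing on the right side is a Shannon conditional mutual information and therefore nonnegative, an inequality of the displayed form
\begin{equation*}
a I(A;B) \le b I(A;B|C) + c I(A;C|B) + \cdots + i I(C;D|A)
\end{equation*}
is \emph{implied} by any inequality of the same form with identical $a$ and with each of $b,c,d,e,f,g,h,i$ no larger. So the task reduces to locating, for each $(a,b,c,d,e,f,g,h,i)$ in the theorem, a coefficient vector already proved in the paper that dominates it componentwise (after possibly scaling $a$ and using convexity of the known set of inequalities).

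First I would dispose of the easy cases. Inequality (\ref{XWS3}) is literally inequality (\ref{ineq3.20}) from Section~\ref{ThreeCopiesB}: both read $(4,5,17,6,6,7,0,4,0)$, so no additional work is needed. Inequality (\ref{XWS4}) with coefficient vector $(3,4,6,6,3,1,1,3,0)$ is dominated by inequality (\ref{ineq3.1}), namely $(3,4,4,4,3,1,1,3,0)$: the two vectors agree except in the slots for $I(A;C|B)$ and $I(B;C|A)$, where (\ref{ineq3.1}) has $4 \le 6$. Adding $2 I(A;C|B) + 2 I(B;C|A) \ge 0$ to the right side of (\ref{ineq3.1}) immediately yields (\ref{XWS4}).

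The two remaining inequalities (\ref{XWS1}) and (\ref{XWS2}) are not direct weakenings of any single entry in the list of Section~\ref{FourCopies}, so I would derive them in two stages. Stage one: take a nonnegative rational combination of inequalities from Section~\ref{FourCopies} (or, when needed, apply one of the extension rules of Section~\ref{infinite} to one of those inequalities), clear denominators, and produce a new valid inequality whose $(a,b,c,d,e,f,g,h,i)$ is componentwise $\le$ the target. Good candidates to combine are entries of Section~\ref{FourCopies} with small $g$ and large $c$, since (\ref{XWS1}) and (\ref{XWS2}) have $g=1$ and $c\in\{20,33\}$; inequalities such as $(6,8,16,7,6,3,3,6,0)$, $(4,5,17,13,6,2,0,4,0)$, $(7,8,11,11,7,5,5,7,0)$, and the Matus-type families generated by Rule [1] applied to the Zhang--Yeung inequality all land in the right region of coefficient space. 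Stage two: verify the componentwise domination by a mechanical comparison.

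The main obstacle I expect is stage one for (\ref{XWS1}): the coefficient $c=33$ is large enough that a single inequality from Section~\ref{FourCopies} will not dominate it, so an explicit convex combination, or a one-step application of an extension rule such as Rule [2] or Rule [15] to an inequality like (\ref{ineq3.23}) or the Zhang--Yeung inequality, must be found. I would handle this by writing the search as a small linear program: hold the target vector fixed, introduce a nonnegative weight for each candidate inequality (plus for each Shannon term that can be added on the right), and ask the LP solver to express the target as a nonnegative combination. Once such a certificate is produced, the proof of (\ref{XWS1}) consists of listing the weights; the same procedure handles (\ref{XWS2}). No new copy steps are required, so the soundness of each step follows from results already established in the paper.
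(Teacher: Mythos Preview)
Your approach is essentially the same as the paper's: both treat (\ref{XWS3}) as identical to (\ref{ineq3.20}) and (\ref{XWS4}) as dominated by (\ref{ineq3.1}), and both obtain (\ref{XWS1}) and (\ref{XWS2}) by taking nonnegative linear combinations of already-proved inequalities and then observing componentwise domination. The paper, however, carries out your LP step explicitly and uses inequalities from Section~\ref{ThreeCopies} rather than Section~\ref{FourCopies}: for (\ref{XWS1}) it adds (\ref{ineq3.3}), (\ref{ineq3.12}), and $\tfrac12\cdot$(\ref{ineq3.15}) to get $(8,12,\tfrac{21}{2},10,8,\tfrac{7}{2},1,8,0)$, and for (\ref{XWS2}) it combines $20\cdot$(\ref{ineq3.3}), $2\cdot$(\ref{ineq3.7}), $14\cdot$(\ref{ineq3.11}) with $C,D$ swapped, $40\cdot$(\ref{ineq3.12}), and $5\cdot$(\ref{ineq3.15}), divided by $56$, to get $(5,7,\tfrac{45}{8},5,6,\tfrac{167}{56},1,5,0)$; no extension rules are needed.
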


\begin{theorem}[Xu, Wang, Sun]\label{thXWS2}
For each positive integer $s$, the following is an information inequality
\begin{eqnarray*}
(2^{s-1}-1)I(A;B) &\leq& 2^{s-1}I(A;B|C) \\
      &+& (c_1-2^{s-1})I(A;C|B)\\
      &+& (c_1-2^{s-1})I(B;C|A) \\
      &+&  (2^{s-1}-1)I(A;B|D) \\
      &+& (c_2-2^{s-1}+1)I(A;D|B)\\
      &+& (c_2-2^{s-1}+1)I(B;D|A)\\
      &+& (2^{s-1}-1)I(C;D),
\end{eqnarray*} where
\begin{eqnarray*}
c_1 &=& (S_++S_-)/4\\
c_2 &=& ((\sqrt2-1)S_+ - (\sqrt2+1)S_-)/4\\
S_+ &=& (2+\sqrt{2})^s\\
S_- &=& (2-\sqrt{2})^s\\
\end{eqnarray*}
\end{theorem}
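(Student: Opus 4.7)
The plan is to proceed by induction on $s$, with base cases $s=1$ and $s=2$. At $s=1$ a direct evaluation gives $c_1=1$ and $c_2=0$, making every coefficient on the right-hand side vanish except for the coefficient $1$ of $I(A;B|C)$; the claim reduces to the Shannon inequality $0\le I(A;B|C)$. At $s=2$ one has $c_1=3$ and $c_2=1$, and the statement reads
\[
I(A;B)\le 2I(A;B|C)+I(A;C|B)+I(B;C|A)+I(A;B|D)+I(C;D),
\]
which is the Zhang--Yeung inequality of Theorem~\ref{th:ZY2}.

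For the inductive step, the key algebraic observation is that $2\pm\sqrt{2}$ are the roots of $x^2-4x+2=0$, so $S_+$ and $S_-$, and hence $c_1$ and $c_2$, satisfy the second-order linear recurrence $y_{s+1}=4y_s-2y_{s-1}$. A short calculation shows that the shifted coefficients $c_1-2^{s-1}$ and $c_2-2^{s-1}+1$ obey the same recurrence up to an inhomogeneous term proportional to $2^{s-1}-1$, which is exactly the coefficient of $I(C;D)$. The natural way to realize a genuine second-order recurrence via the machinery of Section~\ref{infinite} is through one of the two-premise extension rules (for example Rule [13], Rule [15], or Rule [17]), each of which combines two previously established inequalities into a new one by adding a single copy step. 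The strategy is therefore to feed the $s$-th and $(s-1)$-th members of the family into such a rule and check that, with an appropriate choice of the free parameters, the output is exactly the level-$(s+1)$ instance of the theorem.

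A backup route is to use a single-premise rule twice. Note that the XWS family satisfies $\mathrm{coeff}(I(A;B))=\mathrm{coeff}(I(A;B|D))$, which is precisely the restriction imposed by Rule [1]. Applying Rule [1] to the level-$s$ inequality produces an asymmetric refinement which can then be symmetrized by averaging with its image under the swap $A\leftrightarrow B$. Two such asymmetric-then-symmetrize steps yield the second-order growth required by the recurrence and match the claimed coefficients, at the cost of passing through intermediate inequalities that are stronger than the XWS form and may themselves be independent results of interest.

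The main obstacle is reconciling the rational coefficient transformations produced by every extension rule with the closed-form expressions for $c_1$ and $c_2$, which involve $\sqrt{2}$. The final coefficients are rational because the $\sqrt{2}$ contributions in $S_+$ and $S_-$ cancel, but that cancellation must be tracked through each inductive step. The cleanest way to carry this out is to work in $\mathbb{Z}[\sqrt{2}]$: track $S_+$-like and $S_-$-like components of each coefficient separately and verify that the chosen rule preserves both components. Once the rule and its parameters have been pinned down, what remains is a mechanical algebraic check that the produced coefficient recurrence agrees with $y_{s+1}=4y_s-2y_{s-1}$ at every position, together with the initial data at $s=1,2$.
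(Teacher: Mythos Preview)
Your base cases are correctly identified, and the observation that $c_1,c_2$ satisfy $y_{s+1}=4y_s-2y_{s-1}$ is right. The gap is in the inductive step: you never pin down a specific rule and parameter choice, and in fact neither of your proposed routes works as stated. For the backup route, apply Rule~[1] to the level-$s$ XWS inequality (the hypothesis $e=a$ is satisfied) and then average with its $A\leftrightarrow B$ image. The new left-hand coefficient is $a+d=(2^{s-1}-1)+(c_1(s)-2^{s-1})=c_1(s)-1$, whereas the level-$(s{+}1)$ inequality needs $2^s-1$; already for $s=2$ this gives $2$ instead of $3$. A second application does not repair this, since the mismatch propagates. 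For the main route, the two-premise rules all have nonnegative coefficient maps, so realizing the recurrence $y_{s+1}=4y_s-2y_{s-1}$ with its negative $-2$ is not a matter of a routine check; you would have to exhibit a rule whose structural constraints happen to produce exactly this combination, and you have not done so.

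The paper does something different and simpler: it does not try to reproduce the XWS recurrence at all. Instead it proves a \emph{stronger} family (Theorem~\ref{thXWS3}) whose right-hand coefficients grow like $(s-1)2^{s-2}$ and $(s-3)2^{s-2}+1$ rather than $O((2+\sqrt2)^s)$. That family is advanced from level $s$ to $s+1$ by a single application of Rule~[6], feeding in the level-$s$ inequality together with its $C\leftrightarrow D$ permutation (so the ``two premises'' are really one inequality and a relabeling, giving a first-order recursion). The XWS family then follows immediately because its right-hand coefficients termwise dominate those of Theorem~\ref{thXWS3}. The moral is that the XWS coefficients are too loose to bootstrap themselves cleanly; tightening first makes the induction go through with a single copy step.
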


In this section we will show how to derive the inequalities of Xu, Wang, and Sun by
adding together linear combinations of inequalities from the present work.  This will usually
result in stronger versions of these inequalities.

To begin, first note that Inequality \eqref{XWS3} is identical to \eqref{ineq3.20}.
It does not appear in the list given in Section \ref{FourCopies} because it has been
superseded.
We also note that Inequality \eqref{XWS4} is a weaker form of
\begin{align*}
&&(3&,&4&,&4&,&4&,&3&,&1&,&1&,&3&,&0)&&&& \eqref{ineq3.1},
\end{align*}
which has also been superseded.

To get Inequality \eqref{XWS1}, we add together the previous inequalities
\begin{align*}
&&(3&,&4&,&6&,&6&,&3&,&0&,&0&,&3&,&0)&&&& \eqref{ineq3.3}\\
&&(4&,&6&,&4&,&3&,&4&,&2&,&1&,&4&,&0)&&&& \eqref{ineq3.12}\\
&&(2&,&4&,&1&,&2&,&2&,&3&,&0&,&2&,&0)&(1/2)\hspace{2pt}&&& \eqref{ineq3.15}.
\end{align*}
This gives us the inequality
\begin{align*}
&&(8&,&12&,&21/2&,&10&,&8&,&7/2&,&1&,&8&,&0)&&&& \eqref{ineq3.12}
\end{align*}
which is an improvement of \eqref{XWS1}.

To get Inequality \eqref{XWS2}, we first switch the variables $C$ and $D$ in the inequality
\begin{align*}
&&(3&,&7&,&5&,&1&,&3&,&1&,&1&,&3&,&0)&&&&&\;\; \eqref{ineq3.11}
\end{align*}
to obtain the inequality
\begin{align}
&&(3&,&3&,&1&,&1&,&7&,&5&,&1&,&3&,&0)&&&&& \hspace{-10pt}\label{ineq3.11Alt}
\end{align}

Then combining the inequalities
\begin{align*}
&&(3&,&4&,&6&,&6&,&3&,&0&,&0&,&3&,&0)&(20)&&& \eqref{ineq3.3}\\
&&(4&,&5&,&8&,&8&,&4&,&1&,&1&,&4&,&0)&(2)&&& \eqref{ineq3.7}\\
&&(3&,&3&,&1&,&1&,&7&,&5&,&1&,&3&,&0)&(14)&&&  \eqref{ineq3.11Alt}\\
&&(4&,&6&,&4&,&3&,&4&,&2&,&1&,&4&,&0)&(40)&&& \eqref{ineq3.12}\\
&&(2&,&4&,&1&,&2&,&2&,&3&,&0&,&2&,&0)&(5)&&& \eqref{ineq3.15}
\end{align*}
and dividing by 56, we get the inequality
\begin{align*}
&&(5&,&7&,&45/8&,&5&,&6&,&167/56&,&1&,&5&,&0)&&&& \hspace{-10pt}\eqref{ineq3.3}
\end{align*}
which is an improvement of \eqref{XWS2}.

To derive the infinite list of inequalities given in Theorem~\ref{thXWS2}, we will actually
derive a stronger list.
\begin{theorem}\label{thXWS3}
For each positive integer $s$, the following is an information inequality
\begin{eqnarray*}
(2^{s-1}-1)I(A;B) &\leq& 2^{s-1}I(A;B|C)\\
      &+& (s-1)2^{s-2}I(A;C|B)\\
      &+& (s-1)2^{s-2}I(B;C|A)) \\
      &+&  (2^{s-1}-1)I(A;B|D) \\
      &+& ((s-3)2^{s-2}+1)I(A;D|B)\\
      &+& ((s-3)2^{s-2}+1)I(B;D|A))\\
      &+& (2^{s-1}-1)I(C;D)
\end{eqnarray*}
\end{theorem}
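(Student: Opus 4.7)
The plan is to prove the theorem by induction on $s$, using Extension Rule [35] from the preceding section as the inductive step. Denote the coefficient vector of the target inequality $X_s$ by $(a_s, b_s, c_s, d_s, e_s, f_s, g_s, h_s)$, where $a_s = h_s = e_s = 2^{s-1}-1$, $b_s = 2^{s-1}$, $c_s = d_s = (s-1)2^{s-2}$, and $f_s = g_s = (s-3)2^{s-2}+1$. Symmetry in $A \leftrightarrow B$ of the whole family will be preserved automatically because Rule [35] is symmetric in the roles of its two premises and substitutions.

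For the base case, take $s=1$: all coefficients vanish except $b_1 = 1$, reducing $X_1$ to the Shannon inequality $0 \le I(A;B|C)$. (One could equivalently begin at $s=2$, where $X_2$ is exactly the Zhang--Yeung inequality of Theorem~\ref{th:ZY2}.)

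For the inductive step, assume $X_s$ holds and instantiate Rule [35] with both input inequalities equal to $X_s$. In its first premise I set $e = a_s$, $w = 0$, $x = 1$, matching the remaining parameters $c, d, f, g$ to $c_s, d_s, f_s, g_s$ (and $i = j = 0$); the structural requirements $e-w = a_s = h_s$ and $e+x = b_s$ hold precisely because $b_s - a_s = 1$. In the second premise I set $b' = b_s$, $w' = 1$, $x = 1$, and match the remaining $c', d', f', g'$ to $c_s, d_s, f_s, g_s$ (with $i' = j' = 0$); the requirements $b' - w' = a_s = h_s$ and $b' - x = a_s = e_s$ are again automatic. With these choices, both specialized premises coincide with $X_s$, so a single invocation of the inductive hypothesis supplies both.

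The output of Rule [35] then has left coefficient $e + b' = a_s + b_s = 2a_s + 1$, and right coefficients $e+b'+w' = a_s+b_s+1$ on $I(A;B|C)$, $2c_s + b_s$ on $I(A;C|B)$, $2d_s + b_s$ on $I(B;C|A)$, $a_s + b_s$ on $I(A;B|D)$, $a_s + 2f_s$ on $I(A;D|B)$, $a_s + 2g_s$ on $I(B;D|A)$, and $a_s + b_s$ on $I(C;D)$. The main verification task --- and the only real obstacle --- is checking that this linear transform on the coefficient octuple of $X_s$ reproduces the closed form of $X_{s+1}$; this amounts to the short identities $2a_s+1 = a_{s+1}$, $a_s+b_s+1 = b_{s+1} = 2b_s$, $2c_s+b_s = s\cdot 2^{s-1} = c_{s+1}$, $a_s+2f_s = (s-2)2^{s-1}+1 = f_{s+1}$, and $a_s + b_s = 2^s - 1 = h_{s+1}$, each an immediate consequence of the closed forms. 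Once these identities are confirmed, the induction closes and $X_s$ is established for every positive integer $s$.
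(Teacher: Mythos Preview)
Your proof is correct. Both the base case and the inductive step via Rule~[35] check out: with $e=a_s$, $w=0$, $x=1$ in the first premise and $b'=b_s$, $w'=1$ in the second, each premise specializes exactly to $X_s$, all lower-case parameters are nonnegative (in particular $f_s=(s-3)2^{s-2}+1\ge 0$ for every $s\ge 1$), and the output coefficients reproduce $X_{s+1}$ via the arithmetic identities you list.

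The paper takes a slightly different route: it applies Rule~[6] rather than Rule~[35], feeding in $X_s$ as the second premise and the $C\leftrightarrow D$ permutation of $X_s$ as the first premise, with $z=1$. The underlying copy operations differ as well (Rule~[6] uses a single copy variable ``$R$ is a copy of $C$ over $AB$'', while Rule~[35] uses the two-variable step ``$RS$ is a copy of $CD$ over $AB$''). Your choice of Rule~[35] has the modest advantage that no variable permutation is needed between the two premises, since both are literally $X_s$; on the other hand, the paper's use of Rule~[6] keeps to a one-variable copy step. Either way the induction closes, and the resulting family is the same.
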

\begin{proof}
Start with the Shannon inequality $0\leq I(A;B|C)$ to get the case $s=1$.
Once the case $s$ has been established, permute the variables $C,D$ to get
\begin{eqnarray*}
(2^{s-1}-1)I(A;B) &\leq& (2^{s-1}-1)I(A;B|C)\\
      &+& ((s-3)2^{s-2}+1)I(A;C|B)\\
      &+& ((s-3)2^{s-2}+1)I(B;C|A)) \\
      &+&  2^{s-1}I(A;B|D) \\
      &+& (s-1)2^{s-2}I(A;D|B)\\
      &+& (s-1)2^{s-2}I(B;D|A))\\
      &+& (2^{s-1}-1)I(C;D).
\end{eqnarray*}
Then apply Rule [6] from Section~\ref{infinite}  with the substitutions
\begin{eqnarray*}
a=h=a'=h' &=& 2^{s-1}-1\\
d=c=g'=f' &=& (s-3)2^{s-2}+1\\
g=f=c'=d' &=& (s-1)2^{s-2}\\
z&=&1\\
i=j=i'=j'&=& 0
\end{eqnarray*}
to get the inequality for the case $s+1$.
\end{proof}

To see why this improves Theorem~\ref{thXWS2} note that in that theorem, the coefficients
$c_1$ and $c_2$ follow the pattern:
\begin{align*}
&&c_1&=&1&,&3&,&10&,&34&,&116&,&\ldots&&&\\
&&c_2&=&0&,&1&,&4&,&14&,&48&,&\ldots&&&
\end{align*}
and both of these satisfy the recursion
\begin{equation*}
c_i(s+2) = 4*c_i(s+1) - 2*c_i(s)
\end{equation*}
and grow at the rate of $O((2+\sqrt2)^s)$
Since these majorize the sequences
\begin{align*}
(s-1)2^{s-2} + 2^{s-1}&=&1&,&3&,&8&,&20&,&48&,&\ldots&\\
(s-3)2^{s-2}+1 + 2^{s-1}-1&=&0&,&1&,&4&,&12&,&32&,&\ldots&
\end{align*}
respectively, the right side of the inequality in Theorem~\ref{thXWS2}
is strictly larger (for $s>1$) than the right side of the inequality from Theorem~\ref{thXWS3}.
We note that for $s=1$ both lists yield the Shannon inequality $0\leq I(A;B|C)$.
For $s=2$ both lists yield the Zhang-Yeung inequality.  For $s=3$ Theorem~\ref{thXWS2}
yields Inequality~\eqref{XWS4} while Theorem~\ref{thXWS3} yields the stronger Inequality \eqref{ineq3.1}.


\section{Structure of $\bar\Gamma^*_4$}\label{structure}

Given any four random variables, $A,B,C,D$ we can form a vector of the fifteen joint entropies:
\begin{eqnarray}\label{R15}
\nonumber &&\langle H(A), H(B), H(AB), H(C), H(AC), H(BC),\\
\nonumber && H(ABC),H(D), H(AD), H(BD), H(ABD), H(CD),\\
&& H(ACD), H(BCD), H(ABCD) \rangle.
\end{eqnarray}

The space of all such fifteen dimensional vectors is called $\Gamma^*_4$.
This notation as well as some of the basic properties of this space are given in \cite{Yeung-book}.
For example, this space is not closed, but it's closure, $\bar\Gamma^*_4$, is convex and forms a cone with vertex at the origin.
The spaces $\Gamma^*_4$ and $\bar\Gamma^*_4$ share the same interior, but on the boundary of these regions,
$\Gamma^*_4$ is very complicated.  Here we will concentrate on the closed entropy space, $\bar\Gamma^*_4$.

Each information inequality forms an outer bound of this region.  The superset of $\bar\Gamma^*_4$
defined by the Shannon inequalities is denoted by $\Gamma_4$, which is also a convex cone.
Note that although expressions such as $I(A;B|C)$ and $H(ABCD)$ are defined for probability distributions,
we can easily extend these definitions in the natural way to any element of $\R^{15}$ using \eqref{R15}.  Thus, if
$(x_1, \ldots, x_{15})$ is an element of $\R^{15}$, then  $H(ABCD)= x_{15}$
while $I(A;B|C) = H(AC)+H(BC)-H(C)-H(ABC)=x_{5}+x_{6}-x_{4}-x_{7}$.

We begin with a list of four Shannon inequalities:
\begin{eqnarray*}
I(A;B) &\geq& 0 \\
I(A;B|C) &\geq& 0 \\
I(A;B|CD) &\geq& 0 \\
H(A|BCD) &\geq& 0
\end{eqnarray*}
These four inequalities together with the ones that can be formed from these by permuting the variables, form
the list of twenty-eight {\em elemental} Shannon Inequalities. This list is complete in the sense that
the space they define is exactly $\Gamma_4$ (see \cite{Yeung-book}).

This space $\Gamma_4$ can be reduced by combining the Shannon inequalities with six Ingleton inequalities, represented by the six permutations of:
\begin{equation}\label{ingleton}
I(A;B) \leq I(A;B|C)+I(A;B|D)+I(C;D) \, .
\end{equation}
We will refer to this reduced space as the {\em Ingleton region} and denote it by $I_4$.

The Ingleton inequalities are not information inequalities as defined in this paper; they are not valid for all entropy vectors.
Their intended application, which is the study of linear ranks, does not concern us here.
Nevertheless as we will see below, the Ingleton region is very important in the structure of entropy space.
For one thing, every vertex of $I_4$ has been shown to be
an entropy vector for an actual probability distribution.  It follows by convexity that
$I_4 \subseteq \bar\Gamma^*_4$.  Therefore, it is outside the Ingleton region where
the non-Shannon inequalities are relevant.

Since the spaces $\bar\Gamma^*_4$, $\Gamma_4$, and $I_4$ are infinite cones, it is often convenient to study just a cross-section.
>From here on, we will normalize each of these spaces by using the equation $H(ABCD)=1$.  To reduce notation, we will keep the same
names for the normalized spaces, but we will consider them as subsets of $\R^{14}$, since the last entry
in \eqref{R15} is held constant.  These normalized spaces carry all the information of the original spaces, but are
easier to describe.
For example, whereas $\Gamma_4$ was originally a polytopal cone in $\R^{15}$, it is now a polytope in $\R^{14}$.

Sticking out of the normalized Ingleton region are six ``pyramid" shaped figures.  There is one such pyramid for
each Ingleton inequality.  Each pyramid is defined by the points that satisfy all Shannon inequalities, but fail the
respective Ingleton inequality.  These pyramids are simplices in $\R^{14}$;
each is bounded by an Ingleton equality and fourteen Shannon faces.

For our purposes, we will concentrate on just one of the pyramids, the pyramid associated with \eqref{ingleton},
and we will henceforth refer to this as \textit{the pyramid}. The \textit{base} of this pyramid is formed by the
Ingleton equation \eqref{ingleton} with ``$\leq$" replaced by ``$=$".  In terms of the fourteen coordinates, this equation becomes
\begin{equation}\label{eq:Ingleton}
x_{1}+x_{2}+x_{7}+x_{11}+x_{12}=x_{3}+x_{5}+x_{6}+x_{9}+x_{10}
\end{equation}

It is also bounded by the fourteen Shannon faces:

\begin{tabular}{|rcc|rcl|}
\multicolumn{6}{c}{}\\
\hline
\multicolumn{3}{|c|}{Shannon equation}& \multicolumn{3}{c|}{coordinate equation}\\
\hline
$H(A|BCD)$&=&$0$ & $x_{14}$&=&$1$\\
$H(B|ACD)$&=&$0$ & $x_{13}$&=&$1$\\
$H(C|ABD)$&=&$0$ & $x_{11}$&=&$1$\\
$H(D|ABC)$&=&$0$ & $x_{7}$&=&$1$\\
$I(C;D|A)$&=&$0$ & $x_{5}+x_{9}$&=&$x_{1}+x_{13}$\\
$I(C;D|B)$&=&$0$ & $x_{6}+x_{10}$&=&$x_{2}+x_{14}$\\
$I(A;C|B)$&=&$0$ & $x_{3}+x_{6}$&=&$x_{2}+x_{7}$\\
$I(B;C|A)$&=&$0$ & $x_{3}+x_{5}$&=&$x_{1}+x_{7}$\\
$I(A;D|B)$&=&$0$ & $x_{3}+x_{10}$&=&$x_{2}+x_{11}$\\
$I(B;D|A)$&=&$0$ & $x_{3}+x_{9}$&=&$x_{1}+x_{11}$\\
$I(A;B|C)$&=&$0$ & $x_{5}+x_{6}$&=&$x_{4}+x_{7}$\\
$I(A;B|D)$&=&$0$ & $x_{9}+x_{10}$&=&$x_{8}+x_{11}$\\
$I(A;B|CD)$&=&$0$ & $x_{13}+x_{14}$&=&$x_{12}+1$\\
$I(C;D)$&=&$0$ & $x_4+x_8$&=&$x_{12}$\\
\hline
\multicolumn{6}{c}{}\\
\end{tabular}

There are fourteen corresponding base points of the pyramid.  Each base point satisfies
the Ingleton equation~\eqref{eq:Ingleton} and all but one of the fourteen Shannon equations.
The base points are:
\begin{eqnarray*}
&&(1,0,1,0,1,0,1,0,1,0,1,0,1,0)\\
&&(0,1,1,0,0,1,1,0,0,1,1,0,0,1)\\
&&(0,0,0,1,1,1,1,0,0,0,0,1,1,1)\\
&&(0,0,0,0,0,0,0,1,1,1,1,1,1,1)\\
&&(\frac{1}{2},0,\frac{1}{2},\frac{1}{2},1,\frac{1}{2},1,\frac{1}{2},1,\frac{1}{2},1,1,1,1)\\
&&(0,\frac{1}{2},\frac{1}{2},\frac{1}{2},\frac{1}{2},1,1,\frac{1}{2},\frac{1}{2},1,1,1,1,1)\\
&&(1,0,1,1,1,1,1,0,1,0,1,1,1,1)\\
&&(0,1,1,1,1,1,1,0,0,1,1,1,1,1)\\
&&(1,0,1,0,1,0,1,1,1,1,1,1,1,1)\\
&&(0,1,1,0,0,1,1,1,1,1,1,1,1,1)\\
&&(1,1,1,0,1,1,1,1,1,1,1,1,1,1)\\
&&(1,1,1,1,1,1,1,0,1,1,1,1,1,1)\\
&&(\frac{1}{3},\frac{1}{3},\frac{2}{3},\frac{1}{3},\frac{2}{3},\frac{2}{3},1,\frac{1}{3},\frac{2}{3},\frac{2}{3},1,\frac{2}{3},1,1)\\
&&(1,1,1,1,1,1,1,1,1,1,1,1,1,1)\; .\\
\end{eqnarray*}

The remaining vertex of the pyramid will be called the \textit{top of the pyramid}.  This point satisfies all
fourteen Shannon equations, but fails the Ingleton inequality.  It is the point

\begin{equation}\label{top}
(\frac{1}{2},\frac{1}{2},\frac{3}{4},\frac{1}{2},\frac{3}{4},\frac{3}{4},1, \frac{1}{2},\frac{3}{4},\frac{3}{4},1,1,1,1).
\end{equation}

No point in $\Gamma_4$ can simultaneously fail two Ingleton inequalities and so
the six pyramids have disjoint interiors.  However, the same Shannon equation can define a face on more than one of these pyramids.
For example, each of the six pyramids shares the first four Shannon faces in the list above, since
this set is closed under permutations of the variables.  Also,
$I(A;B|C) = 0$ has twelve forms obtained by permuting the variables, and each pyramid contains eight of them.
However, the equations
of the form $I(A;B|CD) = 0$ and $I(C;D) = 0$ each have six permuted forms, and each pyramid
contains exactly one such pair.

The non-Shannon inequalities give us additional outer bounds for $\bar\Gamma^*_4$.
Together, they gradually chip away each of the pyramids.
We will often consider these inequalities in groups according to how many auxiliary variables are used.
Thus, the region formed by the Shannon inequalities will be referred to as ``0 copies".  When we add the
Zhang-Yeung inequality the corresponding outer-bound region will be referred to as ``1 copy".
Similarly, when the inequalities from Section~\ref{TwoCopies} are added we get the region of ``2 copies",
when the inequalities from Section~\ref{ThreeCopies} are added  we get the region of ``3 copies",
and adding the inequalities from Section~\ref{ThreeCopiesB} give us the region of ``3.5 copies".
The more
inequalities we add, the more complicated the outer-bound region seems to become.  Here we show how the number of inequalities (faces)
and the number of vertices in the normalized outer bound grows with the number of copy steps.

\begin{tabular}{|c|c|c|}
\multicolumn{3}{c}{}\\
   \hline
  copies  & faces & vertices \\
   \hline
  0 [Shannon]& 28 & 41 \\
  1 [Zhang-Yeung]& 40 & 89 \\
  2 [Section ~\ref{TwoCopies}]& 160 & 299 \\
  3 [Section ~\ref{ThreeCopies}]& 796 & 10361 \\
  3.5 [Section ~\ref{ThreeCopiesB}]& 4924 & 224801 \\
  \hline
  \multicolumn{3}{c}{}
\end{tabular}

One way of estimating progress toward characterizing $\Gamma^*_4$ is by the decrease in volume
of the outer bounds.
Thus, the Shannon region, $\Gamma^N_4$, has total volume $$1.9787036156085\cdot 10^{-10}.$$
The Ingleton region has volume $$1.964365183611\cdot 10^{-10}$$ leaving a volume of
$$ 2.38973866625 \cdot  10^{-13}$$ for each of the six pyramids.  The following table shows how
the volume decreases within each pyramid as more non-Shannon inequalities are found.

\begin{tabular}{|c|c|}
\multicolumn{2}{c}{}\\
  \hline
  copies  & percent of pyramid left\\
   \hline
  0 [Shannon]& 100\\
  1 [Zhang-Yeung]& 98.4568\\
  2 [Section ~\ref{TwoCopies}]& 97.7040\\
  3 [Section ~\ref{ThreeCopies}]& 96.7214\\
  3.5 [Section ~\ref{ThreeCopiesB}]& 96.4682\\
  \hline
  \multicolumn{2}{c}{}
\end{tabular}

Of course, we do not know how much progress we are making without knowing the goal, which is
the proportion of each pyramid taken up by $\bar\Gamma^{*N}_4$.
Although this volume is unknown, we can get lower bounds.
All we have to do is find many probability distributions of four random variables, compute their joint entropies,
obtain the corresponding elements of $\R^{14}$ and
then compute the volume of the convex hull of these points.
The problem with this technique is that computationally, we can only handle relatively few points.

Nevertheless, using this technique with eight points (not necessarily the best eight points)
 we determine that the inner volume of
$\bar\Gamma^*_4$ is at least $53.4815 $ percent of each pyramid.  It is clear that according to this measure, there is still quite a gap
to fill, with room for improvement in both inner and outer bounds .

Perhaps one of the most interesting entropy vectors is the one that yields the worst violation of Ingleton's inequality~\eqref{ingleton}.
\begin{definition}
Given a probability distribution, we define the Ingleton score of the distribution to be the value of
$$
\frac{I(A;B|C)+I(A;B|D)+I(C;D)-I(A;B)}{H(ABCD)}
$$
\end{definition}

\begin{conjecture}[Four-atom conjecture]
The lowest possible Ingleton score is approximately -.089373.  It is attained by a four-variable
distribution with alphabet size two, given by
\begin{eqnarray*}
P(0,0,0,0)=P(1,1,1,1)&=&\alpha\\
P(0,1,0,1)=P(0,1,1,0)&=&.5-\alpha
\end{eqnarray*}
where $\alpha$ takes on a value which is approximately 0.350457.
\end{conjecture}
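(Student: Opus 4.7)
The plan is to establish the conjectured minimum $c^* \approx -0.089373$ of the Ingleton score by combining a reduction to a finitely-supported optimization with a direct analysis at the candidate extremizer. Let $J(p) = I(A;B|C)+I(A;B|D)+I(C;D)-I(A;B)$, so the Ingleton score is $J(p)/H(ABCD)$; since this ratio is invariant under positive rescalings of entropies, I would first argue that the infimum equals the minimum of $J$ over the slice of $\bar{\Gamma}^*_4$ with $H(ABCD)=1$. This slice is closed and convex inside a bounded region and hence compact, so the infimum is attained at some entropic vector $v^*$ lying on the relative boundary of $\bar{\Gamma}^*_4$ (any interior point could be perturbed to decrease $J$).

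Next, I would reduce to distributions with small support. Every vector in $\bar{\Gamma}^*_4$ is a limit of entropic vectors realized by distributions with finite alphabet, so one may pass to a minimizing sequence of such distributions. Exploiting the Ingleton functional's symmetries $A\leftrightarrow B$ and $C\leftrightarrow D$ (which, by convexity of the objective under averaging over the symmetry group, ought to be inherited by some minimizer), and using a dimension-counting / Carath\'eodory argument on the boundary faces of $\bar{\Gamma}^*_4$ that $v^*$ must lie on, I would attempt to prove that the support of a minimizing distribution can be taken in $\{0,1\}^4$ and to consist of at most four atoms.

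Granting that reduction, the remaining step is a finite-dimensional constrained optimization: parametrize the symmetric four-atom distributions supported on $\{0000, 1111, 0101, 0110\}$ (and any other four-atom support orbits not a priori ruled out), write $J/H(ABCD)$ as an explicit smooth function of the atom probabilities, apply the Lagrange conditions, and verify that the symmetric family $P(0000)=P(1111)=\alpha$, $P(0101)=P(0110)=\tfrac12-\alpha$ is the unique global minimizer. The optimal $\alpha$ then arises as the root of a low-degree polynomial, yielding $c^*$ in closed form; this step is routine but tedious calculus.

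The main obstacle is the support-reduction step. The volume computations in Section~\ref{structure} show that the non-Shannon inequalities derived so far (up through $3.5$ copies) still leave roughly $96.5\%$ of each Ingleton pyramid as an outer bound, while the convex hull of the eight explicit distributions recovers only about $53\%$ of the pyramid as an inner bound. The conjectural extremizer sits squarely in this gap, so it cannot be certified by any finite nonnegative combination of currently known non-Shannon inequalities together with Shannon's. A successful proof will therefore require either a genuinely new non-Shannon inequality that is saturated at the four-atom distribution---perhaps obtained by applying the Copy Lemma along symmetry-adapted directions tailored to this distribution---or a direct entropic-geometric argument, for instance a convex decomposition of an arbitrary joint distribution into symmetric four-atom pieces, that bypasses the separating-hyperplane approach entirely. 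This is why the authors expect that new techniques beyond the Copy Lemma framework of this paper will be needed.
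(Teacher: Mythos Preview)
The statement is a \emph{conjecture}; the paper does not prove it. The only support the paper offers is numerical hill-climbing evidence, and it explicitly says ``it is likely that some new techniques will be necessary to settle it.'' So there is no paper proof to compare against, and the correct assessment is whether your proposal actually constitutes a proof. It does not, and you essentially concede this in your last paragraph.

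The decisive gap is the support-reduction step, and the tools you invoke for it do not do what you need. First, the symmetry argument: averaging a distribution over the $A\leftrightarrow B$, $C\leftrightarrow D$ symmetry group does \emph{not} average the entropy vector linearly, so convexity of $J$ (which is linear only on $\R^{15}$, not on the space of distributions) gives you no control over the Ingleton score of the symmetrized distribution. Second, Carath\'eodory tells you that an entropic vector on a boundary face of $\bar\Gamma^*_4$ is a convex combination of at most $15$ entropic vectors on that face; it says nothing about the alphabet size or support of a single distribution realizing that vector. There is no known mechanism that converts ``lies on a low-dimensional face of $\bar\Gamma^*_4$'' into ``is realized by a distribution with four atoms.'' Third, the claim that an interior point can always be perturbed to decrease $J$ is unsubstantiated: $J$ is linear on $\R^{15}$, so its minimum over the convex body $\bar\Gamma^*_4\cap\{H(ABCD)=1\}$ is on the boundary, but the boundary of $\bar\Gamma^*_4$ is not polyhedral (Matus) and is not known to consist of finitely-supported distributions.

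Your final paragraph is the honest summary: the existing Copy-Lemma inequalities leave a gap containing the four-atom point, and no currently known inequality is tight there. That is precisely why the paper states this as a conjecture rather than a theorem.
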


Evidence for this conjecture comes from hill-climbs.  Using Newton's method, we climb from random starting distributions
using alphabet sizes from 2 to 10.  We gradually make changes to the distribution in order to
decrease the Ingleton score.  The most successful climbs, i.e. the climbs that result
 in the lowest score, always tend to the four-atom distribution given in
the conjecture, some permutation of it, or a direct product of independent copies of such distributions.
In other words, we easily find the score $-.089373$ over and over again, but never beat it.

This suggests another way of measuring progress by how close we come to proving the Four-Atom
conjecture.
Given any outer-bound region, we can consider the lowest Ingleton score allowed in the region.
Progress toward the Four-Atom Conjecture is given in the following table:

\begin{tabular}{|c|c|}
\multicolumn{2}{c}{}\\
  \hline
  copies  & minimum Ingleton score\\
   \hline
  0 [Shannon]& -1/4 $\thickapprox$ -.25000\\
  1 [Zhang-Yeung]& -1/6$\thickapprox$ -.16667\\
  2 [Section ~\ref{TwoCopies}]& -1/6$\thickapprox$ -.16667\\
  3 [Section ~\ref{ThreeCopies}]& -7/44 $\thickapprox$ -.15909\\
  3.5 [Section ~\ref{ThreeCopiesB}]& -3/19 $\thickapprox$ -.15789 \\
  \hline
  \multicolumn{2}{c}{}
\end{tabular}

Again, progress toward the conjectured goal of -.089373 seems slow.  In fact,
even if we combine all the non-Shannon inequalities in this paper, it is still not enough to
prove the four-atom conjecture and it is likely that some new techniques will be necessary to settle it.

We now present a third way of measuring progress of the non-Shannon inequalities.
Unlike the previous two methods, this method makes explicit use of the infinite lists of inequalities
from Section~\ref{infinite}.  We begin with some definitions.

\begin{definition}
Let $P$ be a polytope in $\R^{n}$, let $L$ be an affine function of $n$-variables such that
for each point $x$ in $P$, $L(x)\geq 0$.  Then the set of points in $P$ satisfying $L(x)=0$ is called
an {\em extreme} segment of $P$.  Thus, the vertices of $P$ are the extreme points of $P$, the
edges of $P$ are the extreme line segments of $P$, etc.
\end{definition}

\begin{definition}
Let $P$ be a polytope in $\R^{n}$ and let $0\leq m \leq n$.  Then the $m$-skeleton of $P$ is the
union of all of the extreme $m$-dimensional segments of $P$.
\end{definition}

\begin{definition}
Let $P$ be a polytope in $\R^{n}$, let $0\leq m \leq n$, and let $S$ be a subset of $P$.
Then the $m$-core of $S$ in $P$ is the convex hull of the intersection of $S$ with the $m$-skeleton of $P$.
\end{definition}

It is easy to see from the Zhang-Yeung inequality that $I_4$ is the 0-core of
$\bar\Gamma^*_4$ in $\Gamma_4$.  First of all,
the polytope $I_4$ is a subset of both $\bar\Gamma^*_4$ and $\Gamma_4$, so we only need to show that
$\bar\Gamma^*_4$ does not contain any vertices of $\Gamma_4$ that lie outside of $I_4$.
Outside of $I_4$ the only vertices
of $\Gamma_4$ are the tops of the pyramids.  But these tops fail
one of the permuted forms of the Zhang-Yeung inequality.  For example,
the top of the pyramid give by \eqref{top} fails the inequality of Theorem~\ref{th:ZY2}.

We will now use the Zhang-Yeung inequality to show that  $I_4$ is also the 1-core of $\bar\Gamma^*_4$ in $\Gamma_4$.
Then we will use the infinite families of inequalities to prove the stronger result
that it is also the 3-core.
So the Ingleton region, aside from its applications to linear ranks, is a natural part of the geometry of entropy space.

\begin{theorem}
The 1-core of $\bar\Gamma^*_4$ in $\Gamma_4$ is $I_4$.
\end{theorem}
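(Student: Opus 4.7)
The plan is to prove the two inclusions separately. The easy direction, $I_4 \subseteq$ 1-core, is immediate: since the 0-skeleton of $\Gamma_4$ is contained in its 1-skeleton, the 0-core is contained in the 1-core, and the preceding paragraph of the excerpt has already established that the 0-core equals $I_4$.

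For the reverse inclusion, 1-core $\subseteq I_4$, I would use that the 1-core is the convex hull of $\bar\Gamma^*_4 \cap (\text{1-skeleton of }\Gamma_4)$ together with the convexity of $I_4$; it therefore suffices to show that for every edge $e$ of $\Gamma_4$, $e \cap \bar\Gamma^*_4 \subseteq I_4$. If both endpoints of $e$ lie in $I_4$, then $e \subseteq I_4$ by convexity and the inclusion is trivial. Otherwise at least one endpoint is a pyramid top. A direct check of the six pyramid tops (each is a permutation of the vector (\ref{top})) shows that no two of them are adjacent in $\Gamma_4$, since any two share strictly fewer than thirteen of the fourteen elemental Shannon facets through either, falling short of the number required for adjacency in a $14$-dimensional polytope. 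Hence the other endpoint $v$ of $e$ is a vertex of $\Gamma_4$ lying in $I_4$, and by symmetry among the six pyramids I may assume $v_t$ is the top of the pyramid associated with the Ingleton inequality $g_{AB}(x) := I(A;B) - I(A;B|C) - I(A;B|D) - I(C;D) \leq 0$.

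Parametrize $e$ by $v(s) = (1-s)v + sv_t$ for $s \in [0,1]$. The Ingleton hyperplane meets $e$ at some $s^* \in [0,1)$, giving $e \cap I_4 = \{v(s) : 0 \leq s \leq s^*\}$. The heart of the proof is to exhibit, for each edge $e$, a permuted form $f_\pi$ of the Zhang--Yeung inequality of Theorem~\ref{th:ZY2} such that $f_\pi(v) = g_{AB}(v)$. One checks at $v_t$ that every conditional mutual information term distinguishing $f_\pi$ from $g_{AB}$ vanishes, so $f_\pi(v_t) = g_{AB}(v_t) = 1/4$ automatically; linearity on the edge then gives $f_\pi \equiv g_{AB}$ along $e$, and since $\bar\Gamma^*_4 \subseteq \{f_\pi \leq 0\}$ we obtain $\bar\Gamma^*_4 \cap e \subseteq \{g_{AB} \leq 0\} \cap e = e \cap I_4$, as required.

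The main obstacle is the careful choice of $f_\pi$. The primary Zhang--Yeung from Theorem~\ref{th:ZY2} satisfies $f_{AB}(v) - g_{AB}(v) = -\bigl(I(A;B|C) + I(A;C|B) + I(B;C|A)\bigr)(v)$, so it is slack at $v$ whenever any of these three terms is strictly positive there. The stabilizer of the pyramid in $S_4$, namely $\langle A \leftrightarrow B,\, C \leftrightarrow D\rangle$, partitions the fourteen Shannon facets through $v_t$ into seven orbits, and hence partitions the fourteen edges of $\Gamma_4$ at $v_t$ (one per dropped facet) into seven symmetry classes; for each class I would identify the other endpoint $v$ (one of the fourteen base vertices of the pyramid listed after equation (\ref{eq:Ingleton})) and select a suitable $f_\pi$. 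The primary $f_{AB}$ dispatches most orbits, but it fails on the orbit whose representative drops the facet $I(A;B|C) = 0$: there $v = (1,1,1,0,1,1,1,1,1,1,1,1,1,1)$ and $f_{AB}(v) = -1$. The remedy is the $C \leftrightarrow D$--swapped Zhang--Yeung,
\begin{equation*}
f'_{AB}(x) := I(A;B) - 2I(A;B|D) - I(A;D|B) - I(B;D|A) - I(A;B|C) - I(C;D),
\end{equation*}
for which $f'_{AB}(v) = 0 = g_{AB}(v)$. The analogous substitution dispatches the one remaining orbit where $f_{AB}$ is slack. Verifying the identity $f_\pi(v) = 0$ for a chosen $f_\pi$ in each of the seven orbits is a finite, routine computation that completes the proof.
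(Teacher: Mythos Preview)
Your proposal is correct and follows essentially the same approach as the paper: identify, for each edge of $\Gamma_4$ through the pyramid top, a permuted Zhang--Yeung inequality that coincides with the Ingleton functional along that edge. The paper streamlines your orbit-by-orbit case check into a single observation: since the edge is cut out by thirteen of the fourteen Shannon facets through $v_t$, at most one facet is dropped, and hence at least one of the two triples $\{I(A;B|C),\,I(A;C|B),\,I(B;C|A)\}$ or $\{I(A;B|D),\,I(A;D|B),\,I(B;D|A)\}$ vanishes identically on the edge, immediately forcing $f_{AB}\equiv g_{AB}$ or $f'_{AB}\equiv g_{AB}$ there. This also renders your non-adjacency claim for pyramid tops unnecessary: one only needs $g_{AB}(v)\le 0$ at the other endpoint, which follows because $v_t$ is the unique vertex of $\Gamma_4$ failing $g_{AB}\le 0$.
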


\begin{proof}
Since the polytope $I_4$ is a subset of both $\bar\Gamma^*_4$ and $\Gamma_4$,
we need to show that the intersection of $\bar\Gamma^*_4$ with the 1-skeleton of
$\Gamma_4$ lies entirely in $I_4$.

We only need the Zhang-Yeung inequality,
\begin{eqnarray*}
I(A;B) &\leq& 2I(A;B|C) + I(A;C|B) + I(B;C|A)\\
&+&I(A;B|D)+I(C;D)
\end{eqnarray*}
and the Ingleton inequality~\eqref{ingleton}.
\begin{equation}
I(A;B) \leq I(A;B|C)+I(A;B|D)+I(C;D) \, .
\end{equation}
Subtracting the two, we get the inequality
$$
0 \leq I(A;B|C) + I(A;C|B) +I(B;C|A)
$$
which must be satisfied by any point on the base of the pyramid, since the Ingleton equation~\eqref{eq:Ingleton}
is satisfied by such points.
Also, if any base point has a zero value for all of the terms
$$
I(A;B|C), I(A;C|B), I(B;C|A)
$$
Then it satisfies the Zhang-Yeung inequality non-strictly (as an equation).

We now consider an extreme line segment connecting
a two vertices of $\Gamma_4$, one of them being the top of the pyramid.  Since this
is the only vertex of $\Gamma_4$ that fails the Ingleton inequality, the other vertex satisfies the
Ingleton inequality.  Therefore, at some point on this segment, the Ingleton equation is satisfied.
We will call this the base point of the segment.

There are thirteen Shannon equations satisfied by all of the points on this segment.
Since the top of the pyramid is defined by the fourteen Shannon faces of the pyramid,
the thirteen Shannon equations satisfied by the line segment must be included in this group of fourteen.
Therefore, all points on the line segment are either zero on all three of the terms above, or on all three
of the terms with $C$, $D$ interchanged,
$$
I(A;B|D), I(A;D|B), I(B;D|A)\, ,
$$
since these represent six of the fourteen Shannon faces.  Therefore, the base point of this line segment
either satisfies the Zhang-Yeung equation
or the Zhang-Yeung equation with $C$, $D$ interchanged.
On the other hand, it is easy to check that the top of the pyramid~\eqref{top} fails both the Zhang-Yeung inequality,
and the same inequality with $C$, $D$ interchanged.
By linearity, one of the two Zhang-Yeung inequalities must fail everywhere on the
part of the line segment where the Ingleton inequality fails.
Therefore the intersection of $\bar\Gamma^*_4$ with the 0-skeleton of $\Gamma_4$ lies entirely in $I_4$.
\end{proof}

We will now prove a stronger result using a similar technique, and making use of the infinite families of
inequalities.

\begin{theorem}
The 3-core of $\bar\Gamma^*_4$ in $\Gamma_4$ is $I_4$.
\end{theorem}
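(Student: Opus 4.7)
The plan is to mimic the preceding 1-core theorem, replacing the single application of Zhang--Yeung by the Matus infinite family of Section~\ref{infinite} supplemented, for a handful of residual hard faces, by specific entries from the catalogs of Sections~\ref{ThreeCopies}--\ref{FourCopies}. First, since $I_4\subseteq\bar\Gamma^*_4$ and (as implicit in the 1-core proof) every vertex of $I_4$ lies on the 1-skeleton and hence on the 3-skeleton of $\Gamma_4$, the inclusion $I_4\subseteq 3\text{-core}$ is immediate. For the reverse, suppose $x\in\bar\Gamma^*_4$ lies on the 3-skeleton of $\Gamma_4$ and violates some Ingleton inequality, which by symmetry I take to be \eqref{ingleton}; then $x$ lies inside the associated pyramid. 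Pick any face $F$ of $\Gamma_4$ of dimension at most $3$ containing $x$. Because the top $v_0$ is the unique vertex of $\Gamma_4$ strictly above the Ingleton face and the pyramid is a $14$-simplex (having $15$ vertices in $\R^{14}$), $F$ must be the convex hull of $v_0$ with some $k\le 3$ base points; equivalently, exactly $14-k$ of the $14$ pyramid-defining Shannon equations are tight on $F$ and the remaining $k$ are non-tight. The task reduces to exhibiting, for each such $F$, an information inequality strictly violated at every point of $F$ strictly above the Ingleton face.

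The workhorse will be the Matus family (Rule~[1] iterated from $0\le I(B;C|A)$) together with its three $A\leftrightarrow B$, $C\leftrightarrow D$ variants. Each variant places exactly one member of the triple $T_1=\{I(A;B|C),I(A;C|B),I(B;C|A)\}$ or $T_2=\{I(A;B|D),I(A;D|B),I(B;D|A)\}$ at its single $s$-independent coefficient slot. Dividing the Matus inequality by $s$ and letting $s\to\infty$ produces, on any face where the two other members of the relevant triple are tight, an inequality of the form $I(A;B)\le I(A;B|D)+I(C;D)$ or $I(A;B)\le I(A;B|C)+I(C;D)$---which is precisely the restriction of Ingleton to $F$. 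So whenever the non-tight equations of $F$ leave some ``Matus pair'' tight (one of $\{I(A;B|C),I(A;C|B)\}$, $\{I(A;B|C),I(B;C|A)\}$, $\{I(A;B|D),I(A;D|B)\}$, $\{I(A;B|D),I(B;D|A)\}$), the limiting Matus bound immediately excludes every Ingleton-violating point of $F$.

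The hard part---and the reason the 3-core theorem requires strictly more than the 1-core argument---will be the residual 3-faces on which none of these four pairs is tight, a condition that necessarily forces one of the Ingleton-RHS terms $I(A;B|C)$ or $I(A;B|D)$ (together with at least one other $T_i$-member) to be non-tight. The prototypical case is $F=\mathrm{conv}(v_0,B_{11},B_{12},B_{14})$, whose three non-tight equations are exactly the three Ingleton-RHS terms $I(A;B|C),I(A;B|D),I(C;D)$; on this $F$ every Matus variant is too weak, carrying $\Theta(s^2)$ coefficients on $I(A;B|C)$ and $I(A;B|D)$. For such faces I plan to invoke combinations of specific catalog entries from Section~\ref{FourCopies}---for example, the four-copy, three-step inequality with coefficients $(7,8,12,12,7,5,5,7,0)$ (item~13 of the Section~\ref{FourCopies} list), whose permutations under the $S_4$ action strictly confine the Ingleton-violating part of this $F$; additional catalog entries, and where needed limits produced by further iterated applications of the extension rules of Section~\ref{infinite}, will be needed to shrink the admissible region to the Ingleton face itself. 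I expect the remaining $S_4$-symmetry classes of hard 3-face to be dispatched by analogous combinations of catalog inequalities, reducing the proof to a finite (though tedious) face-by-face verification.
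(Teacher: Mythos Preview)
Your Matus-based start is right, but you are missing the one additional ingredient that turns the argument into a two-line counting contradiction and eliminates all of your ``hard'' residual faces. The paper uses, alongside the Matus family (Rule~[1] iterated from $0\le I(B;C|A)$), the \emph{Rule~[2] family}: iterating Rule~[2] from $0\le I(A;B|C)$ gives
\[
sI(A;B)\le (s{+}1)I(A;B|C)+\tfrac{s(s{+}1)}{2}\bigl(I(A;C|B)+I(B;C|A)\bigr)+sI(A;B|D)+sI(C;D).
\]
Dividing by $s$ and sending $s\to\infty$ shows that on any $3$-face where $I(A;C|B)=I(B;C|A)=0$ the Ingleton inequality itself holds. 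Together with your two Matus variants this rules out \emph{every} pair from the triple $\{I(A;B|C),I(A;C|B),I(B;C|A)\}$ being simultaneously tight, not just the two pairs containing $I(A;B|C)$. The $C\leftrightarrow D$ swap does the same for $\{I(A;B|D),I(A;D|B),I(B;D|A)\}$. Hence at most one equation from each triple is tight, forcing at least $2+2=4$ of these six pyramid faces to be non-tight; but a $3$-face through the top satisfies at least $11$ of the $14$ pyramid equations, leaving only $3$ non-tight. Contradiction, with no face-by-face work.

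Your fallback plan---patching the residual $3$-faces with finitely many catalog inequalities such as item~13, $(7,8,12,12,7,5,5,7,0)$---cannot succeed in principle. On your prototypical face (where $I(A;C|B)=I(B;C|A)=I(A;D|B)=I(B;D|A)=0$) that inequality collapses to $7I(A;B)\le 8I(A;B|C)+7I(A;B|D)+7I(C;D)$, strictly weaker than Ingleton there; and any inequality in the standard form with $b>a$ does the same. Only a limiting family whose $I(A;B|C)$ coefficient over its $I(A;B)$ coefficient tends to $1$ can reach the Ingleton face on that $F$, and the Rule~[2] family is exactly such a family. So the gap in your proposal is not the case analysis per se but the absence of this second infinite family; once you add it, all the residual cases disappear.
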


\begin{proof}
As in the previous proof, we need to show that no point of
$\bar\Gamma^*_4\setminus I_4$ is in the 3-skeleton of $\Gamma_4$.
We suppose there is such a point and call it $x$.  Then $x$ lies on a three dimensional extreme segment that includes
at least four vertices of $\Gamma_4$, one of which we may assume is the top of the pyramid and is the only one that
fails the Ingleton inequality.
The segment is defined by eleven Shannon faces, and since the segment includes the top of the
pyramid, these eleven faces must be among the fourteen Shannon faces of the pyramid.

Suppose first that the eleven faces includes the two faces
\begin{eqnarray*}
I(A;B|C)&=&0\\
I(A;C|B)&=&0\, .
\end{eqnarray*}
Then consider the first Matus family of inequalities~\eqref{Matus}
\begin{eqnarray*}
sI(A;B)&<=&(s(s+3)/2)I(A;B|C)\\
&+&(s(s+1)/2)I(A;C|B)+I(B;C|A)\\
&+&sI(A;B|D)+sI(C;D),
\end{eqnarray*} which on our segment reduces to
\begin{eqnarray*}
0 &<=& -I(A;B) + (1/s)I(B;C|A)\\
&+& I(A;B|D)+I(C;D).
\end{eqnarray*}

At the top of the pyramid, all terms of this reduced inequality are zero except $I(A;B)$.
The inequality becomes $0\leq -1/4$, which is false.
But if the Ingleton equation is satisfied, then this
becomes
\begin{eqnarray*}
0 &<=& (1/s)I(B;C|A).
\end{eqnarray*}
and the value of the right side tends to zero with large values of $s$.
By linearity, the value of the right side of the reduced inequality at $x$
must become negative for sufficiently large $s$, so the information inequality
is false.  Therefore, the eleven defining faces of the segment cannot include both
$I(A;B|C)$ and $I(A;C|B)$.

A similar argument with $A$ and $B$ interchanged tells us that
the eleven faces cannot include both $I(A;B|C)$ and $I(B;C|A)$.

Next, suppose that the eleven faces of our segment includes the two faces
\begin{eqnarray*}
I(A;C|B)&=&0\\
I(B;C|A)&=&0\, .
\end{eqnarray*}
Iterating Rule 2, and starting with the Shannon Inequality
$0\leq I(A;B|C)$, we generate the infinite list of inequalities:
\begin{eqnarray*}
s I(A;B) &\leq& (s+1)I(A;B|C) +(s(s+1)/2)I(A;C|B))\\
&+&  (s(s+1)/2)I(B;C|A)) + s I(A;B|D)\\
 &+& s I(C;D)\, ,
\end{eqnarray*}
where $s$ is any any non-negative integer.
On our segment this reduces to
\begin{eqnarray*}
0  &\leq& -I(A;B) + (s+1)/s I(A;B|C)+I(A;B|D)\\
 &+& I(C;D)\, .
\end{eqnarray*}
At the top of the pyramid, this again becomes $0 \leq -1/4$, which is false.
If the Ingleton equation is satisfied, then this becomes
\begin{eqnarray*}
0  &\leq& (1/s) I(A;B|C)\, ,
\end{eqnarray*}
We conclude, as before, that the inequality becomes false at $x$ for sufficiently
large values of $s$.  Therefore, the eleven defining faces of our segment cannot include
both $I(A;C|B)$ and $I(B;C|A)$.

Summarizing the work so far, we see that the eleven faces cannot include any two of the three
faces
\begin{eqnarray*}
I(A;B|C)&=&0\\
I(A;C|B)&=&0\\
I(B;C|A)&=&0\, .
\end{eqnarray*}

A similar argument with $C$ and $D$ interchanged shows that the eleven faces cannot
include any two of the three faces
\begin{eqnarray*}
I(A;B|D)&=&0\\
I(A;D|B)&=&0\\
I(B;D|A)&=&0\, .
\end{eqnarray*}

But only three of the fourteen Shannon faces are excluded, so this is a contradiction.
\end{proof}

Future inequalities may be able to improve this further, perhaps showing that $I_4$ is the 4-core
of $\bar\Gamma^*_4$ in $\Gamma_4$, etc.  But it can't go beyond the 5-core
since the four-atom point lies on eight of the fourteen Shannon faces.


%
%



\section{Application of Non-Shannon Information Inequalities to Network Coding}\label{Vamos}

One application of non-Shannon information inequalities is in the field of network coding.
In network coding, a subset of network nodes called sources generate messages
and a subset of network nodes called receivers need to acquire the source messages.
The messages can propagate through the network but need not travel as in packet-switched routing.
Rather, network ``coding'' allows every out-edge of a node to be
an arbitrary function of the information carried on the node's in-edges.

In this section,
a \textit{network} is a finite, directed, acyclic multigraph
together with a finite set called the \textit{message set}.
An \textit{alphabet} is a finite set $\alphabet$ with at least two elements.
Two special subsets of nodes of the network are called \textit{sources}
and \textit{receiver}, respectively.
The sources generate messages and the receivers need to obtain
certain source messages
(namely those messages that the receiver \textit{demands}).

Let $k$ and $n$ be positive integers,
called the \textit{source dimension} and the \textit{edge capacity}, respectively.
For every network edge $e=(x,y)$
an \textit{edge function} $f_e$
puts an $n$-dimensional vector of alphabet symbols on $e$;
the vector's value is a function of the $n$-dimensional vectors carried on the in-edges of node $x$
and the $k$-dimensional source vectors produced at $x$.
Similarly,
for every receiver node $x$ and every network message $m$ demanded by $x$,
a \textit{decoding function}
$f_{x,m}$
produces a $k$-dimensional vector
as a function of the in-edge and source values at $x$.

Given an alphabet $\alphabet$,
a \textit{$(k,n)$ code}
for a network
is an assignment of edge functions and decoding functions
to the network's edges and receivers, respectively.

A $(k,n)$ network code is said to be a \textit{$(k,n)$ solution} if every receiver can recover
each of the messages it demands
(i.e. via it's decoding functions).

Special codes of interest include \textit{linear codes},
where the edge functions and decoding functions are linear,
and \textit{routing codes},
where the edge functions and decoding functions simply copy input components to output components.

If a network has a $(k,n)$ solution over some alphabet,
then we say the ratio $k/n$ is an \textit{achievable coding rate} for the network.

The
\textit{coding capacity} of a network
is
$$\sup \{ k/n : \exists\ (k,n) \mbox{ coding solution over } \mathcal{A} \}.$$
If $(k,n)$ coding solutions are restricted to linear codes
(over some finite field, for example) or routing codes,
then the capacity is called the \textit{linear capacity}
or \textit{routing capacity}, respectively.

Computing the network coding capacity of networks,
or at least accurately approximating or bounding the capacity
is a fundamental problem in network coding.
No general method for computing network capacity is presently known,
so attention is focused on bounding techniques.
Lower bounds are generally obtained by exhibiting specific solutions for a network.
Upper bounds can sometimes be obtained by assuming the source messages are random variables
and then using standard information inequalities.

Specifically, in \cite{Dougherty-Freiling-Zeger06-Matroids},
the Zhang-Yeung inequality
(our Theorem~\ref{thm:Zhang-Yeung})
was used to
derive an upper bound on the coding capacity of the \Vamos{} network
(see Figure~\ref{fig:Vamos-network}).
The \Vamos{} network is a
network constructed from the well-known \Vamos{} matroid
and is of interest
as it is presently the only known network for which non-Shannon
information inequalities have improved capacity bound calculations.

The capacity upper bound derived in \cite{Dougherty-Freiling-Zeger06-Matroids} was $10/11$
and was shown to be strictly smaller than the smallest possible upper bound
(i.e. $1$)
obtainable directly from Shannon-type information inequalities.
The \Vamos{} network thus illustrates a potential for improvement in capacity
calculations using non-Shannon-type inequalities over Shannon-type ineqaulities.

It was also shown in \cite{Dougherty-Freiling-Zeger06-Matroids}
that the linear coding
capacity of the \Vamos{} network over every finite field
is exactly equal to $5/6$,
which is presently the best known lower bound on the
(possibly non-linear) coding capacity of the \Vamos{} network.

The capacity bound calculation in \cite{Dougherty-Freiling-Zeger06-Matroids} was specific
to the particular form of the Zhang-Yeung inequality.
A more general strategy for improving capacity upper bounds
using other non-Shannon inequalities is given in what follows.
We present a systematic method
for using non-Shannon information inequalities to
obtain capacity bounds.
In particular,
the best known upper bound on the \Vamos{} network capacity is
improved from $10/11$ to $19/21$.
The exact coding capacity of this network remains an open problem,
lying somewhere in the interval $[5/6, 19/21]$.

\begin{figure}[h]
\begin{center}
\leavevmode
\hbox{\epsfxsize=0.40\textwidth\epsffile{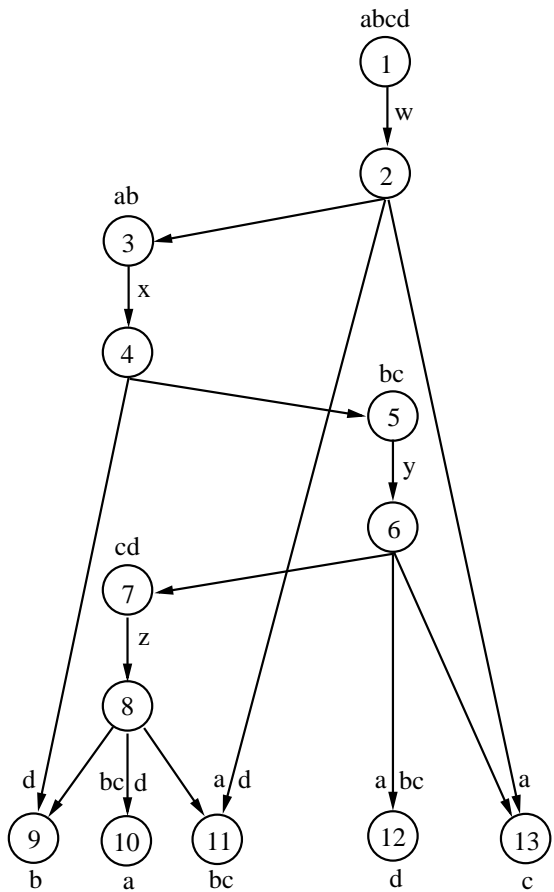}}
\end{center}
\caption{
The \Vamos{} network.
A message variable $a$, $b$, $c$, or $d$ labeled above a node
indicates an in-edge (not shown) from the source node (not shown) generating the  message.
Demand variables are labeled below the receivers $n_9$--$n_{13}$ demanding them.
The edges $e_{1,2}$, $e_{3,4}$, $e_{5,6}$, and $e_{7,8}$
are denoted by
$w$, $x$, $y$, and $z$,
respectively.
}
\label{fig:Vamos-network}
\end{figure}

We will now demonstrate one way to exploit the new non-Shannon inequalities given in
the present paper to obtain even tighter bounds on the coding
capacity of the \Vamos{} network.
The next theorem provides a tighter upper bound on the
\Vamos{} network coding capacity than previously known.

\begin{theorem}
The coding capacity of the \Vamos{} network is at most $19/21$.
\label{thm:Vamos-bound}
\end{theorem}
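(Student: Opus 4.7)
The plan is to follow the template of \cite{Dougherty-Freiling-Zeger06-Matroids}, where the Zhang-Yeung inequality yielded the bound $10/11$, but to replace Zhang-Yeung with a stronger non-Shannon inequality drawn from the catalogue built up in Sections~\ref{TwoCopies}--\ref{FourCopies} and from the infinite families of Section~\ref{infinite}. Given any $(k,n)$ coding solution over an alphabet $\alphabet$, I would treat the source messages $a,b,c,d$ as independent and uniform, so that $H(a)=H(b)=H(c)=H(d) = k\log|\alphabet|$, and let $w,x,y,z$ denote the random variables carried on the edges $e_{1,2}, e_{3,4}, e_{5,6}, e_{7,8}$; each has entropy at most $n\log|\alphabet|$. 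The receivers' decoding requirements and the edge-function definitions translate into a list of zero conditional entropies such as $H(w\mid ab)=0$, together with receiver-forced equalities of the form $H(a\mid wy)=0$ and $H(c\mid xy)=0$, and together these encode the dependency pattern of the \Vamos{} matroid on the eight pair-elements.

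The next step is to choose a four-tuple $(A,B,C,D)$ of block random variables built from $\{ab,cd,w,x,y,z\}$ and substitute it into one or more of the non-Shannon inequalities from Section~\ref{ThreeCopiesB} or Section~\ref{FourCopies}. The natural first candidate is $(A,B,C,D)=(w,y,ab,cd)$, corresponding to the pair $\{w,y\}$ versus $\{x,z\}$ asymmetry that distinguishes the \Vamos{} matroid from a representable one. Once the substitution is made, every conditional mutual information on the right-hand side either vanishes (from a decoding constraint), is bounded above by a multiple of $n\log|\alphabet|$ (from an edge-capacity term), or becomes a constant multiple of $k\log|\alphabet|$ (from a source-entropy term); the left-hand side is also a constant multiple of $k\log|\alphabet|$, so the inequality collapses to $\alpha k \le \beta n$ for specific constants $\alpha,\beta$.

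Because the set of network constraints is linear and the catalogue of candidate inequalities, together with the rule-generated families of Section~\ref{infinite}, is finitely parametrized, minimizing the ratio $\beta/\alpha$ over choices of inequality, block assignment, and which conditional entropies to force to zero can be organized as a single linear-programming search, with the Zhang-Yeung bound $10/11$ as a baseline target to beat. The main obstacle will be identifying the exact inequality, or short positive combination of inequalities, that achieves $\beta/\alpha = 19/21$: since the tuple $(w,y,ab,cd)$ is asymmetric in the four slots, and since some catalogue inequalities that are superseded in the generic entropy region may become non-redundant once the \Vamos{} decoding constraints are imposed, the search cannot be pruned by symmetry or by supersession, and in principle every permutation of each catalogue inequality, together with a few iterations of Rules $[1]$--$[6]$ of Section~\ref{infinite}, has to be tried. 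Once the correct inequality is in hand, the remaining verification is a routine substitution and reassembly of conditional entropies, producing the claimed ratio $k/n \le 19/21$.
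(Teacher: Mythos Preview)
Your high-level strategy---substitute block random variables from the \Vamos{} network into a non-Shannon inequality from the catalogue and collapse everything to a linear constraint on $k/n$---is exactly what the paper does. But what you have written is a search plan rather than a proof: you describe how to look for the right inequality and block assignment via linear programming, without actually identifying either one or carrying out the substitution. A proof of the theorem must exhibit the specific inequality and the specific assignment, then verify the resulting arithmetic.

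On the concrete choices, your proposal diverges from the paper in two ways that matter. First, the block assignment the paper uses is
\[
A=(c,y),\quad B=(b,x),\quad C=(d,z),\quad D=(a,w),
\]
i.e.\ each block pairs one source with one edge variable, reflecting the four ``parallel pairs'' of the \Vamos{} matroid. Your suggested candidate $(w,y,ab,cd)$ does not exploit this pairing, and with that choice the conditional mutual informations on the right-hand side do not collapse cleanly to combinations of $k$ and $n$. Second, the functional constraints you list (\emph{e.g.}\ $H(w\mid ab)=0$, $H(a\mid wy)=0$, $H(c\mid xy)=0$) are not the ones the network actually imposes; the correct ones are $H(w\mid abcd)=0$, $H(x\mid abw)=0$, $H(y\mid bcx)=0$, $H(z\mid cdy)=0$ from the edge functions, and $H(c\mid awy)=0$, $H(d\mid abcy)=0$, etc.\ from the receivers. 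Getting these right is essential, since they drive all of the entropy bounds in the reduction.

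For the record, the paper's proof uses inequality \eqref{ineq3.10} (with $A$ and $B$ swapped), namely
\[
2I(A;B)\le 3I(A;B|C)+2I(A;C|B)+3I(B;C|A)+2I(A;B|D)+2I(C;D),
\]
and after substituting the paired blocks above and applying the network constraints one obtains $19k\le 21n$ directly.
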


\begin{proof}
Consider a $(k,n)$ solution to the \Vamos{} network.
Assume that the network messages $a,b,c,d$ are
independent,
$k$-dimensional,
random vectors
with uniformly distributed components
and assume each edge in the network has capacity $n$.
Let $w,x,y,z$ denote the random variables carried by edges
$e_{1,2},e_{3,4},e_{5,6},e_{7,8}$, respectively.
Since the $n$-dimensional vector carried on any edge $e=(u,v)$ is a deterministic function
of the in-edges to node $u$ and sources generated at $u$,
the conditional entropy of the random vector on $e$,
given the vectors on the in-edges to $u$ and message vectors produced at $u$
must be zero.
Thus, we have
\begin{align}
H(w|a,b,c,d)           &= 0 & \Comment{$n_1$}     \label{eq:72} \\
H(x|a,b,w)             &= 0 & \Comment{$n_3$}     \label{eq:70} \\
H(y|b,c,x)             &= 0 & \Comment{$n_5$}     \label{eq:60} \\
H(z|c,d,y)             &= 0.& \Comment{$n_7$}     \label{eq:68}
\end{align}
Similarly,
since the $k$-dimensional vector decoded at a receiver $u$ is a deterministic function
of the in-edges to $u$ and sources generated at $u$,
the conditional entropy of the decoded vector at $u$,
given the vectors on the in-edges to $u$ and message vectors produced at $u$
must be zero.
Thus, we have
\begin{align}
H(b|d,x,z)             &= 0 & \Comment{$n_9$}     \label{eq:67}\\
H(a|b,c,d,z)           &= 0 & \Comment{$n_{10}$}  \label{eq:66} \\
H(b,c|a,d,w,z)         &= 0 & \Comment{$n_{11}$}  \label{eq:65} \\
H(d|a,b,c,y)           &= 0 & \Comment{$n_{12}$}  \label{eq:69} \\
H(c|a,w,y)             &= 0.& \Comment{$n_{13}$}  \label{eq:71}
\end{align}

In what follows we repeatedly make use of the following well known facts:
\begin{align}
I(R;S|T) &= H(R,T) + H(S,T)\notag\\
         &\ \  - H(R,S,T) - H(T)              \label{eq:id1}\\
H(R,S) &= H(R) + H(S|R)                       \label{eq:id2}\\
H(R,S) &\le H(R) + H(S)                       \label{eq:id3}\\
H(R)   &\le H(R,S).                           \label{eq:id4}
\end{align}
We thus obtain the following list of inequalities:
\begin{align}
&   I(c,y;b,x) \notag\\
             &\ \ = H(c,y) + H(b,x) - H(b,c,x,y)   & \Comment{\eqref{eq:id1}} \notag\\
             &\ \ = H(c,y) + H(b,x) - H(b,c,x)     & \Comment{\eqref{eq:60}} \notag\\
             &\ \ \ge H(c,y) - H(c)                & \Comment{\eqref{eq:id3}} \label{eq:101}
\end{align}
\begin{align}
&   I(c,y;b,x|d,z) \notag\\
             &\ \ = H(c,d,y,z) + H(b,d,x,z)\notag\\
             &\ \ \ \  - H(b,c,d,x,y,z) - H(d,z)   & \Comment{\eqref{eq:id1}} \notag\\
             &\ \ = H(c,d,y) + H(d,x,z)            & \Comment{\eqref{eq:67},\eqref{eq:68}} \notag\\
             &\ \ \ \  - H(a,b,c,d,x,y,z) - H(d,z) & \Comment{\eqref{eq:66},\eqref{eq:72},\eqref{eq:70}} \notag\\
             &\ \ \le H(c,y) + H(d) + H(x)         & \Comment{\eqref{eq:id3}} \notag\\
             &\ \ \ \  - H(a,b,c,d)                & \Comment{\eqref{eq:id4}}
\end{align}
\begin{align}
&   I(c,y;d,z|b,x) \notag\\
             &\ \ = H(b,c,x,y) + H(b,d,x,z)\notag\\
             &\ \ \ \  - H(b,c,d,x,y,z) - H(b,x)   & \Comment{\eqref{eq:id1}} \notag\\
             &\ \ = H(b,c,x) + H(d,x,z)            & \Comment{\eqref{eq:60},\eqref{eq:67}} \notag\\
             &\ \ \ \  - H(a,b,c,d,x,y,z) - H(b,x) & \Comment{\eqref{eq:66}} \notag\\
             &\ \ \le H(c) + H(d,z) + H(x)         & \Comment{\eqref{eq:id3}} \notag\\
             &\ \ \ \  - H(a,b,c,d)                & \Comment{\eqref{eq:id4}}
\end{align}
\begin{align}
&   I(b,x;d,z|c,y) \notag\\
             &\ \ = H(b,c,x,y) + H(c,d,y,z)\notag\\
             &\ \ \ \  - H(b,c,d,x,y,z) - H(c,y)   & \Comment{\eqref{eq:id1}} \notag\\
             &\ \ = H(b,c,x) + H(c,d,y)            & \Comment{\eqref{eq:60},\eqref{eq:68}} \notag\\
             &\ \ \ \  - H(a,b,c,d,x,y,z) - H(c,y) & \Comment{\eqref{eq:66}} \notag\\
             &\ \ \le H(b,x) + H(c) + H(d)         & \Comment{\eqref{eq:id3}} \notag\\
             &\ \ \ \  - H(a,b,c,d)                & \Comment{\eqref{eq:id4}}
\end{align}
\begin{align}
&   I(c,y;b,x|a,w) \notag\\
             &\ \ = H(a,c,w,y) + H(a,b,w,x)\notag\\
             &\ \ \ \  - H(a,b,c,w,x,y) - H(a,w)   & \Comment{\eqref{eq:id1}} \notag\\
             &\ \ = H(a,w,y) + H(a,b,w)            & \Comment{\eqref{eq:71},\eqref{eq:70}} \notag\\
             &\ \ \ \  - H(a,b,c,d,w,x,y) - H(a,w) & \Comment{\eqref{eq:69}} \notag\\
             &\ \ \le H(y) + H(a,w) + H(b)         & \Comment{\eqref{eq:id3}} \notag\\
             &\ \ \ \  - H(a,b,c,d)                & \Comment{\eqref{eq:id4}}
\end{align}
\begin{align}
&   I(c,y;a,w|b,x) \notag\\
             &\ \ = H(b,c,x,y) + H(a,w,b,x)\notag\\
             &\ \ \ \  - H(a,w,b,c,x,y) - H(b,x)   & \Comment{\eqref{eq:id1}} \notag\\
             &\ \ = H(b,c,x) + H(a,b,w)            & \Comment{\eqref{eq:60},\eqref{eq:70}}\notag\\
             &\ \ \ \  - H(a,b,c,d,w,x,y) - H(b,x) & \Comment{\eqref{eq:69}} \notag\\
             &\ \ \le H(c) + H(a,w) + H(b)         & \Comment{\eqref{eq:id3}} \notag\\
             &\ \ \ \  - H(a,b,c,d)                & \Comment{\eqref{eq:id4}}
\end{align}
\begin{align}
&   I(b,x;a,w|c,y) \notag\\
             &\ \ = H(b,c,x,y) + H(a,c,w,y)\notag\\
             &\ \ \ \  - H(a,b,c,w,x,y) - H(c,y)    & \Comment{\eqref{eq:id1}} \notag\\
             &\ \ = H(b,c,x) + H(a,w,y)             & \Comment{\eqref{eq:60},\eqref{eq:71}}\notag\\
             &\ \ \ \  - H(a,b,c,d,w,x,y) - H(c,y)  & \Comment{\eqref{eq:66}} \notag\\
             &\ \ \le H(b,x) + H(c) + H(a,w) + H(y) & \Comment{\eqref{eq:id3}} \notag\\
             &\ \ \ \  - H(a,b,c,d) - H(c,y)        & \Comment{\eqref{eq:id4}}
\end{align}
\begin{align}
&   I(d,z;a,w) \notag\\
             &\ \ = H(d,z) + H(a,w) - H(a,d,w,z)     & \Comment{\eqref{eq:id1}} \notag\\
             &\ \ = H(d,z) + H(a,w)\notag\\
             &\ \ \ \  - H(a,b,c,d,w,z) & \Comment{\eqref{eq:65}} \notag\\
             &\ \ = H(d,z) + H(a,w) - H(a,b,c,d)     & \Comment{\eqref{eq:72}--\eqref{eq:68}} \notag\\
\end{align}
\begin{align}
&   I(d,z;a,w|c,y) \notag\\
             &\ \ = H(c,d,y,z) + H(a,c,w,y)\notag\\
             &\ \ \ \  - H(a,c,d,w,y,z) - H(c,y)    & \Comment{\eqref{eq:id1}} \notag\\
             &\ \ = H(c,d,y) + H(a,w,y)             & \Comment{\eqref{eq:68},\eqref{eq:71}}\notag\\
             &\ \ \ \  - H(a,b,c,d,w,y,z) - H(c,y)  & \Comment{\eqref{eq:65}}\notag\\
             &\ \ \le H(d) + H(a,w) + H(y)          & \Comment{\eqref{eq:id3}} \notag\\
             &\ \ \ \  - H(a,b,c,d)                 & \Comment{\eqref{eq:id4}}
\end{align}
\begin{align}
&   I(d,z;a,w|b,x) \notag\\
             &\ \ = H(b,d,x,z) + H(a,b,w,x)\notag\\
             &\ \ \ \  - H(a,b,d,w,x,z) - H(b,x)    & \Comment{\eqref{eq:id1}} \notag\\
             &\ \ = H(d,x,z) + H(a,b,w)             & \Comment{\eqref{eq:67},\eqref{eq:70}}\notag\\
             &\ \ \ \  - H(a,b,c,d,w,x,z) - H(b,x)  & \Comment{\eqref{eq:65}}\notag\\
             &\ \ \le H(d,z) {+} H(x) {+} H(a,w) {+} H(b) & \Comment{\eqref{eq:id3}} \notag\\
             &\ \ \ \  - H(a,b,c,d) - H(b,x).       & \Comment{\eqref{eq:id4}}  \label{eq:102}
\end{align}

Now, suppose that $A$, $B$, $C$, and $D$ are  random variables and
we have an information inequality of the form
\begin{align}
&a_1 I(A;B) \notag\\
&\ \ \le a_2 I(A;B|C)
+ a_3 I(A;C|B)
+ a_4 I(B;C|A)\notag\\
&\ \ + a_5 I(A;B|D)
+ a_6 I(A;D|B)
+ a_7 I(B;D|A)\notag\\
&\ \ + a_8 I(C;D)
+ a_9 I(C;D|A)
+ a_{10} I(C;D|B)
\label{eq:105}
\end{align}
for some $a_i \ge 0$, for all $i$.
If we set
\begin{align*}
    A &= (c,y)\\
    B &= (b,x)\\
    C &= (d,z)\\
    D &= (a,w)
\end{align*}
then
\eqref{eq:101}--\eqref{eq:102}
give the inequality
\begin{align*}
&a_1 ( H(c,y) - H(c) ) \\
&\le
a_2 ( H(c,y) + H(d) + H(x) - H(a,b,c,d) ) \\
&\ \ + a_3 ( H(c) + H(d,z) + H(x) - H(a,b,c,d) ) \\
&\ \ + a_4 ( H(b,x) + H(c) + H(d) - H(a,b,c,d) ) \\
&\ \ + a_5 ( H(y) + H(a,w) + H(b) - H(a,b,c,d) ) \\
&\ \ + a_6 ( H(c) + H(a,w) + H(b) - H(a,b,c,d) ) \\
&\ \ + a_7 ( - H(a,b,c,d) - H(c,y) ) \\
&\ \ + a_8 ( H(d,z) + H(a,w) - H(a,b,c,d) ) \\
&\ \ + a_9 ( H(d) + H(a,w) + H(y) - H(a,b,c,d) ) \\
&\ \ + a_{10} ( H(b) - H(a,b,c,d) - H(b,x) ) \\
\end{align*}
or, equivalently, after collecting terms,
\begin{align}
& (a_2+a_3+a_4) H(a) \notag\\
&\ \ + (a_2+a_3+a_8+a_9+a_{10}) H(b) \notag\\
&\ \ + (a_5+a_7+a_8+a_9+a_{10}) H(c) \notag\\
&\ \ + (a_5+a_6+a_7) H(d) \notag\\
&\ \ + (a_2-a_1-a_7) I(c;y) \notag\\
&\ \ + (a_4+a_7-a_{10}) I(b;x)\notag\\
&\le
(a_5+a_6+a_7+a_8+a_9+a_{10}) H(w) \notag\\
&\ \ + (a_2+a_3+a_4+a_7) H(x) \notag\\
&\ \ + (-a_1+a_2+a_5+a_9) H(y) \notag\\
&\ \ + (a_3+a_8+a_{10}) H(z).
 \label{eq:103}
\end{align}

If the inequalities
\begin{align}
&a_2 \ge a_1 + a_7\notag\\
&a_4 + a_7 \ge a_{10}
\label{eq:106}
\end{align}
are satisfied,
then the inequality \eqref{eq:103} directly leads to a
\Vamos{} coding capacity region bound,
by neglecting the (nonnegative) terms involving $I(c;y)$ and $I(b;x)$.
Specifically, in this case,
by substituting
\begin{align*}
H(a)&=H(b)=H(c)=H(d)=k\\
H(w)&=H(x)=H(y)=H(z)=n
\end{align*}
into \eqref{eq:103},
we obtain
\begin{align}
&k (
2a_2
+ 2a_3
+ a_4
+ 2a_5
+ a_6\notag\\
&\ \
+ 2a_7
+ 2a_8
+ 2a_9
+ 2a_{10}
)\notag\\
&\le
n(
-a_1
+ 2a_2
+ 2a_3
+ a_4
+ 2a_5\notag\\
&\ \
+ a_6
+ 2a_7
+ 2a_8
+ 2a_9
+ 2a_{10}
).
 \label{eq:104}
\end{align}

One of our new non-Shannon inequalities is
\begin{align}
&2I(A;B) \le 3I(A;B|C) + 2I(A;C|B)\notag\\
&\ \ + 3I(B;C|A) + 2I(A;B|D) + 2I(C;D)
\label{eq:non-Shannon1}
\end{align}
which is equivalent to inequality \eqref{ineq3.10}
(by exchanging the roles of $A$ and $B$ in in Theorem~\ref{thm:many-inequalities}).
Using \eqref{eq:105} and \eqref{eq:non-Shannon1},
we see that both of the inequalities in \eqref{eq:106} hold
and therefore we substitute
\begin{align*}
a_1 &= a_3 = a_5 = a_8 = 2\\
a_2 &= a_4 = 3\\
a_6 &= a_7 = a_9 = a_{10} = 0
\end{align*}
into \eqref{eq:104} and obtain
\begin{align}
k/n \le 19/21.
\label{eq:107}
\end{align}
Thus, the coding capacity of the \Vamos{} network can be at most $19/21$.
\end{proof}

We note that if one or both of the inequalities in
\eqref{eq:106} are not satisfied,
we may be able to combine \eqref{eq:103}
with another such inequality
(having a positive coefficient for $I(c;y)$ and/or $I(b;x)$)
to eliminate the extra term(s).
Some specific inequalities that are useful for this are given in the following lemma.

\begin{lemma}
\begin{align*}
&(i)   &&H(a) + I(c;y)        \le H(y)\\
&(ii)  &&H(a) + I(b;x)        \le H(x)\\
&(iii) &&H(a) + H(b) + I(c;y) \le H(x) + H(y)\\
&(iv)  &&H(b) + H(c)          \le H(w) + H(y)\\
&(v)   &&H(c) + H(d)          \le H(w) + H(y)\\
&(vi)  &&H(d) + I(c;y)        \le H(y)\\
&(vii) &&H(d) + I(b;x)        \le H(x)
\end{align*}
\end{lemma}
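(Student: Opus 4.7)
The plan is to prove each of the seven inequalities by a short Shannon-style argument that exploits the deterministic network constraints \eqref{eq:72}--\eqref{eq:71}. The common template is: use one or more decoder constraints, together with the chain of deterministic dependencies $w \to x \to y \to z$, to conclude that a source variable is determined by some combination of sources and edge variables; then, by independence of $a,b,c,d$, convert this into an identity of the form $H(\text{source}) = H(\text{edge}\,|\,\text{other sources})$; and finally apply conditioning reduces entropy (and subadditivity) to reach the desired right-hand side.

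For (i) and (vi), I observe that $(b,c,d,y)$ determines $a$ (compute $z = f(c,d,y)$ from \eqref{eq:68}, then apply \eqref{eq:66}) and that $(a,b,c,y)$ determines $d$ by \eqref{eq:69}. Combined with $H(y\,|\,a,b,c,d) = 0$ and independence, this yields $H(a) = H(y\,|\,b,c,d) \le H(y\,|\,c)$ and $H(d) = H(y\,|\,a,b,c) \le H(y\,|\,c)$, which are equivalent reformulations of (i) and (vi). For (ii) and (vii), I replace $y$ by $x$ in the argument, using that $y$ is a function of $(b,c,x)$ by \eqref{eq:60} so that $z$ is a function of $(b,c,d,x)$; this gives $H(a) = H(x\,|\,b,c,d) \le H(x\,|\,b)$ and $H(d) = H(x\,|\,a,b,c) \le H(x\,|\,b)$. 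For (iv) and (v), combining the decoder at $n_{11}$ with $(c,d,y) \to z$ and \eqref{eq:71} yields $H(b\,|\,a,d,w,y) = 0$, hence $H(b) = H(w,y\,|\,a,c,d)$; together with $H(c) = I(c;w,y\,|\,a)$ from \eqref{eq:71} and the submodular step $H(w,y\,|\,a,c,d) \le H(w,y\,|\,a,c)$, I obtain $H(b) + H(c) \le H(w,y\,|\,a) \le H(w) + H(y)$. Inequality (v) follows by the same pattern using $H(d) = H(y\,|\,a,b,c)$ and $H(w,y\,|\,a,c) \ge H(y\,|\,a,b,c) = H(d)$.

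For (iii), the most intricate of the seven, I use standard manipulations and the identity $H(a,b,c,x,y) = H(a,b,c,d)$---which follows because $x$ and $y$ are functions of the messages and $(a,b,c,y) \to d$ by \eqref{eq:69}---to reduce the claim to $H(a,b\,|\,c,x,y) \le H(d)$. The crux is to show that $b$ is determined by $(c,d,x,y)$: first recover $z = f(c,d,y)$ from \eqref{eq:68}, and then decode $b$ from $(d,x,z)$ via \eqref{eq:67}. Combined with $H(a\,|\,b,c,d,x) = 0$ from the argument for (ii), this yields $H(a,b\,|\,c,d,x,y) = 0$; eliminating $d$ via \eqref{eq:69} then gives $H(a,b\,|\,c,x,y) = H(d\,|\,c,x,y) \le H(d)$.

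The main obstacle is recognizing the propagation chain for (iii)---that information about $b$ can be recovered from $(c,d,x,y)$ by first traveling forward through $z$ and then back through the decoder at $n_9$. Once this dependence is in hand, each of the seven inequalities becomes a brief Shannon manipulation following the common template above.
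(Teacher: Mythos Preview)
Your proposal is correct and relies on the same functional dependencies and Shannon-type steps as the paper's proof. The paper works more uniformly with joint-entropy chains of the form $H(\cdot)\ge H(\cdot)\ge\cdots\ge H(a,b,c,d)$, whereas you package the same content as conditional-entropy identities (e.g.\ $H(a)=H(y\mid b,c,d)$, $H(d)=H(y\mid a,b,c)$, $H(b)=H(w,y\mid a,c,d)$); both routes use exactly the same network determinations ($z=f(c,d,y)$, $b=f(d,x,z)$, $c=f(a,w,y)$, $d=f(a,b,c,y)$, etc.) and the same basic inequalities (conditioning reduces entropy, subadditivity).
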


\begin{proof} \ \\
\begin{itemize}

\item [(i)]
\begin{align}
& H(y) - H(a) - I(c;y)\notag\\
&\ \ = H(c,y) - H(a) - H(c)                    & \Comment{\eqref{eq:id1}} \notag\\
&\ \ \ge H(b,c,d,y) - H(a,b,c,d)               & \Comment{\eqref{eq:id3}} \notag\\
&\ \ \ge H(b,c,d,z) - H(a,b,c,d)               & \Comment{\eqref{eq:68}} \notag\\
&\ \ \ge H(a,b,c,d) - H(a,b,c,d)               & \Comment{\eqref{eq:66}} \notag\\
&\ \ = 0                                       \label{eq:108}
\end{align}

\item [(ii)]
\begin{align*}
& H(x) - H(a) - I(b;x)\notag\\
&\ \ = H(b,x) - H(a) - H(b)                    & \Comment{\eqref{eq:id1}} \notag\\
&\ \ \ge H(b,c,d,x) - H(a,b,c,d)               & \Comment{\eqref{eq:id3}} \notag\\
&\ \ \ge H(b,c,d,y) - H(a,b,c,d)               & \Comment{\eqref{eq:60}} \notag\\
&\ \ \ge 0                                     & \Comment{\eqref{eq:108}} \notag
\end{align*}

\item [(iii)]
\begin{align*}
& H(x) + H(y) - H(a) - H(b)\notag\\
&\ \ \ \  - I(c;y)\notag\\
&\ \ = H(c,y) + H(d) + H(x)\notag\\
&\ \ \ \  - H(a,b,c,d)                        & \Comment{\eqref{eq:id1}} \notag\\
&\ \ \ge H(c,d,x,y) - H(a,b,c,d)              & \Comment{\eqref{eq:id3}} \notag\\
&\ \ \ge H(c,d,x,z) - H(a,b,c,d)              & \Comment{\eqref{eq:68}} \notag\\
&\ \ \ge H(a,b,c,d) - H(a,b,c,d)              & \Comment{\eqref{eq:67},\eqref{eq:66}} \notag\\
&\ \ =  0                                     \notag
\end{align*}

\item [(iv)]
\begin{align*}
& H(w) - H(y) - H(b) - H(c)\notag\\
&\ \ \ge H(a,d,w,y) - H(a,b,c,d)              & \Comment{\eqref{eq:id3}} \notag\\
&\ \ \ge H(a,d,w,z) - H(a,b,c,d)              & \Comment{\eqref{eq:71},\eqref{eq:68}} \notag\\
&\ \ \ge H(a,b,c,d) - H(a,b,c,d)              & \Comment{\eqref{eq:65}} \notag\\
&\ \ =  0                                     \notag
\end{align*}

\item [(v)]
\begin{align*}
& H(w) - H(y) - H(c) - H(d)\notag\\
&\ \ \ge H(a,b,w,y) - H(a,b,c,d)              & \Comment{\eqref{eq:id3}} \notag\\
&\ \ \ge H(a,b,c,d) - H(a,b,c,d)              & \Comment{\eqref{eq:71},\eqref{eq:69}} \notag\\
&\ \ =  0                                     \notag
\end{align*}

\item [(vi)]
\begin{align}
& H(y) - H(d) - I(c;y)\notag\\
&\ \ = H(c,y) - H(c) - H(d)                  & \Comment{\eqref{eq:id1}} \notag\\
&\ \ \ge H(a,b,c,y) - H(a,b,c,d)             & \Comment{\eqref{eq:id3}} \notag\\
&\ \ \ge H(a,b,c,d) - H(a,b,c,d)             & \Comment{\eqref{eq:69}} \notag\\
&\ \ =  0                                    \label{eq:109}
\end{align}

\item [(vii)]
\begin{align*}
& H(x) - H(d) - I(b;x)\notag\\
&\ \ = H(b,x) - H(b) - H(d)                 & \Comment{\eqref{eq:id1}} \notag\\
&\ \ \ge H(a,b,c,x) - H(a,b,c,d)            & \Comment{\eqref{eq:id3}} \notag\\
&\ \ \ge H(a,b,c,y) - H(a,b,c,d)            & \Comment{\eqref{eq:60}} \notag\\
&\ \ \ge  0.                                & \Comment{\eqref{eq:109}} \notag
\end{align*}

\end{itemize}

\end{proof}

We have exhaustively computed network coding capacity bounds
for the \Vamos{} network using all of the new non-Shannon information
inequalities we have discovered and
the bound in \eqref{eq:107} is the tightest (i.e. smallest)
among those we checked.



\section{Conclusion}

In this article, we have presented a lengthy list of non-Shannon type inequalities.
This is a continuation of work started by Zhang and Yeung and roughly follows their approach.
We have seen that many of the inequalities, including the original Zhang-Yeung inequality have
been superseded by stronger ones.  Almost certainly, many more of the inequalities that have been presented
here will also be superseded by future work in this area.  To our knowledge, no one has ever given a non-Shannon
inequality and proved that it will not be superseded by others.  This would certainly be an interesting challenge for
future research.

Although each inequality found gives more information about the shape of entropy space, and each one is at least theoretically
useful in a network information flow problem, it seems unlikely that endlessly producing new inequalities will be fruitful in itself.
These inequalities are presented with the hope that it may be possible to study the list as a whole, in order to gain further insights
that will enable doing more than just endlessly extending the list.

Along these lines, perhaps the most striking feature of list of the inequalities listed is their special form.
Indeed every inequality produced has the form
\begin{eqnarray*}
aI(A;B) &\leq& bI(A;B|C) + cI(A;C|B) + dI(B;C|A) \\
      &+&  eI(A;B|D) + fI(A;D|B)+ gI(B;D|A)\\
      &+& hI(C;D) + iI(C;D|A)
\end{eqnarray*}
and vary from each other only by the choice of the coefficients $a,b,c,d,e,f,g,h,i$. There seems to be no obvious reason for this form.
Indeed, there were a few inequalities that were found and were not of this form.  But each of these non-conforming inequalities was
superseded by other inequalities or was reduced to the special form under tightening.
Explaining this special form might be a first step in gaining a deeper understanding of the inequalities.

Another direction to pursue is to find patterns among the inequalities and/or their proofs.
Identifying such patterns may allow us to continue the patterns indefinitely.  Along these lines,
we have identified several ``rules" that allow us to generate new inequalities from old ones.
Application of these rules quickly yields infinite families of inequalities similar to those
found by Matus \cite{Matus}.  Families of inequalities have already given valuable insight into
the structure of entropy space and were instrumental in Matus' proof that entropy space is not a polytope (or polytopal cone).
To our knowledge, no one has yet been able to show whether or not entropy space is curved.
This would certainly be a valuable contribution toward understanding this space.

Finally, the techniques used in this paper seem limited and do not seem like they will be strong enough
to settle the four atom conjecture.  We believe therefore, that some new techniques for
finding non-Shannon inequalities will be vital for gaining a more complete understanding of entropy space.


\renewcommand{\baselinestretch}{1.0}

\end{document}